\documentclass[12pt]{birkjour}
\usepackage{graphicx}
\usepackage{amssymb}
\usepackage{epstopdf,comment}
\usepackage{amsthm,  mathrsfs, enumerate, cite}

\usepackage{amsmath,color}
\usepackage{MnSymbol}

\newtheorem{definition}{Definition}
\newtheorem{proposition}{Proposition}
\newtheorem{theorem}{Theorem}
\newtheorem{corollary}{Corollary}
\newtheorem{lemma}{Lemma}
\newtheorem{assumption}{Assumption}
\newtheorem{remark}{Remark}

\newcommand{\req}[1]{Eq.\,(\ref{#1})}
\newcommand{\reqr}[2]{Eq.\,(\ref{#1}-\ref{#2})}

\numberwithin{theorem}{section}
\numberwithin{lemma}{section}
\numberwithin{equation}{section}
\numberwithin{proposition}{section}
\numberwithin{corollary}{section}

\begin{document} 
\title{Small Mass Limit of a Langevin Equation on a Manifold}

\author[Birrell]{Jeremiah Birrell}
\address{Department of Mathematics\\
University of Arizona\\
Tucson, AZ, 85721, USA}
\email{jbirrell@math.arizona.edu}

\author[Hottovy]{Scott Hottovy}
\address{Department of Mathematics \\
 University of Wisconsin-Madison \\
 Madison, WI 53706, USA}
\email{shottovy@math.wisc.edu}

\author[Volpe]{Giovanni Volpe}
\address{Soft Matter Lab, Department of Physics, Bilkent University, Ankara 06800, Turkey}
\address{UNAM -- National Nanotechnology Research Center, Bilkent University, Ankara 06800, Turkey}
\email{giovanni.volpe@fen.bilkent.edu.tr}

\author[Wehr]{Jan Wehr}
\address{Department of Mathematics\\
University of Arizona\\
Tucson, AZ, 85721, USA}
\email{wehr@math.arizona.edu}

\subjclass{58J65, 60H10, 82C31}

\keywords{Brownian motion on a manifold, Langevin equation, small mass limit, noise-induced drift}

\date{\today}

\begin{abstract}
We study damped geodesic motion of a particle of mass $m$ on a Riemannian manifold, in the presence of an external force and noise.  Lifting the resulting stochastic differential equation to the orthogonal frame bundle, we prove that, as $m \to 0$, its solutions converge to solutions of a limiting equation which includes a {\it noise-induced drift} term.  A very special case of the main result presents Brownian motion on the manifold as a limit of inertial systems.\end{abstract}
\maketitle

\section{Introduction}

Brownian motion (BM) plays a central role in many phenomena of scientific and technological significance. It lies at the foundation of stochastic calculus \cite{karatzas2014brownian}, which is applied to model a variety of phenomena, ranging from non-equilibrium statistical mechanics to  stock market fluctuations to population dynamics. In particular, Brownian motion occurs naturally in systems where microscopic and nanoscopic particles are present, as a consequence of thermal agitation \cite{Nelson1967}.  Brownian motion of micro- and nanoparticles occurring in complex environments  can often be represented as two-dimensional or one-dimensional manifolds embedded within a three-dimensional space. For example, the motion of proteins on cellular membranes occurs effectively on two-dimensional manifolds and is currently at the center of an intense experimental activity \cite{casuso2012characterization}. The single file diffusion of particles in  porous nanomaterials occurs in an effectively one-dimensional environment and plays a crucial role in many phenomena such as drug delivery, chemical catalysis and oil recovery \cite{k2012diffusion}. Several interesting phenomena can emerge in these conditions, such as  anomalous diffusion \cite{barkai2012single}, and  inhomogenous diffusion \cite{PhysRevX.5.011021}.  Similar phenomena also occur when considering active matter systems, such as living matter, which are characterised by being in a far-from-equilibrium steady state \cite{ramaswamy2010mechanics}; such systems often interact with complex environments that can be effectively modelled with low-dimensional manifolds embedded within a three-dimensional space. In order to gain a deeper understanding of these phenomena it is necessary to explore the properties of Brownian motion on manifolds.

The original motivation for this paper is to present Brownian motion on a manifold as the zero-mass limit of an inertial system.  Our main result is significantly more general and contains a rigorous version of the above statement as a special case.  In this section we will first outline this motivating problem and then discuss some earlier work on similar questions.  For the sake of clarity, we do not spell out all technical assumptions  and the arguments presented the introduction are heuristic.  

Brownian motion on an $n$-dimensional Riemannian manifold $(M,g)$ can be introduced as a mathematical object---a Markov process $x_t$ on $M$ with  the generator expressed in terms of the Riemannian metric $g$, which uniquely determines its law.  In this form, it has been a subject of an immense amount of study, both for its own sake and beauty, and for applications to analysis and geometry.  The reader is referred to  \cite{hsu2002stochastic, stroock2005introduction} and references therein.  In local coordinates, the components of BM satisfy the stochastic differential equation (SDE):
\begin{align}
dx_t^i = -{1 \over 2}g^{jk}\Gamma^i_{jk}\,dt + \sum_{\alpha=1}^n\sigma^i_{\alpha}\,dW^{\alpha}_t,
\end{align}
where $\sigma$ is the positive-definite square root of the inverse metric tensor $g^{-1}$, in the sense that $\sum_{\alpha = 1}^n\sigma^i_{\alpha} \sigma^k_{\alpha} = g^{ik}$, see p. 87 of \cite{hsu2002stochastic}. From the applied point of view, BM on a manifold is an idealized probabilistic description of diffusive motion performed by a particle constrained to $M$.  This can be justified at various levels, depending on what one is willing to assume.  Let us mention in particular the work of van Kampen \cite{vanKampen} which studies the conditions on the constraints, restricting motion of a diffusing particle to a manifold, under which its effective motion becomes Brownian. 

Here, as our point of departure, we take equations describing inertial motion of a particle of mass $m$, in the presence of two forces:  damping and noise.  The equations of motion in local coordinates are
\begin{align}
dx_t^i &= v_t^i\,dt,\label{SDE_motiv1} \\
m\,dv_t^i &= -m\Gamma^i_{jk}v_t^jv_t^k\,dt - \gamma^i_jv_t^j\,dt + \sum_{\alpha=1}^n\sigma_{\alpha}^i\,dW_t^{\alpha},\label{SDE_motiv2}
\end{align}
where $\Gamma^i_{jk}$ are the Christoffel's symbols of the metric $g$, $\gamma$ denotes the damping tensor and the vector fields $\sigma_\alpha$, $\alpha = 1, \dots  n$, couple the particle to $n$ standard Wiener processes, acting as noise sources.  The summation convention is used here and throughout the paper.  The reason the sum over $\alpha$ here is written explicitly is that it does not play  the role of a covariant index.

For the purposes of this motivating discussion, we assume that the damping and noise satisfy a fluctuation-dissipation relation known from nonequilibrium statistical mechanics \cite{kubo2012statistical}; this assumption will not be needed in the more general theorem that will be presented later. Note that the covariance of the noise is equal to $\sum_{\alpha}\sigma^i_{\alpha}\sigma^k_{\alpha}$ and is thus a tensor of type ${2 \choose 0}$ (a contravariant tensor of rank two).  We want to relate it to a quantity of the same type.  Since the damping tensor $\gamma^i_j$ has type ${1 \choose 1}$, we raise its lower index using the metric and state the fluctuation-dissipation relation as:
\begin{align}
\sum_{\alpha}\sigma^i_{\alpha}\sigma^j_{\alpha} = 2\beta^{-1} g^{jk}\gamma^i_k,
\end{align}
where $\beta^{-1} = k_BT$, $k_B$ is the Boltzmann constant and $T$ denotes the temperature.  In particular, if damping is isotropic,  $\gamma^i_k = \gamma\delta^i_k$, we get
\begin{align}\label{fluc_dis}
\sum_{\alpha}\sigma^i_{\alpha}\sigma^j_{\alpha} = 2\beta^{-1}\gamma g^{ij}
\end{align}
and \req{SDE_motiv2} becomes
\begin{align}\label{SDE_fluc_dis}
m\,dv^i_t = -m\Gamma^i_{jk}v^j_tv^k_t\,dt - \gamma v_t^i\,dt + \sum_{\alpha}\sigma^i_{\alpha}\,dW_t^{\alpha},
\end{align}
with $\sigma^i_{\alpha}$ satisfying the relation \req{fluc_dis}.
The problem motivating this work can now be stated as follows.
Consider the solutions of \req{SDE_motiv1} and \req{SDE_fluc_dis} with the initial conditions $x^{(m)}(0) = x_0$, $v^{(m)}(0)=v_0$.  We want to show that, as $m \to 0$, $x^{(m)}$ converges to the solution of the SDE
\begin{align}
dx_t^i = -{1 \over \beta \gamma^2}g^{jk}\Gamma^i_{jk}\,dt + {1 \over \gamma}\sum_{\alpha}\sigma^i_{\alpha}\,dW^{\alpha}_t
\end{align}
with the same initial condition.  Since this equation describes (a rescaled) BM on the manifold $M$ in local coordinates, this will realize our original goal.  Related results in the physics content are reported in \cite{polettini2013generally}.

We now present a sketch of the argument.  The remarks that follow it explain its relation to the actual proof in later sections.  Our guiding principle is that the kinetic energy of the particle is of order $1$, so that the components of the velocity (in a fixed coordinate chart) behave as  ${1 \over \sqrt{m}}$ in the limit $m \to 0$.

 Solving for $v_t^i\,dt$ in  \req{SDE_fluc_dis}, we obtain
\begin{align}\label{sol_outline}
dx_t^i = v_t^i\,dt = -{m \over \gamma}\,dv^i_t - {m \over \gamma}\Gamma^i_{jk}v_t^jv_t^k\,dt + {1 \over \gamma}\sum_{\alpha}\sigma^i_{\alpha}\,dW^{\alpha}_t. 
\end{align}
In the limit $m \to 0$ we expect no contribution from the first term, since $\gamma$ is constant and thus ${m \over \gamma}\,dv_t^i$ is the differential of the expression ${1 \over \gamma}mv_t^i$ which vanishes in the limit.    We {\it do} expect a nonzero limit  from the quadratic term.  This is again based on the analogy with \cite{Hottovy2014}, where  such a term appears as a result of integration by parts.  In the case discussed here, it is present in the equation from the start, reflecting the manifold geometry.    In the limit, we expect the fast velocity variable to average, giving rise to an $x$-dependent drift term.  To calculate this term, we use the method of \cite{Hottovy2014}, together with the heuristics that $v^i$ should be of order $m^{-{1 \over 2}}$.  Consider the differential (vanishing in the $m \to 0$ limit):
\begin{align}
&d(mv_t^jmv_t^k) = d(mv_t^j)mv_t^k + mv_t^j\,d(mv_t^k) + d(mv_t^j)\,d(mv_t^k)\notag \\
=& \big(-m\Gamma^j_{li}v^l_tv^i_t\,dt - \gamma v_t^j\,dt + \sum_{\alpha}\sigma^j_{\alpha}\,dW_t^{\alpha}\big)mv_t^k\\
& + mv_t^j\big( -m\Gamma^k_{li}v^l_tv^i_t\,dt - \gamma v_t^k\,dt + \sum_{\alpha}\sigma^k_{\alpha}\,dW_t^{\alpha}\big) + \sum_{\alpha}\sigma^j_{\alpha}\sigma^k_{\alpha}\,dt. \notag
\end{align}
Based on our assumption about the order of magnitude of $v^i$, in the limit we get
\begin{equation}
-2m\gamma v_t^jv_t^k\,dt + \sum_{\alpha}\sigma^j_{\alpha}\sigma^k_{\alpha}\,dt = 0.
\end{equation}
Substituting this into \req{sol_outline} and leaving out the terms which vanish in the limit, we obtain
\begin{equation}\label{limit_eq}
dx_t^i = -{1 \over \beta \gamma^2}g^{jk}\Gamma^i_{jk}\,dt + {1 \over \gamma}\sum_{\alpha}\sigma^i_{\alpha}\,dW^{\alpha}_t,
\end{equation}
which describes a rescaled BM on the manifold.

The limiting process $x_t$ satisfies an equation driven by the same Wiener processes, $W^\alpha$, that drove the equations for the original processes $x^{(m)}$, and is thus defined on the same probability space.  The processes  $x^{(m)}$ will be proven to converge to $x$ in the sense that the $L^p$-norm (in the $\omega$ variable) of the uniform distance on $[0,T]$ between the realizations of $x$ and $x^{(m)}$ goes to zero for every $T$.  This is much stronger than convergence in law on compact time intervals.

Zero-mass limits of diffusive systems have been studied in numerous works, starting from \cite{smoluchowski1916drei}. See \cite{Nelson1967} for a masterly review of the early history.   The analysis of such models have been extended in many directions. For example, \cite{doi:10.1137/S1540345903421076} studies the limit of a particle system driven by a fluid model that is coupled to noise.  The extension of diffusion processes to the relativistic setting has been studied in, for example, \cite{Chevalier2008,bailleul2010stochastic}. Other related works study random perturbations of the geodesic flow on a Riemannian manifold. In \cite{pinsky1976isotropic},  convergence of the transition semigroups of a family of transport processes on a manifold to that of Brownian motion was shown. See also the paper \cite{pinsky1981homogenization} where homogenization of the velocity variable for equations on manifolds is studied.   Families of Orstein-Uhlenbeck processes on manifolds were studied in \cite{Jorgensen1978,dowell1980differentiable}.   Two interesting recent papers  are \cite{XueMei2014} and \cite{angst2015kinetic}.  They prove convergence in law to Brownian motion in appropriate limits. We remark that in the special case of constant damping (as considered in this introduction), the generator of the process defined by \req{SDE_motiv1}-\req{SDE_motiv2} is the {\it hypoelliptic Laplacian}, introduced  in \cite{bismut2005hypoelliptic}---an important analytical object, encoding geometric properties of the manifold.  In this case, \cite{bismut2005hypoelliptic} proves a convergence-in-law result.  This has been extended to cover the convergence of kernels and their derivatives  \cite{bismut2015}. 

 More recently, several authors address the case when a general position dependent forcing is included and the damping and/or noise coefficients depend on the state of the system.  In particular, they study the associated phenomenon of the {\it noise-induced drift} that arises in the limit.  See references in the recent paper \cite{Hottovy2014}, where a formula for the noise-induced drift has been established for a large class of systems in Euclidean space of an arbitrary dimension.  See also \cite{herzog2015small}, where some of the assumptions made in \cite{Hottovy2014} are relaxed. 

The analysis applied here is most similar to that of \cite{Hottovy2014}, with two important differences.   First, we prove the fundamental momentum (or:  kinetic energy) bound in a different way, which would also lead to an alternative proof of the main result of \cite{Hottovy2014}.  Secondly, in this paper we pose the problem on a compact Riemannian manifold. This leads to complications of a  geometric nature that are absent in Euclidean space:  in order to control the quadratic terms in the geodesic equation more efficiently, we lift the equations to the orthogonal frame bundle. Another, equally important, consequence of lifting to the orthogonal frame bundle is that the equations of motion, including the noise term, can be formulated geometrically, without reference to local coordinate charts.   While this increases the number of variables and makes the equations more complicated, it simplifies the analytical aspects of the problem.  Frame bundle techniques similar to this have been used by many other authors, see for example \cite{Jorgensen1978,dowell1980differentiable,hsu2002stochastic,bailleul2010stochastic,XueMei2014}. 

 As our main result, we derive the equation satisfied by an inertial system in the limit $m\rightarrow 0$. In the case of constant damping and noise (and thus satifying the fluctuation-dissipation relation) with zero forcing, we obtain Brownian motion, as suggested by the informal derivation leading to \req{limit_eq}.  Various formulations that cover this classical case have been studied by previous authors \cite{dowell1980differentiable,bismut2005hypoelliptic}.  Forcing terms were also considered in \cite{dowell1980differentiable}.  The theorem presented here is general enough to include the classical case, as well as cover position dependent forcing, damping, and noise, including a derivation of the formula for the  noise induced drift. In addition, the damping and the noise coefficients are not required to satisfy a fluctuation-dissipation relation (unlike in the motivating discussion above), which leads to a fully general formula for the resulting noise-induced drift in the small mass limit. Physically, this result is particularly relevant when considering active matter and systems far from thermodynamic equilibrium \cite{seifert2012stochastic}.

\subsection{Summary of the Main Result}\label{sec:result_summary}
In this section we summarize the main result of this paper, including the required assumptions.  Motivation for the equations under consideration, further detail on the notation, and a proof of the result will follow in subsequent sections.

Let $(M,g)$ be a compact, connected $n$-dimensional Riemannian manifold without boundary, $F_O(M)$ be its frame bundle with cannonical projection $\pi$ (see Section \ref{sec:frame_bundle}), and $N\equiv F_O(M)\times\mathbb{R}^n$.  For each $m>0$ (representing the particle mass) we will  consider the following SDE on $N$ for fixed, non-random initial condition $(u_0,v_0)\in N$:
\begin{align}
u^m_t=&u_0+\int_{t_0}^t H_{v^m_s}(u^m_s)ds,\label{THE_SDE1}\\
v^m_t=&v_0+\frac{1}{m}\int_{t_0}^t [F(u^m_s)-\gamma(u^m_s)v^m_s]ds+\frac{1}{m}\int_{t_0}^t \sigma(u^m_s) dW_s.\label{THE_SDE2}
\end{align}
Here $H_\alpha(u)$ are the canonical horizontal vector fields on $F_O(M)$ (see Lemma \ref{horizontal_vf}), $F(x)$ is a smooth vector field on $M$ (the forcing), $\gamma(x)$ is a smooth $\binom{1}{1}$  tensor field on $M$ (the damping), the noise coefficients are given by a $\mathbb{R}^{n\times k}$-valued function,  $\sigma(u)$,  on $F_O(M)$,  $W_t$ is a $\mathbb{R}^k$-valued Wiener process, and we define $F(u)=u^{-1}F(\pi(u))$ and $\gamma(u)=u^{-1}\gamma(\pi(u))u$. 

Let $(u_t^m,v_t^m)$ be a family of solutions to \req{THE_SDE1}-\req{THE_SDE2}, corresponding to mass values $m>0$. To study the $m\rightarrow 0$ limit of $u_t^m$, we assume that the symmetric part of $\gamma$, $\gamma^s=\frac{1}{2}(\gamma+\gamma^T)$, has eigenvalues bounded below by a positive constant.  This coercivity assumption is crucial to our results, as it provides the damping necessary to ensure that the momentum degrees of freedom, $p_t^m=mv_t^m$, become negligible in the limit (see Section \ref{sec:p_zero}). 

Under the assumptions stated above, the main result (Theorem \ref{Main_theorem}) gives the limiting behavior of $u^m_t$ as $m\rightarrow 0$.  More specifically, we prove the following:\\
Fix $T> 0$ and a Riemannian metric tensor field on $F_O(M)$. Let $d$ be the associated metric on the connected component of $F_O(M)$ that contains $u_0$.  Then for any $q>0$ and any $0<\kappa< q/2$ we have 
\begin{align}
E[\sup_{t\in[0,T]}d(u^m_t,u_t)^{q}]=O(m^\kappa)\text{ as } m\rightarrow 0,
\end{align}
where $u_t$ solve the following SDE on $F_O(M)$ with initial condition $u_0$:
\begin{align}\label{THE_limit_SDE}
du_t=&H_{(\gamma^{-1} F)(u_t)} (u_t)dt+S(u_t)dt+H_{(\gamma^{-1}\sigma)(u_t)} (u_t)\circ dW_t.
\end{align}
The additional drift vector field, $S(u)$, on $F_O(M)$ that arises in the limit, given by \req{S_h}-\req{S_v},  will be called the {\em noise induced drift}.

\section{Forced Geodesic Motion on the Tangent Bundle}
We now begin the task of making the results outlined in the introduction and summary precise. Let $(M,g)$ be an $n$-dimensional smooth connected Riemannian manifold with tangent bundle $(TM,\pi)$, where $\pi$ is the natural projection. Let $V:TM\rightarrow TM$ be smooth and  $\pi\circ V=\pi$, i.e. $V$ maps each fiber into itself.  The deterministic dynamical system that we eventually want to couple to noise  is defined by the equation
\begin{align}\label{ode_system}
\nabla_{\dot{x}}\dot{x}=V(\dot{x}),
\end{align}
where $\nabla$ is the Levi-Civita connection.  In this section and the next, we focus on the non-random system. The coupling to noise will be discussed in Section \ref{sec:SDE_manifold}. We refer to the system \req{ode_system} as a geodesic equation with (velocity-dependent) forcing  $V$. Note that $\dot{x}$ is an element of  $TM$, and so it contains both position and velocity information.   In particular, \req{ode_system} contains the special case where $V$ is independent of the velocity degrees of freedom, i.e. $V$ is a vector field on $M$.

 \req{ode_system} is more general than the system outlined in the introduction, where the deterministic forcing consists only of drag.  The process that we will eventually find in the small mass limit will therefore be more general than Brownian motion on $M$, but will include Brownian motion as a special case.

We now interpret  \req{ode_system} as an ordinary differential equation (ODE) on the tangent bundle $TM$.  With $V \equiv 0$ it is the standard geodesic equation.  Below we give some facts, starting from this case in points 1-4 and then, in points 5-6, we add forcing. These facts will not be used in our subsequent analysis, but they give one an idea of how  forced geodesic motion on  manifold, \req{ode_system}, can be reformulated as a flow on a larger space.  We will build on this idea in the next section.
\begin{enumerate}
\item  For $v\in TM$ let $x_v$ be the geodesic with velocity $v$ at $t=0$.  Define the geodesic vector field $G:TM\rightarrow T(TM)$ by $G(v)=\frac{d}{dt}(\dot x_v)_{t=0}$, i.e. the tangent vector to the curve $\dot x_v:I\rightarrow TM$ at $t=0$. $G$ is a smooth vector field on $TM$ and $x:I\rightarrow M$ is a geodesic iff $\dot x$ (interpreted as a curve in $TM$) is an integral curve of $G$. 
\item If $\eta$ is an integral curve of $G$ then $x\equiv\pi\circ\eta$ is a geodesic on $M$ and $\dot{x}=\eta$.
\item The flow of $G$ is $(t,v)\rightarrow \dot{x}_v(t)$.
\item In a chart $x^i$ for $M$ and induced coordinates $(x^i,v^i)$ on $TM$, $G$ takes the form
\begin{equation}
G(x,v)=v^i\partial_{x_i}|_{(x,v)}-\Gamma^i_{jk}(x)v^jv^k\partial_{v^i}|_{(x,v)},
\end{equation}
where $\Gamma^i_{jk}$ are the Christoffel symbols of the Levi-Civita connection in the coordinate system $x^i$.
\item Using $V$ we can define a vector field $V:TM\rightarrow T(TM)$ given in an induced chart on $TM$ by $V=V^i\partial_{v^i}$ (we will let context dictate whether we consider $V$ as mapping into $TM$ or $T(TM)$). This produces  a well defined smooth vector field on $TM$ that is independent of the choice of charts.

\item We let 
\begin{align}\label{Y_def}
Y=G+V.
\end{align} 
If $\tau$ is an integral curve of $Y$ then  $x=\pi\circ\tau$ satisfies \req{ode_system} and $\tau=\dot{x}$. Conversely, if $x$ satisfies \req{ode_system} then $\dot{x}$ is an integral curve of $Y$.
\end{enumerate}
This last point implies that the equation of interest, \req{ode_system}, defines a smooth dynamical system on the tangent bundle of $M$ and the vector field of this dynamical system is $Y$.

\section{Forced Geodesic Motion on the Frame Bundle}\label{sec:frame_bundle}
The metric tensor $g$ on $M$ defines a reduction of the structure group of $TM$ to the orthogonal group, $O(\mathbb{R}^n)$ \cite{kobayashi2009foundations}, with local trivializations induced by local orthonormal (o.n.) frames on $M$ (i.e. collections of local vector fields that form an o.n. basis at each point of their domain).  In turn, this lets one construct the orthogonal frame bundle,  $(F_O(M),\pi)$ (we will let context distinguish between the various projections $\pi$). By reformulating \req{ode_system} as a dynamical system using the orthogonal frame bundle  of $M$, in a similar manner to the procedure outlined in the previous section, we will arrive at equations that are more amenable to being coupled to  noise.

Our expanded system will be defined via a vector field on the manifold $N\equiv  F_O(M)\times \mathbb{R}^n$.  
\subsection{Coordinate Independent Definition}\label{sec:X_def1}
Fix $(u,v)\in N$.  We will define a vector $X_{(u,v)}\in T_{(u,v)}N$ as follows.  Let $x(t)$ be the solution to 
\begin{equation}
\nabla_{\dot{x}}\dot{x}=V(\dot{x}),\hspace{2mm} x(0)=\pi(u),\hspace{2mm}\dot{x}(0)=u(v),
\end{equation}
 i.e. the integral curve of $Y$, defined in \req{Y_def}, starting at $u(v)\in T_{\pi(u)}M$.

  Let $U_\alpha(t)$ be the parallel translates of $u(e_\alpha)$ along $x(t)$ ($e_\alpha$ is the standard basis for $\mathbb{R}^n$), i.e.
\begin{align}
\nabla_{\dot{x}}U_\alpha=0,\hspace{2mm} U_\alpha(0)=u(e_\alpha).
\end{align}
  Parallel transport via the Levi-Civita connection preserves inner products, so $\tau(t)$ defined by $\tau(t)e_\alpha=U_\alpha(t)$ is a smooth section of $F_O(M)$ along $x(t)$.  Define the smooth curve in $\mathbb{R}^n$, $v(t)=\tau(t)^{-1}\dot{x}(t)$.

With these definitions, $\eta(t)=(\tau(t),v(t))$ is a smooth curve in $N$ starting at $(u,v)$.  Define the vector field $X$ by
\begin{align}\label{X_def}
X_{(u,v)}=\dot{\eta}(0).
\end{align}

\subsection{Coordinate Expression}\label{sec:ode_in_coords}
We now derive a formula for $X$ in a coordinate system defined below and thereby prove it is a smooth vector field on $N$.

Let $(U,\phi)$ be a coordinate chart on $M$ and $E_\alpha$ be an o.n. frame on $U$. We will let Roman indices denote quantities in the coordinate frame and Greek indices denote quantities in the local o.n. frame. The connection coefficients in the o.n. frame, $A^\alpha_{\beta\eta}$, are defined by $\nabla_{E_\beta} E_\eta=A^{\alpha}_{\beta\eta}E_\alpha$. The coordinate frame, $\partial_i$, and the o.n. frame, $E_\alpha$, are related by an invertible matrix valued smooth function $\Lambda^\alpha_i$ on $U$,
\begin{align}
\partial_i=\Lambda^\alpha_iE_\alpha.
\end{align}
Let $\psi$ be the local section of $F_O(M)$ induced by $E_\alpha$, i.e.  $\psi(x)v=v^\alpha E_\alpha(x)$.  We have the diffeomorphism $\Phi:\pi^{-1}(U)\rightarrow\phi(U)\times O(\mathbb{R}^n)$, $u\rightarrow(\phi(\pi(u)),h)$, where $h$ is uniquely defined by $u=\psi(\pi(u))h$.  In turn, this gives a diffeomorphism $\Phi\times id $ on $\pi^{-1}(U)\times \mathbb{R}^n\subset N$.

\begin{lemma}\label{X_coords_lemma}
The pushforward of the vector field $X$ to $\phi(U)\times O(\mathbb{R}^n)\times \mathbb{R}^n$ by the diffeomorphism $\Phi\times id$ is given by
\begin{align}\label{X_coords}
((\Phi\times id)_*X)|_{(x,h,v)}=&(\Lambda^{-1})_\alpha^j(\pi(u))h_\beta^\alpha v^\beta\partial_j-h_\alpha^\eta A_{\delta\eta}^\beta(\pi(u))h_\xi^\delta v^\xi\partial_{e^\beta_\alpha}\\
&+(h^{-1})^\alpha_\beta V^\beta(u(v))\partial_{v^\alpha}\notag
\end{align}
where $u=\Phi^{-1}(\phi^{-1}(x),h)$, $V^\beta$ are the components of $V$ in the o.n. frame $E_\beta$ (not the coordinate frame $\partial_i$), $v^\alpha$ are the standard coordinates on $\mathbb{R}^n$, and $e^\beta_\alpha$ are the standard coordinates on $\mathbb{R}^{n\times n}$.  In particular, $X$ is a smooth vector field. 
\end{lemma}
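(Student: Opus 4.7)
The plan is to compute the three components of $\dot\eta(0)=X_{(u,v)}$ directly (with respect to the split $N\cong F_O(M)\times\mathbb{R}^n$ and the local trivialization of $F_O(M)$), and then read off the coefficients in the $\partial_j$, $\partial_{e^\beta_\alpha}$, and $\partial_{v^\alpha}$ directions after pushing forward by $\Phi\times id$. Smoothness of $X$ will follow automatically once the coefficients are exhibited as smooth functions of $(x,h,v)$.

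First I would compute the position component. By construction $\dot x(0)=u(v)$, and the trivialization gives $u=\psi(\pi(u))h$ with $\psi(p)w=w^\alpha E_\alpha(p)$, so $\dot x(0)=h^\alpha_\beta v^\beta E_\alpha(\pi(u))$. Converting to the coordinate frame via $E_\alpha=(\Lambda^{-1})^j_\alpha\partial_j$ immediately yields the first term $(\Lambda^{-1})^j_\alpha h^\alpha_\beta v^\beta\partial_j$.

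Next I would compute the frame component $\dot h^\beta_\alpha(0)$. Writing $U_\alpha(t)=h^\beta_\alpha(t)E_\beta(x(t))$ and applying the parallel transport condition $\nabla_{\dot x}U_\alpha=0$, together with the definition $\nabla_{E_\gamma}E_\eta=A^\delta_{\gamma\eta}E_\delta$ of the o.n. connection coefficients, one gets
\begin{equation}
\dot h^\delta_\alpha+h^\eta_\alpha\,\dot x^\gamma A^\delta_{\gamma\eta}=0,
\end{equation}
where $\dot x^\gamma$ denotes the o.n. frame components of $\dot x$. At $t=0$, $\dot x^\gamma(0)=h^\gamma_\xi v^\xi$, and relabeling indices matches the second term in \eqref{X_coords}. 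For the velocity component, I would use that $\dot x(t)=v^\alpha(t)U_\alpha(t)$ by definition of $v=\tau^{-1}\dot x$; since $\nabla_{\dot x}U_\alpha=0$, the covariant derivative collapses to $\nabla_{\dot x}\dot x=\dot v^\alpha U_\alpha$. Combining this with the forced geodesic equation $\nabla_{\dot x}\dot x=V(\dot x)$, expanding $V(\dot x)=V^\beta(\dot x)E_\beta$ and $U_\alpha=h^\beta_\alpha E_\beta$, and comparing coefficients of $E_\beta$ gives $\dot v^\alpha h^\beta_\alpha=V^\beta(\dot x)$, hence $\dot v^\alpha=(h^{-1})^\alpha_\beta V^\beta(\dot x)$. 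Evaluating at $t=0$, where $\dot x(0)=u(v)$, produces the third term.

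The main obstacle is purely bookkeeping: one must consistently distinguish the coordinate frame $\partial_i$ from the local o.n. frame $E_\alpha$, keep track of index positions for the change-of-frame matrices $\Lambda,\Lambda^{-1}$ and for $h,h^{-1}$, and remember that in the trivialization $\Phi\times id$ the fiber coordinate of $F_O(M)$ is $h$ so that $\partial_{e^\beta_\alpha}$ corresponds to $\partial h^\beta_\alpha$. There is no substantive analytic difficulty; smoothness is immediate from the smoothness of $\Lambda$, $A^\beta_{\delta\eta}$, $V^\beta$, and the group operations on $O(\mathbb{R}^n)$.
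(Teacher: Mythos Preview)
Your proposal is correct. The first two components (position and frame) follow essentially the same route as the paper. The third component, however, is handled more cleanly in your argument than in the paper's own proof. The paper differentiates the explicit coordinate expression $v^\alpha(t)=(h^{-1})^\alpha_\beta(t)\Lambda^\beta_i(x(t))\dot x^i(t)$ directly, substitutes the coordinate geodesic equation $\ddot x^i+\Gamma^i_{jk}\dot x^j\dot x^k=V^i$, converts $\Gamma^i_{jk}$ to the o.n.\ connection coefficients $A^\alpha_{\beta\eta}$, and then shows through several lines of index manipulation that all the connection terms cancel, leaving $\dot v^\alpha=(h^{-1})^\alpha_\beta V^\beta$. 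Your observation that $\dot x=v^\alpha U_\alpha$ with $\nabla_{\dot x}U_\alpha=0$ immediately gives $\nabla_{\dot x}\dot x=\dot v^\alpha U_\alpha$ bypasses this computation entirely: the cancellation that the paper verifies by hand is exactly the statement that parallel transport kills the connection terms, which you are invoking directly. The paper does remark after its proof that the cancellation ``is not unexpected'' for precisely this reason, and records the resulting formula $\dot v(t)=\tau^{-1}(t)V(\tau(t)v(t))$, but only after carrying out the explicit verification. Your route is shorter; the paper's route has the minor virtue of being a self-contained coordinate check that does not presuppose the reader is fluent with covariant derivatives along curves.
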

\begin{proof}
Using 
\begin{align}\label{tau_clarif}
\tau(t)=\psi(x(t))h(t)=u(t),\hspace{2mm} v(t)=\tau^{-1}(t)\dot{x}(t)
\end{align}
 we obtain
\begin{align}\label{v_def}
v(t)=&h^{-1}(t) \psi^{-1}(x(t)) \dot{x}^i(t)\partial_{i}= \dot{x}^i(t)\Lambda_i^\alpha(x(t))h^{-1}(t)\psi^{-1}(x(t)) E_\alpha(x(t))\notag\\
=&\dot{x}^i(t)\Lambda_i^\alpha(x(t)) h^{-1}(t)e_\alpha=\dot{x}^i(t)\Lambda_i^\alpha(x(t))(h^{-1})_\alpha^\beta(t)e_\beta.
\end{align}
Solving for $\dot{x}(t)$ we find
\begin{align}\label{gamma_dot_eq}
\dot{x}^j(t)=(\Lambda^{-1})_\alpha^j(x(t))h_\beta^\alpha(t)v^\beta(t).
\end{align}
This proves that the first term of \req{X_coords} is correct.

$\nabla_{\dot{x}}U_\alpha=0$ implies
\begin{align}
0=&\nabla_{\dot{x}} \psi(x(t))h(t) e_\alpha=\nabla_{\dot{x}}h_\alpha^\beta(t) E_\beta(x(t))\\
=&\dot{h}_\alpha^\beta(t) E_\beta(x(t))+h_\alpha^\beta(t) \dot{x}^k(t)\nabla_{\partial_k} E_\beta\notag\\
=&\dot{h}_\alpha^\beta(t) E_\beta(x(t))+h_\alpha^\beta(t) \dot{x}^k(t)\Lambda_k^\eta(x(t))\nabla_{E_\eta} E_\beta\notag\\
=&\left(\dot{h}_\alpha^\beta(t)+h_\alpha^\eta(t) \dot{x}^k(t)\Lambda_k^\delta(x(t))A_{\delta\eta}^\beta(x(t))\right) E_\beta(x(t)).\notag
\end{align}
Therefore
\begin{align}\label{g_dot_eq}
\dot{h}_\alpha^\beta(t)=-h_\alpha^\eta(t) A_{\delta \eta}^\beta(x(t))\Lambda_k^\delta(x(t))\dot{x}^k(t).
\end{align}
Using \req{gamma_dot_eq}, we get
\begin{align}\label{g_dot_eq2}
\dot{h}_\alpha^\beta(t)=&-h_\alpha^\eta(t) A_{\delta \eta}^\beta(x(t))\Lambda_k^\delta(x(t))((\Lambda^{-1})_\kappa^k(x(t))h_\xi^\kappa(t)v^\xi(t))\\
=&-h_\alpha^\eta(t) A_{\delta \eta}^\beta(x(t)) h_\xi^\delta(t)v^\xi(t).\notag
\end{align}
This proves that the second term in \req{X_coords} is correct.

Differentiating \req{v_def} (and dropping the time dependence in our notation) we find
\begin{align}
\dot{v}^\alpha=-(h^{-1})^\alpha_\beta \dot{h}^\beta_\eta (h^{-1})^\eta_\xi\Lambda^\xi_i\dot{x}^i+(h^{-1})^\alpha_\beta\partial_l\Lambda^\beta_i \dot{x}^l\dot{x}^i+(h^{-1})^\alpha_\beta\Lambda^\beta_i\ddot{x}^i.
\end{align}
From $\nabla_{\dot{x}}\dot{x}=V$ we obtain $\ddot{x}^i+\Gamma^i_{jk}\dot{x}^j\dot{x}^k=V^i$ where $V^i$ are the components of $V$ in the coordinate frame $\partial_i$ (not to be confused with $V^\alpha$, the components in the o.n. frame $E_\alpha$).  We need to convert from $\Gamma^i_{jk}$ to $A^\alpha_{\beta\eta}$,
\begin{align}
\Gamma^i_{jk}\partial_i=&\nabla_{\partial_j}\partial_k=\nabla_{\partial_j}\Lambda_k^\alpha E_\alpha=\partial_j\Lambda^\alpha_k E_\alpha+\Lambda^\alpha_k\nabla_{\partial_j}E_\alpha=\partial_j\Lambda^\alpha_k E_\alpha+\Lambda^\alpha_k\Lambda_j^\beta\nabla_{E_\beta}E_\alpha\notag\\
=&(\partial_j\Lambda^\alpha_k +\Lambda^\eta_k\Lambda_j^\beta A_{\beta \eta}^\alpha )E_\alpha=(\partial_j\Lambda^\alpha_k +\Lambda^\eta_k\Lambda_j^\beta A_{\beta\eta}^\alpha )(\Lambda^{-1})_\alpha^i \partial_i.
\end{align}
Using this we obtain
\begin{align}
\dot{v}^\alpha=&-(h^{-1})^\alpha_\beta \dot{h}^\beta_\eta (h^{-1})^\eta_\delta\Lambda^\delta_i\dot{x}^i+(h^{-1})^\alpha_\beta\partial_l\Lambda^\beta_i \dot{x}^l\dot{x}^i+(h^{-1})^\alpha_\beta\Lambda^\beta_i(V^i-\Gamma^i_{jm}\dot{x}^j\dot{x}^m)\notag\\
=&-(h^{-1})^\alpha_\beta \dot{h}^\beta_\eta (h^{-1})^\eta_\delta\Lambda^\delta_i\dot{x}^i+(h^{-1})^\alpha_\beta\partial_l\Lambda^\beta_i \dot{x}^l\dot{x}^i\\
&+(h^{-1})^\alpha_\beta\Lambda^\beta_i(V^i-(\partial_j\Lambda^\eta_m +\Lambda^\delta_m\Lambda_j^\xi A_{\xi \delta}^\eta )(\Lambda^{-1})_\eta^i \dot{x}^j\dot{x}^m)\notag\\
=&(h^{-1})^\alpha_\beta\Lambda^\beta_iV^i-(h^{-1})^\alpha_\beta \dot{h}^\beta_\eta (h^{-1})^\eta_\delta\Lambda^\delta_i\dot{x}^i+(h^{-1})^\alpha_\beta\partial_l\Lambda^\beta_i \dot{x}^l\dot{x}^i\notag\\
&-(h^{-1})^\alpha_\beta(\partial_j\Lambda^\beta_m +\Lambda^\delta_m\Lambda_j^\xi A_{\xi \delta}^\beta ) \dot{x}^j\dot{x}^m\notag\\
=&(h^{-1})^\alpha_\beta\Lambda^\beta_iV^i-(h^{-1})^\alpha_\beta \dot{h}^\beta_\eta (h^{-1})^\eta_\xi\Lambda^\xi_i\dot{x}^i+(h^{-1})^\alpha_\beta\partial_l\Lambda^\beta_i \dot{x}^l\dot{x}^i\notag\\
&-(h^{-1})^\alpha_\beta\partial_j\Lambda^\beta_m\dot{x}^j\dot{x}^m -(h^{-1})^\alpha_\beta\Lambda^\eta_m\Lambda_j^\xi A_{\xi\eta}^\beta  \dot{x}^j\dot{x}^m.\notag
\end{align}
The third and fourth terms cancel.  Using \req{g_dot_eq}, the second can be written
\begin{align}
&-(h^{-1})^\alpha_\beta \dot{h}^\beta_\eta (h^{-1})^\eta_\xi\Lambda^\xi_j\dot{x}^j =(h^{-1})^\alpha_\beta (h_\xi^\eta A_{\delta\eta}^\beta\Lambda_i^\delta\dot{x}^i)(h^{-1})^\xi_\epsilon\Lambda^\epsilon_j\dot{x}^j=(h^{-1})^\alpha_\beta A_{\xi \eta}^\beta\Lambda_i^\xi\Lambda^\eta_j\dot{x}^i\dot{x}^j.
\end{align}
Therefore
\begin{align}
\dot{v}^\alpha(t)=&(h^{-1})^\alpha_\beta\Lambda^\beta_iV^i+(h^{-1})^\alpha_\beta A_{ \xi \eta}^\beta\Lambda_i^\eta\Lambda^\xi_j\dot{x}^i\dot{x}^j-(h^{-1})^\alpha_\beta A_{ \xi \eta}^\beta \Lambda_j^\eta\Lambda^\xi_m\dot{x}^j\dot{x}^m\notag\\
=&(h^{-1})^\alpha_\beta\Lambda^\beta_iV^i=(h^{-1})^\alpha_\beta V^\beta.
\end{align}
Note that we have converted from the components in the coordinate basis, $V^i$, to the coordinates in the o.n. basis, $V^\beta$,  in the last line.
This proves  that the final term of \req{X_coords} is correct.

Note that the equation for $v(t)$ can also be written written as
\begin{align}
\dot{v}(t)=\tau^{-1}(t) V(\tau(t)v(t)).
\end{align}

\end{proof}
The cancellation of the Christoffel terms in the equation for $\dot{v}$ is not unexpected.  In the absence of forcing $V$, $x(t)$ is a geodesic and hence its tangent vector is parallel transported along itself.  Therefore  $v^i$, the components of the tangent vector in the parallel transported frame, $U_\alpha$, must be constants when $V$ vanishes. This is in contrast to the geodesic equation in an arbitrary coordinate system, in which the equation for $\ddot{x}^i$ is non-trivial even in the absence of forcing.  This fact simplifies the analysis when we study the small mass limit of the noisy system and is one of the advantages of the orthogonal frame bundle formulation.

Using the above lemma, we can write the equation for the integral curves of $X$, \req{X_def}, in coordinates.
\begin{corollary}
 In a coordinate system defined as in Lemma \ref{X_coords_lemma}, an integral curve of $X$, $(x^i(t), h^\beta_\delta(t),v^\alpha(t))$, satisfies
\begin{align}
\dot{x}^j(t)=&(\Lambda^{-1})_\alpha^j(x(t))h_\beta^\alpha(t)v^\beta(t),\label{coord_odes_x}\\
\dot{h}_\beta^\alpha(t)=&-h_\beta^\eta(t) A_{\xi\eta}^\alpha(x(t))h_\delta^\xi(t)v^\delta(t),\label{coord_odes_h}\\
\dot{v}^\alpha(t)=&(h^{-1})^\alpha_\beta(t)V^\beta(\dot{x}(t))\label{coord_odes_v},
\end{align}
where, from \req{tau_clarif}, we see that $\dot{x}(t)=v^\xi(t) h_\xi^\eta(t)  E_\eta(x(t))$. Recall that $V^\beta$ are the components of $V$ in the o.n. basis $E_\beta$.
\end{corollary}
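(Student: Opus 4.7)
The corollary is essentially a direct translation of Lemma \ref{X_coords_lemma} into the language of ordinary differential equations, so the plan is to exploit the lemma rather than redo any computation.

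My first step would be to observe that the diffeomorphism $\Phi\times \mathrm{id}:\pi^{-1}(U)\times\mathbb{R}^n\to \phi(U)\times O(\mathbb{R}^n)\times \mathbb{R}^n$ carries integral curves of $X$ to integral curves of $(\Phi\times\mathrm{id})_*X$. A curve $(x^i(t),h^\beta_\delta(t),v^\alpha(t))$ in the target coordinates is an integral curve of the pushforward vector field precisely when its component derivatives equal the coordinate components of $(\Phi\times\mathrm{id})_*X$ evaluated at the curve. So I would simply read off the three coefficients from \req{X_coords}: the coefficient of $\partial_j$ gives \req{coord_odes_x}, the coefficient of $\partial_{e^\beta_\alpha}$ gives \req{coord_odes_h}, and the coefficient of $\partial_{v^\alpha}$ gives \req{coord_odes_v} (noting that the factor $V^\beta(u(v))$ in the lemma is $V^\beta(\dot x(t))$ once we identify $u(v)=\dot x(t)$ along the curve).

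The only small thing remaining is to verify the final assertion $\dot x(t)=v^\xi(t)h_\xi^\eta(t)E_\eta(x(t))$. I would obtain this by combining the coordinate ODE \req{coord_odes_x} with the frame relation $\partial_j=\Lambda^\alpha_j E_\alpha$: substituting gives
\begin{align}
\dot x(t)=\dot x^j(t)\partial_j=(\Lambda^{-1})^j_\alpha h^\alpha_\beta v^\beta\,\Lambda^\eta_j E_\eta=\delta^\eta_\alpha h^\alpha_\beta v^\beta E_\eta=v^\beta h^\eta_\beta E_\eta(x(t)),
\end{align}
which is the claimed formula. This is also consistent with the identity $\dot x(t)=\tau(t)v(t)$ used in the construction of $X$ in Section \ref{sec:X_def1}, since in the coordinates induced by $\psi$ the frame $\tau(t)$ is represented by $h(t)$ and $\tau(t)v(t)=h^\eta_\beta v^\beta E_\eta$.

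There is no real obstacle here: all the analytic work was done in proving Lemma \ref{X_coords_lemma}, and this corollary is just a reformulation. The only point requiring minor care is keeping the index conventions straight between the coordinate frame $\partial_i$ and the orthonormal frame $E_\alpha$, and recognizing that the argument of $V$ in \req{X_coords}, namely $u(v)$, becomes $\dot x(t)$ along an integral curve by construction.
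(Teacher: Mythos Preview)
Your proposal is correct and matches the paper's treatment: the paper does not give a separate proof of this corollary, stating only that ``using the above lemma, we can write the equation for the integral curves of $X$ in coordinates,'' which is exactly the reading-off-the-coefficients argument you describe. Your explicit verification of $\dot{x}(t)=v^\xi h_\xi^\eta E_\eta(x(t))$ via $\partial_j=\Lambda^\alpha_j E_\alpha$ is a nice addition that the paper leaves implicit.
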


In the process of proving Lemma \ref{X_coords_lemma} we have also characterized the relation between integral curves of $X$, \req{X_def}, and integral curves of $Y$, \req{Y_def}, as expressed by the following corollaries.
\begin{corollary}
Let $x(t)$ be the  solution to 
\begin{equation}
\nabla_{\dot{x}}\dot{x}=V(\dot{x}),\hspace{2mm} x(0)=\pi(u),\hspace{2mm}\dot{x}(0)=u(v),
\end{equation}
 i.e. the  integral curve of $Y$, defined in \req{Y_def}, starting at $u(v)\in T_{\pi(u)}M$.

  Let $U_\alpha(t)$ be the parallel translates of $u(e_\alpha)$ along $x(t)$ ($e_\alpha$ is the standard basis for $\mathbb{R}^n$), i.e.
\begin{align}
\nabla_{\dot{x}}U_\alpha=0,\hspace{2mm} U_\alpha(0)=u(e_\alpha).
\end{align}
  Parallel transport preserves inner products, so $\tau(t)$ defined by $\tau(t)e_\alpha=U_\alpha(t)$ is a smooth section of $F_O(M)$ along $x(t)$.  Define the smooth curve in $\mathbb{R}^n$, $v(t)=\tau(t)^{-1}\dot{x}(t)$.

 Define the smooth curve in $N$, $\eta(t)=(\tau(t),v(t))$. This is  an integral curve of $X$ starting at $(u,v)$.

\end{corollary}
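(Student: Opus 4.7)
The plan is to verify, working in the coordinates of Lemma \ref{X_coords_lemma}, that $\eta(t)=(\tau(t),v(t))$ satisfies $\dot{\eta}(t)=X_{\eta(t)}$ for all $t$, together with the correct initial condition $\eta(0)=(u,v)$. Using the local section $\psi$ of $F_O(M)$ induced by an o.n.\ frame $E_\alpha$ on a chart $(U,\phi)$ containing $x(0)=\pi(u)$, one can uniquely write $\tau(t)=\psi(x(t))h(t)$ for some smooth curve $h(t)$ in $O(\mathbb{R}^n)$ as long as $x(t)\in U$. Thus $\eta(t)$ is represented in the chart by the triple $(x(t),h(t),v(t))$, and showing that $\eta$ is an integral curve of $X$ reduces, by Lemma \ref{X_coords_lemma}, to showing that this triple satisfies the ODE system \reqr{coord_odes_x}{coord_odes_v}.

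For the initial condition, $\tau(0)e_\alpha=U_\alpha(0)=u(e_\alpha)$ gives $\tau(0)=u$, and hence $v(0)=\tau(0)^{-1}\dot{x}(0)=u^{-1}(u(v))=v$, so $\eta(0)=(u,v)$. For the ODE itself, the key observation is that the calculations carried out in the proof of Lemma \ref{X_coords_lemma} were \emph{not} restricted to $t=0$: the expressions derived there for $\dot{x}^j(t)$, $\dot{h}^\beta_\alpha(t)$ and $\dot{v}^\alpha(t)$ were obtained by differentiating the identities $v(t)=\tau(t)^{-1}\dot{x}(t)$, $\nabla_{\dot{x}}U_\alpha=0$, and $\nabla_{\dot{x}}\dot{x}=V(\dot{x})$, all of which hold for every $t$ in the domain of $x$. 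Consequently the triple $(x(t),h(t),v(t))$ associated with the construction above satisfies \reqr{coord_odes_x}{coord_odes_v} at every such $t$, which by Lemma \ref{X_coords_lemma} is exactly the statement $\dot{\eta}(t)=X_{\eta(t)}$ in these coordinates. Since $X$ is globally defined and the tangency condition is coordinate-independent, the claim that $\eta$ is an integral curve of $X$ follows on all of the chart domain, and then on the full existence interval by covering the image of $x$ by such charts.

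The one mildly subtle point, and the closest thing to an obstacle, is making sure that the local chart description of $\eta(t)$ via $h(t)$ makes sense globally along the trajectory: as $x(t)$ leaves the original chart $U$, one must pass to overlapping charts and o.n.\ frames. This is routine, since $\tau(t)$ is an intrinsically defined smooth section of $F_O(M)$ along $x(t)$ coming from parallel transport, and on any overlap the two coordinate representations of $\tau(t)$ differ only by the smooth transition map of $F_O(M)$, so the ODE verification in each chart patches consistently and yields the desired global conclusion.
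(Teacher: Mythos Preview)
Your proposal is correct and takes essentially the same approach as the paper: the corollary is stated there without a separate proof, the paper simply remarking that ``in the process of proving Lemma \ref{X_coords_lemma} we have also characterized the relation between integral curves of $X$ \ldots\ and integral curves of $Y$,'' which is exactly your observation that the computations for $\dot{x}^j(t)$, $\dot{h}^\beta_\alpha(t)$, $\dot{v}^\alpha(t)$ in that lemma hold for all $t$, not merely at $t=0$. Your added verification of the initial condition and the remark about patching charts are details the paper leaves implicit.
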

Conversely, uniqueness of integral curves gives us the following.
\begin{corollary}
Let $(\tau(t),v(t))$ be an integral curve of $X$ starting at $(u,v)$.  Define $x(t)=\pi(\tau(t))$ and $U_\alpha(t)=\tau(t)e_\alpha$.  Then $x(t)$ is a solution to
\begin{equation}
\nabla_{\dot{x}}\dot{x}=V(\dot{x}),\hspace{2mm} x(0)=\pi(u),\hspace{2mm}\dot{x}(0)=u(v),
\end{equation}
the $U_\alpha$ are parallel  along $x(t)$, and $v(t)=\tau^{-1}(t)\dot{x}(t)$.
\end{corollary}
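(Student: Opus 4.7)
The plan is to derive this corollary directly from the preceding one by invoking the uniqueness theorem for integral curves of a smooth vector field.

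First, given the initial datum $(u,v)\in N$, the preceding corollary produces an integral curve $\tilde\eta(t)=(\tilde\tau(t),\tilde v(t))$ of $X$ starting at $(u,v)$ by the following construction: solve the forced geodesic equation $\nabla_{\dot{\tilde x}}\dot{\tilde x}=V(\dot{\tilde x})$ with $\tilde x(0)=\pi(u)$, $\dot{\tilde x}(0)=u(v)$; parallel transport $u(e_\alpha)$ along $\tilde x$ to obtain $\tilde U_\alpha(t)$; and set $\tilde\tau(t)e_\alpha=\tilde U_\alpha(t)$, $\tilde v(t)=\tilde\tau(t)^{-1}\dot{\tilde x}(t)$. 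By construction, $\tilde x(t)=\pi(\tilde\tau(t))$ solves the forced geodesic equation with the prescribed initial conditions, the $\tilde U_\alpha$ are parallel along $\tilde x$, and $\tilde v(t)=\tilde\tau(t)^{-1}\dot{\tilde x}(t)$.

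Second, because $X$ is a smooth vector field on $N$ (proved in Lemma \ref{X_coords_lemma}), the standard uniqueness theorem for integral curves of a smooth vector field applies. Both $\eta$ (the given curve) and $\tilde\eta$ are integral curves of $X$ starting at the same point $(u,v)$, so they coincide on their common domain; in particular $\tau(t)=\tilde\tau(t)$ and $v(t)=\tilde v(t)$ for all $t$.

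Third, transfer the properties from $\tilde\eta$ to $\eta$: one has $x(t)=\pi(\tau(t))=\pi(\tilde\tau(t))=\tilde x(t)$, so $x$ satisfies the forced geodesic equation with $x(0)=\pi(u)$ and $\dot x(0)=u(v)$; then $U_\alpha(t)=\tau(t)e_\alpha=\tilde\tau(t)e_\alpha=\tilde U_\alpha(t)$ is parallel along $x$; and finally $v(t)=\tilde v(t)=\tilde\tau(t)^{-1}\dot{\tilde x}(t)=\tau(t)^{-1}\dot x(t)$. There is no substantive obstacle here; the only point requiring care is to note that the full datum $(u,v)$ determines both the initial position $\pi(u)$ and the initial velocity $u(v)$ of the forced geodesic, which is precisely what is needed for uniqueness to align $x$ with $\tilde x$.
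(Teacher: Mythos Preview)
Your proof is correct and follows precisely the paper's approach: the paper simply states ``Conversely, uniqueness of integral curves gives us the following'' and then records the corollary, and you have spelled out exactly that argument in detail.
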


\subsection{A Second Coordinate Independent Formulation}
In this section, we introduce a natural set of horizontal vector fields on the orthogonal frame bundle and, using the coordinate expression for $X$, \req{X_coords}, we show that these vector fields can be used to characterize the dynamical system \req{ode_system}, yielding another coordinate independent formulation.  This will also show the relationship between the equations \reqr{coord_odes_x}{coord_odes_v}  and the equations in \cite{XueMei2014}.  The formulation we give in this section will be utilized for the remainder of the paper, as it has several advantages over our previous characterizations of the system \req{ode_system}. These advantages will be made clear as we progress.

\begin{lemma}\label{horizontal_vf}
On an $n$-dimensional Riemannian manifold there exists a canonical linear map from $\mathbb{R}^n$ to horizontal vector fields on $F_O(M)$ defined as follows (see \cite{hsu2002stochastic,XueMei2014}).

For each $v\in\mathbb{R}^n$ and $u\in F_O(M)$, define $H_v(u)\in T_u F_O(M)$ by $H_v(u)=(u(v))^h$, i.e. the horizontal lift of $u(v)\in T_{\pi(u)} M$ to $T_uF_O(M)$.

  This is a smooth horizontal vector field on $F_O(M)$.   Pushing forward to $U\times O(\mathbb{R}^n)$ via a  local trivialization $(U,\Phi)$ of $F_O(M)$ about $u$ with corresponding o.n. frame $E_\alpha$, as in Section \ref{sec:X_def1}, they have the  form
 \begin{equation}\label{H_def}
H_v(u)=v^\alpha h^\beta_\alpha E_\beta(\pi(u)) -v^\alpha h^\beta_\alpha h^\eta_\xi A_{\beta \eta}^\delta(\pi(u))\partial_{e^\delta_\xi},
\end{equation}
where $\Phi(u)=(\pi(u),h)$, $\nabla_{E_\alpha}E_\beta=A^\eta_{\alpha \beta}E_\eta$, $v^\alpha$ are the components of $v$ in the standard basis for $\mathbb{R}^n$, and $e_\alpha^\beta$ are the standard coordinates on $\mathbb{R}^{n\times n}$.  Note that the second term defines a vector field on $\mathbb{R}^{n\times n}$, but it is in fact tangent to $O(\mathbb{R}^n)$. Our expression \req{H_def} differs slightly from the one found in \cite{hsu2002stochastic}, as we have written it in an o.n. frame rather than a coordinate frame.

If $e_\alpha$ is the standard basis for $\mathbb{R}^n$ we will let $H_\alpha\equiv H_{e_\alpha}$. Therefore, $H_v=v^\alpha H_\alpha$  for any $v\in \mathbb{R}^n$, where we employ the summation convention. 

Under right multiplication by $g\in O(\mathbb{R}^n)$, these vector fields satisfy
\begin{align}
(R_g)_*(H_v(u))=H_{g^{-1}v}(u g).
\end{align}
\end{lemma}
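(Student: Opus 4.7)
The plan is to establish the three claims of the lemma in order: smoothness via the coordinate formula \req{H_def}, the coordinate formula itself by specializing Lemma \ref{X_coords_lemma}, and the equivariance under $R_g$ by a direct geometric argument.

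For the coordinate formula \req{H_def}, I would bypass a direct computation of horizontal lifts and instead reuse the derivation already done for $X$. Specifically, apply Lemma \ref{X_coords_lemma} with $V\equiv 0$. In that case the integral curve $(x(t),h(t),v(t))$ of $X$ starting at $(\pi(u),h,v)$ is, by construction in Section \ref{sec:X_def1}, the parallel transport of the frame $u$ along the geodesic with initial velocity $u(v)$, with $v(t)\equiv v$ constant since $\dot v^\alpha=(h^{-1})^\alpha_\beta V^\beta=0$. Thus the curve $t\mapsto(\tau(t),v)$ is horizontal in $F_O(M)$ and projects to the geodesic, so its tangent at $0$ is precisely $(u(v))^h=H_v(u)$. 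Reading off the first two summands of \req{X_coords} (the third vanishes as $V\equiv 0$) gives exactly \req{H_def}. Smoothness then follows because the right-hand side of \req{H_def} is polynomial in $(h,v)$ with coefficients smooth in $\pi(u)$; one also needs to observe that the second summand, a priori a vector on $\mathbb{R}^{n\times n}$, is tangent to $O(\mathbb{R}^n)$, which is immediate from the fact that the curve $h(t)$ obtained by parallel transport remains orthogonal (the connection is metric-compatible and parallel transport preserves inner products, as used already in Section \ref{sec:X_def1}).

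For the equivariance $(R_g)_*(H_v(u))=H_{g^{-1}v}(ug)$, I would argue coordinate-freely in three short steps. First, the horizontal distribution on $F_O(M)$ is $R_g$-invariant for $g\in O(\mathbb{R}^n)$ since the Levi-Civita connection, being a principal connection on the orthogonal frame bundle, has $R_g$-invariant connection form; hence $(R_g)_*H_v(u)$ is horizontal at $ug$. Second, $\pi\circ R_g=\pi$, so
\begin{equation}
\pi_*\bigl((R_g)_*H_v(u)\bigr)=\pi_*H_v(u)=u(v).
\end{equation}
Third, $u(v)=(ug)(g^{-1}v)$, so $(R_g)_*H_v(u)$ is the unique horizontal lift of $(ug)(g^{-1}v)$ at $ug$, which by definition is $H_{g^{-1}v}(ug)$.

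The only nontrivial step is the identification of $H_v$ with the truncation of $X$: one must verify that for $V\equiv 0$ the component $\dot v^\alpha$ indeed vanishes and that the integral curve of the resulting vector field in $F_O(M)$ is genuinely horizontal, both of which are built into the derivation in Lemma \ref{X_coords_lemma}. Beyond that, all three claims are essentially bookkeeping, and no independent computation with connection forms or horizontal lifts is required.
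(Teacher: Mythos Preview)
The paper does not give a proof of this lemma; it is stated as a standard fact with references to \cite{hsu2002stochastic,XueMei2014}, the only novelty being that \req{H_def} is written in an o.n.\ frame rather than a coordinate frame. Your proposal supplies a complete and correct argument, and does so in a way that is nicely internal to the paper: rather than computing horizontal lifts from scratch via the connection form, you observe that the $F_O(M)$-component of the curve $\eta(t)=(\tau(t),v(t))$ built in Section~\ref{sec:X_def1} is, by construction, horizontal (parallel transport of a frame) and projects to $\dot x(0)=u(v)$, hence $\tau'(0)=H_v(u)$ regardless of $V$; setting $V\equiv 0$ in Lemma~\ref{X_coords_lemma} then makes the identification with the first two summands of \req{X_coords} transparent. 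This recycling of Lemma~\ref{X_coords_lemma} is exactly what the paper later exploits in Proposition~\ref{X_H_relation_prop}, so your route is in the intended spirit and avoids importing external computations. Your equivariance argument is the standard one-line proof and is correct. One cosmetic point: when you write ``the curve $t\mapsto(\tau(t),v)$ is horizontal in $F_O(M)$'' you mean ``the curve $\tau(t)$ is horizontal in $F_O(M)$''; the pair $(\tau(t),v)$ lives in $N$, not $F_O(M)$.
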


\begin{remark}
The implied summations in $v^\alpha h_{\alpha}^\beta$, $v^\alpha H_\alpha$, etc., are summations over components in the standard basis for $\mathbb{R}^n$.  The $\alpha$'s here are not tensor indices on $M$, $TM$, or $F_O(M)$ and do not transform under change of coordinates or frame.  This is in contrast with the index $\beta$ in $h_\alpha^\beta$, which does transform under a change of the o.n. frame $E_\beta$.  We will occasionally revisit this point going forward for emphasis.
\end{remark}

The horizontal vector fields \req{H_def} can be used to relate geodesic motion and parallel transport on $M$ to a flow on the frame bundle.
\begin{lemma}
Let $u\in F_O(M)$ and $v\in\mathbb{R}^n$.  Let $\tau$ be the integral curve of $H_v$ starting at $u$.  Then $x\equiv \pi\circ\tau$ is the geodesic starting at $\pi(u)$ with initial velocity $u(v)$ and for any $w\in\mathbb{R}^n$, $\tau(t) w$ is  parallel transported along $x(t)$.
\end{lemma}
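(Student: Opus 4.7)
The plan is to reduce the statement to the $V\equiv 0$ case of the preceding corollary characterizing integral curves of $X$. Comparing the coordinate expression \req{H_def} for $H_v$ with the formula \req{X_coords} for $X$, one sees that when $V\equiv 0$ the vector field $X$ at a point $(u,v)\in N$ reduces in the $F_O(M)$-direction to exactly $H_v(u)$ and has vanishing $\mathbb{R}^n$-component by \req{coord_odes_v}. Consequently, if $\tau(t)$ is the integral curve of $H_v$ starting at $u$, then the curve $\eta(t)=(\tau(t),v)\in N$, with its $\mathbb{R}^n$-component held constant, is an integral curve of $X$ (with $V=0$) starting at $(u,v)$.

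I would then invoke the preceding corollary on $\eta$. It yields that $x\equiv \pi\circ\tau$ satisfies $\nabla_{\dot{x}}\dot{x}=0$ with $x(0)=\pi(u)$ and $\dot{x}(0)=u(v)$, so $x$ is the geodesic with the claimed initial conditions, and simultaneously that $U_\alpha(t)\equiv \tau(t)e_\alpha$ is parallel transported along $x$. For an arbitrary $w=w^\alpha e_\alpha\in\mathbb{R}^n$, linearity of parallel transport then gives
\[
\tau(t)w \;=\; w^\alpha \tau(t)e_\alpha \;=\; w^\alpha U_\alpha(t),
\]
which is precisely the parallel translate of $u(w)=\tau(0)w$ along $x$, establishing the second claim.

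Since the substantive computational work is already contained in Lemma \ref{X_coords_lemma}, I do not anticipate a significant obstacle. The only point requiring care is the bookkeeping stressed in the Remark preceding the lemma: the $\mathbb{R}^n$-indices carried by $v$, $w$, and the standard basis $e_\alpha$ are \emph{not} tensor indices on $F_O(M)$ and must be kept distinct from the frame indices on $h^\alpha_\beta$ when matching the coordinate expressions for $H_v$ and $X$. Alternatively, one could argue directly without passing through $N$: the two summands in \req{H_def} give, for an integral curve of $H_v$, exactly the coupled system $\dot{x}^j=(\Lambda^{-1})^j_\alpha h^\alpha_\beta v^\beta$ and $\dot{h}^\alpha_\beta=-h^\eta_\beta A^\alpha_{\xi\eta} h^\xi_\delta v^\delta$, which are recognized as the geodesic equation for $x$ and the parallel transport equation for the frame $\tau(t)e_\alpha$ respectively, reproducing the derivations of \req{gamma_dot_eq} and \req{g_dot_eq} in the proof of Lemma \ref{X_coords_lemma}.
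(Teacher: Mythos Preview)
Your argument is correct, but it takes a different route from the paper's proof. You reduce the lemma to the already-established corollaries characterizing integral curves of $X$ on $N=F_O(M)\times\mathbb{R}^n$: set $V\equiv 0$, observe that the $\mathbb{R}^n$-component of $X$ vanishes by \req{coord_odes_v}, so $(\tau(t),v)$ with $v$ constant is an integral curve of $X$, and then invoke the corollary to read off the geodesic and parallel-transport conclusions. The paper instead argues directly and more geometrically: since $H_v$ is a horizontal vector field, the integral curve $\tau$ is a horizontal curve in $F_O(M)$, and horizontality of a curve in the frame bundle is equivalent to $\tau(t)w$ being parallel in $TM$ for every $w$; this immediately gives the parallel-transport statement. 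Then a one-line coordinate computation from \req{H_def} shows $\dot{x}(t)=\tau(t)v$, so the tangent vector is one of the parallel-transported vectors, whence $x$ is a geodesic. Your approach has the virtue of reusing Lemma~\ref{X_coords_lemma} without further work, while the paper's approach highlights the intrinsic link between horizontality and parallel transport and avoids the detour through $N$. The alternative you sketch at the end---reading off the geodesic and parallel-transport ODEs directly from the two summands of \req{H_def}---is essentially the paper's argument minus the explicit invocation of the horizontal/parallel equivalence.
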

\begin{proof}
$\tau$ is a horizontal curve in $F_O(M)$ iff $\tau(w)$ is horizontal in $TM$ for any $w\in\mathbb{R}^n$.  In a vector bundle, horizontal and parallel transported are synonymous. Hence $\tau(t)w$ is parallel transported along $x=\pi\circ \tau$. Therefore, to prove $x(t)$ is the claimed geodesic it suffices to show $\dot{x}=\tau(v)$.

In a local trivialization $\Phi$, $\Phi(\tau(t))=(x(t),h(t))$. Hence, using \req{H_def}, we have
\begin{align}
\dot{x}(t)=v^\alpha h^\beta_\alpha(t) E_\beta(\pi(\tau(t)))=\tau(t) v.
\end{align}

\end{proof}
Uniqueness of geodesics, parallel transport, and integral curves then gives the following.
\begin{lemma}
Let $x\in M$, $u$ be a frame at $x$, and $v\in\mathbb{R}^n$.  Let $x(t)$ be the geodesic starting at $x$ with initial velocity $u(v)$.  Let $e_\alpha$ be the standard basis for $\mathbb{R}^n$ and $U_\alpha$ be the parallel translates of $u(e_\alpha)$ along $x(t)$.  Let $\tau(t)$ be the corresponding section of $F_O(M)$, i.e. $\tau(t) e_\alpha=U_\alpha(t)$.  Then $\tau$ is the integral curve of $H_v$ starting at $u$.
\end{lemma}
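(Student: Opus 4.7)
The plan is to deduce this statement as a direct consequence of the preceding lemma together with uniqueness theorems. The key observation is that a frame $\tau(t) \in F_O(M)$ is completely determined by the $n$ vectors $\tau(t) e_\alpha$, so to identify two curves in $F_O(M)$ it suffices to match where they send each standard basis vector, provided they share a basepoint.

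First, I would invoke the existence and uniqueness theorem for ODEs applied to the smooth vector field $H_v$ on $F_O(M)$: let $\tilde{\tau}(t)$ denote the integral curve of $H_v$ starting at $u$. Applying the previous lemma to $\tilde{\tau}$, we learn two things: (i) $\tilde{x} \equiv \pi \circ \tilde{\tau}$ is the geodesic starting at $\pi(u) = x$ with initial velocity $u(v)$, and (ii) for every $w \in \mathbb{R}^n$, the curve $t \mapsto \tilde{\tau}(t) w$ is parallel transported along $\tilde{x}$.

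Next I would match $\tilde{\tau}$ with $\tau$ by uniqueness arguments. Uniqueness of geodesics on $(M,g)$ with prescribed initial position and velocity gives $\tilde{x} = x$. Then, for each $\alpha$, both $\tilde{\tau}(t) e_\alpha$ and $U_\alpha(t)$ are parallel transports of the same initial vector $u(e_\alpha) \in T_x M$ along the same curve $x(t)$, so uniqueness of parallel transport along a curve (a first-order linear ODE in the fiber) forces $\tilde{\tau}(t) e_\alpha = U_\alpha(t) = \tau(t) e_\alpha$. Since a linear isomorphism $\mathbb{R}^n \to T_{x(t)} M$ is determined by its action on the standard basis, we conclude $\tilde{\tau}(t) = \tau(t)$, and hence $\tau$ is indeed the integral curve of $H_v$ starting at $u$.

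I do not anticipate any genuine obstacle here; the statement is essentially the converse direction of the previous lemma and uses nothing beyond uniqueness of solutions to ODEs (for $H_v$, for geodesics, and for the parallel transport equation). The only minor point requiring care is to ensure that the $\tilde{\tau}(t)$ produced by integrating $H_v$ actually lies in $F_O(M)$ rather than in some larger frame bundle, but this is guaranteed by the fact that $H_v$ is tangent to $F_O(M)$ as noted in Lemma \ref{horizontal_vf}.
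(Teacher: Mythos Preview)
Your proposal is correct and follows exactly the approach the paper takes: the paper simply states that the lemma follows from ``uniqueness of geodesics, parallel transport, and integral curves'' applied to the preceding lemma, which is precisely the argument you have spelled out in detail.
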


We can also use the $H$'s to lift vector fields from $M$ to the frame bundle.
\begin{lemma}\label{horiz_lift_vf_formula}
Let $b$ be a smooth vector field on $M$ and $b^h$ be the horizontal lift of $b$ to $F_O(M)$. Recall that this is a smooth vector field on $F_O(M)$.  We have
\begin{align}
b^h(u)= H_{u^{-1}b(\pi(u))}(u).
\end{align}
If $R_g$ denotes right multiplication by $g\in O(\mathbb{R}^n)$ then $(R_g)_*b^h=b^h$.
\end{lemma}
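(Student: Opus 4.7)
The plan is to unfold the definitions and reduce both assertions to items already established in Lemma \ref{horizontal_vf}. Recall that by definition, the horizontal lift $b^h$ of a vector field $b$ on $M$ is the unique smooth vector field on $F_O(M)$ such that $b^h(u)$ is horizontal and $\pi_* b^h(u) = b(\pi(u))$. From Lemma \ref{horizontal_vf}, $H_v(u) = (u(v))^h$ is precisely the horizontal lift of the vector $u(v) \in T_{\pi(u)} M$ at the point $u$. The first identity is then obtained by choosing the unique $v \in \mathbb{R}^n$ that makes $u(v) = b(\pi(u))$, namely $v = u^{-1} b(\pi(u))$. For this choice, $H_{u^{-1} b(\pi(u))}(u)$ is horizontal and $\pi$-projects to $b(\pi(u))$, so uniqueness of horizontal lifts gives $b^h(u) = H_{u^{-1} b(\pi(u))}(u)$. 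Smoothness of $b^h$ then comes for free: the map $u \mapsto (u, u^{-1} b(\pi(u)))$ is smooth on $F_O(M)$, and smoothness of $(u,v) \mapsto H_v(u)$ is built into Lemma \ref{horizontal_vf} via formula \req{H_def}.

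For the right-invariance, the shortest route is to apply the already-established transformation law $(R_g)_*(H_v(u)) = H_{g^{-1} v}(u g)$ from Lemma \ref{horizontal_vf} to the representation just obtained. Using $\pi \circ R_g = \pi$ and $(ug)^{-1} = g^{-1} u^{-1}$, one computes
\begin{align*}
(R_g)_* b^h(u) &= (R_g)_* H_{u^{-1} b(\pi(u))}(u) = H_{g^{-1} u^{-1} b(\pi(u))}(u g) \\
&= H_{(u g)^{-1} b(\pi(u g))}(u g) = b^h(u g),
\end{align*}
so $(R_g)_* b^h = b^h$ as vector fields. Alternatively, one may appeal to the abstract fact that the horizontal distribution of a principal bundle with connection is right-invariant: $(R_g)_* b^h(u)$ is horizontal and, since $\pi \circ R_g = \pi$, it projects to $b(\pi(u)) = b(\pi(u g))$, whence equality with $b^h(u g)$ by uniqueness of horizontal lifts.

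No step here constitutes a genuine obstacle; the proof is almost a tautology once Lemma \ref{horizontal_vf} is invoked. The only point requiring mild care, flagged by the remark immediately preceding this lemma, is to keep in mind that the $\mathbb{R}^n$-indices appearing in the argument $u^{-1} b(\pi(u))$ of $H$ do \emph{not} transform under changes of orthonormal frame on $M$, so there is no index-matching ambiguity in defining $u \mapsto H_{u^{-1} b(\pi(u))}(u)$ as a global object on $F_O(M)$.
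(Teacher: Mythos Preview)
Your proof is correct and essentially matches the paper's. The only minor difference is in the second assertion: the paper argues directly that $(R_g)_* b^h$ is horizontal (since $(R_g)_*$ preserves horizontal subspaces) and projects to $b$ (since $\pi\circ R_g=\pi$), then invokes uniqueness of horizontal lifts---which is exactly your ``alternative'' route---whereas your primary route goes through the explicit transformation law $(R_g)_*H_v(u)=H_{g^{-1}v}(ug)$ from Lemma~\ref{horizontal_vf}. Both are equally short and amount to the same thing.
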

\begin{proof}
To prove the first assertion, by the definition of $H$,
\begin{align}
H_{u^{-1}b(\pi(u))}(u)= (u(u^{-1}b(\pi(u))))^h=(b(\pi(u)))^h=b^h(u).
\end{align}
As for the second,
\begin{align}
\pi_*( ((R_g)_* b^h)(u))=(\pi\circ R_g)_* b^h(u g^{-1})=b(\pi(ug^{-1}))=b(\pi(u)).
\end{align}
$(R_g)_*$ preserves the horizontal subspaces, hence $(R_g)_*b^h$  is the horizontal lift of $b$.
\end{proof}

\begin{lemma}
Let $b$ be a smooth vector field on $M$. If $\tau$ is an integral curve of $b^h$ starting at $u$ then $x\equiv\pi\circ\tau$ is an integral curve of $b$ starting at $\pi(u)$ and for any $v\in \mathbb{R}^n$, $\tau(t)v$ is the parallel translate of $u(v)$ along $x(t)$.

Conversely, if $x(t)$ is an integral curve of $b$ starting at $\pi(u)$ and $U_\alpha(t)$ are the parallel translates of $u(e_\alpha)$ along $x(t)$ then $\tau(t)$ defined by $\tau(t)e_\alpha=U_\alpha(t)$ is the integral curve of $b^h$ starting at $u$.
\end{lemma}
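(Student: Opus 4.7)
The plan is to exploit two basic facts about horizontal lifts: (i) the horizontal lift $b^h$ is $\pi$-related to $b$, so integral curves of $b^h$ project under $\pi$ to integral curves of $b$; and (ii) a smooth curve $\tau$ in $F_O(M)$ is horizontal (that is, $\dot\tau(t)$ lies in the horizontal subspace at $\tau(t)$ for every $t$) if and only if, for every fixed $w\in\mathbb{R}^n$, the curve $t\mapsto\tau(t)w$ in $TM$ is parallel transported along $\pi\circ\tau$. Fact (ii) is the standard equivalence between horizontality in the principal bundle $F_O(M)$ and parallel transport in its associated vector bundle $TM\simeq F_O(M)\times_{O(\mathbb{R}^n)}\mathbb{R}^n$.

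For the forward direction, assume $\tau$ is an integral curve of $b^h$ starting at $u$ and set $x\equiv\pi\circ\tau$. Applying $\pi_*$ to $\dot\tau(t)=b^h(\tau(t))$ and using (i) yields $\dot x(t)=b(x(t))$ together with $x(0)=\pi(u)$, so $x$ is an integral curve of $b$. Since $b^h$ takes values in the horizontal subspace, $\tau$ is horizontal, and (ii) gives that $\tau(t)v$ is the parallel translate of $\tau(0)v=u(v)$ along $x(t)$ for every $v\in\mathbb{R}^n$.

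For the converse, define $\tau(t)$ by $\tau(t)e_\alpha=U_\alpha(t)$. Because the Levi-Civita connection is metric compatible, parallel transport preserves orthonormality, so $\tau(t)\in F_O(M)$; and $\tau(0)e_\alpha=u(e_\alpha)$ forces $\tau(0)=u$. Each $\tau(\cdot)e_\alpha=U_\alpha$ is parallel along $x$, hence by linearity of parallel transport so is $\tau(\cdot)w$ for every $w\in\mathbb{R}^n$, and then (ii) tells us that $\tau$ is horizontal. Thus $\dot\tau(t)$ is the horizontal lift of $\pi_*\dot\tau(t)=\dot x(t)=b(x(t))$, which by Lemma \ref{horiz_lift_vf_formula} equals $b^h(\tau(t))$. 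Therefore $\tau$ is an integral curve of $b^h$ starting at $u$.

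The only nontrivial step is fact (ii); I would either invoke it directly from \cite{kobayashi2009foundations}, or give a short coordinate-based derivation using the explicit local formula \req{H_def} for the $H_v$, noting that $\dot\tau(t)$ lies in the span of the $H_v(\tau(t))$ exactly when the matrix components $h^\beta_\alpha(t)$ of $\tau$ in a local trivialization satisfy the same ODE (driven by the connection coefficients $A^\alpha_{\beta\eta}$) as the components of a parallel-transported frame along $x$. Everything else in the argument is direct bookkeeping combined with the uniqueness of integral curves and of parallel transport.
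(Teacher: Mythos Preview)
Your proposal is correct and follows essentially the same approach as the paper's proof: in both directions you use that $b^h$ is $\pi$-related to $b$ to handle the base curve, and the equivalence between horizontality of a curve in $F_O(M)$ and parallel transport of the associated frame in $TM$ to handle the frame. The paper's proof is terser and invokes this horizontality/parallel-transport equivalence implicitly, whereas you isolate it as fact~(ii) and comment on how one might justify it; but the logical structure is the same.
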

\begin{proof}
Suppose $\tau$ is an integral curve of $b^h$ starting at $u$.  Then
\begin{align}
\dot{x}=\pi_*\dot{\tau}=\pi_* b^h(\tau)=b(x).
\end{align}
So $x(t)$ is an integral curve of $b$.  $\tau$ has horizontal tangent vector for all $t$, hence $\tau(t) v$ is parallel in $TM$.

Conversely, if $x(t)$ is an integral curve of $b$ starting at $\pi(u)$ and $U_\alpha(t)$ are the parallel translates of $u(e_\alpha)$ then $\tau(t)$ defined by  $\tau(t)e_\alpha=U_\alpha(t)$ is a smooth horizontal curve in $F_O(M)$ and $\tau(t_0)=u$.  We have
\begin{align}
\pi_*\dot{\tau}=\dot{x}=b(x).  
\end{align}
$\dot{\tau}$ is horizontal, so 
\begin{align}
\dot{\tau}=(b(x))^h=b^h(\tau).
\end{align}
\end{proof}

The prior lemmas show that geodesic motion, parallel transport, and flows on $M$ can all be related to flows on $F_O(M)$.  Therefore, it shouldn't come as a surprise that the vector field $X$, \req{X_def}, whose integral curves characterize the trajectories of our deterministic system, can be written in terms of the $H_v$'s and the forcing, $V$.
\begin{proposition}\label{X_H_relation_prop}
Let $N= F_O(M)\times \mathbb{R}^n$ and $(u,v)\in N$.  Then $X_{(u,v)}$, defined by \req{X_def}, is given by
\begin{align}\label{X_H_relation}
X_{(u,v)}=(H_v(u),u^{-1}V(u(v)))
\end{align}
where we have identified $T\mathbb{R}^n$ with $\mathbb{R}^n$.
\end{proposition}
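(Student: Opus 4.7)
The plan is to prove this by inspecting separately the two components of $\dot\eta(0)$ coming from the definition \eqref{X_def}, namely $X_{(u,v)} = \dot\eta(0)$ where $\eta(t) = (\tau(t), v(t))$. Since $N = F_O(M)\times\mathbb{R}^n$, the tangent vector $\dot\eta(0)$ naturally decomposes as $(\dot\tau(0), \dot v(0))\in T_uF_O(M)\oplus T_v\mathbb{R}^n$, and under the standard identification $T_v\mathbb{R}^n\simeq\mathbb{R}^n$ it suffices to show $\dot\tau(0)=H_v(u)$ and $\dot v(0) = u^{-1}V(u(v))$.

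For the first coordinate, I would use the construction of $\tau$ directly: by definition $\tau(t)$ is the section of $F_O(M)$ along $x(t)$ obtained by parallel translating each basis vector $u(e_\alpha)$ via the Levi--Civita connection. Hence for every $w\in\mathbb{R}^n$ the curve $\tau(t)w$ is parallel along $x(t)$, which is exactly the condition that $\tau$ be a horizontal curve in $F_O(M)$. Consequently $\dot\tau(0)$ lies in the horizontal subspace at $u$. Since moreover $\pi\circ\tau = x$ and $\dot x(0) = u(v)$, we have $\pi_*\dot\tau(0) = u(v)$, so $\dot\tau(0)$ is the horizontal lift $(u(v))^h$ at $u$. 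By the definition of $H$ in Lemma \ref{horizontal_vf}, this is precisely $H_v(u)$.

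For the second coordinate, the computation has already been carried out in the proof of Lemma \ref{X_coords_lemma}: the remark following that proof shows that the curve $v(t) = \tau(t)^{-1}\dot x(t)$ satisfies the coordinate-free identity
\begin{equation*}
\dot v(t) = \tau(t)^{-1} V(\tau(t)v(t)).
\end{equation*}
Evaluating at $t=0$ and using $\tau(0) = u$, $v(0) = v$ gives $\dot v(0) = u^{-1}V(u(v))$, which is the second component claimed in \eqref{X_H_relation}.

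As a sanity check and for an alternative route, one could verify the identity entirely in the local trivialization $\Phi\times id$: the coordinate expression for the pushforward of $X$ is given by \eqref{X_coords}, and the coordinate expression for $H_v$ is given by \eqref{H_def}; a direct comparison shows that the first two terms of \eqref{X_coords} (the $\partial_j$ and $\partial_{e_\alpha^\beta}$ pieces) coincide with the pushforward of $H_v(u)$, while the remaining $\partial_{v^\alpha}$ piece is exactly the image of $u^{-1}V(u(v))\in\mathbb{R}^n$ under the identification $T\mathbb{R}^n\simeq\mathbb{R}^n$. I do not anticipate any real obstacle here, since all the nontrivial analytic work (in particular the cancellation of Christoffel terms that produces the clean expression for $\dot v$) has already been handled in Lemma \ref{X_coords_lemma}; the proposition amounts to a geometric repackaging of that lemma in terms of the canonical horizontal vector fields.
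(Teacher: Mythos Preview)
Your proposal is correct. Your ``alternative route'' at the end is in fact exactly the paper's proof: the paper simply writes out $X$ in a local trivialization using \eqref{X_coords}, and observes that the first two terms match the coordinate form \eqref{H_def} of $H_v(u)$ while the third is $u^{-1}V(u(v))$. Your primary argument --- showing $\dot\tau(0)$ is horizontal because $\tau$ parallel transports every vector, then identifying it as $H_v(u)$ via $\pi_*\dot\tau(0)=u(v)$, and reading off $\dot v(0)$ from the coordinate-free identity at the end of Lemma~\ref{X_coords_lemma} --- is a slightly more intrinsic route that avoids writing out the coordinate expressions again; it works because the analytic content (the Christoffel cancellation) is already packaged in that lemma. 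Both approaches are short and essentially equivalent here.
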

\begin{proof}
In a local trivialization induced by an o.n. frame $E_\alpha$, \req{X_coords} implies that
\begin{align}
X|_{(x,h,v)}=&h_\beta^\alpha v^\beta E_\alpha(x)-h_\alpha^\eta A_{\delta\eta}^\beta(x)h_\xi^\delta v^\xi\partial_{e^\beta_\alpha}+(h^{-1})^\alpha_\beta V^\beta(u(v))\partial_{v^\alpha}.
\end{align}
The proposition then follows from \req{H_def}.
\end{proof}
The geometric significance of the $H_v$'s will make \req{X_H_relation} simpler to work with than our initial definition of the vector field $X$, \req{X_def}.

Proposition \ref{X_H_relation_prop} implies that the deterministic dynamics of the system of interest, \req{ode_system}, lifted to  $N=F_O(M)\times \mathbb{R}^n$, are given by
\begin{align}\label{determ_system}
\dot{u}= H_v(u),\hspace{2mm} \dot{v}=u^{-1}V(u(v)), \hspace{2mm} (u(t_0),v(t_0))=(u_0,v_0).
\end{align} We want to emphasize that $v$ is defined in terms of the dynamical frame $u$, and not in reference to any choice of coordinates on $M$ or $F_O(M)$.  In other words, the components $v^\alpha$ of $v$ in the standard basis for $\mathbb{R}^n$ are the components of the particle's velocity {\em in its own parallel transported frame}. They are not tied to a particular coordinate system on $M$ or $F_O(M)$ and do not transform under coordinate changes on either  space.

\section{Randomly Perturbed Geodesic Flow With Forcing}\label{sec:SDE_manifold}
In this section we will  show how we couple noise to the system \req{determ_system} to obtain a stochastic differential equation on $N$. 
\subsection{Stochastic Differential Equations on Manifolds}\label{SDE_manifold_sec}
First we recall the definition and some basic properties of semimartingales and stochastic differential equations  on manifolds.    The definition and lemmas in this section are adapted from \cite{hsu2002stochastic}, but we repeat them here for completeness. The general theory outlined in this section does not require a Riemannian metric on $M$.

\begin{definition}
Let $M$ be an $n$-dimensional smooth manifold, $(\Omega,\mathcal{F},\mathcal{F}_t,P)$ be a filtered probability space satisfying the usual conditions \cite{karatzas2014brownian}, and $X_t$ be a continuous adapted $M$-valued process.  $X$ is called an  $M$-valued continuous semimartingale if $f\circ X_t$ is an $\mathbb{R}$-valued semimartingale for all $f\in C^\infty(M)$.   We will only deal with continuous semimartingales, so we drop the adjective continuous from now on.
\end{definition}
Note that, by It\^o's formula,  if $M=\mathbb{R}^n$ then this agrees with the usual definition.

\begin{definition}\label{def:manifold_SDE}
Let $V$ be a $k$-dimensional vector space and $Z_t$ be a $V$-valued semimartingale, called the driving process.  Let $M$ be a smooth manifold, $X_t$ be an $M$-valued semimartingale, and $\sigma$ be a smooth section of $TM\bigotimes V^*$. We say that $X_t$ is a solution to the SDE
\begin{align}\label{manifold SDE}
X_t=X_{t_0}+\int_{t_0}^t \sigma(X_s)\circ dZ_s
\end{align}
if
\begin{align}\label{manifold_SDE_def}
f(X_t)=f(X_{t_0})+\int_{t_0}^t\sigma(X_s) [f]\circ dZ_s
\end{align}
$P$-a.s. for all $f\in C^\infty(M)$, where $\int ...\circ dZ_s$ denotes the stochastic integral in the Stratonovich sense.  We use the notation $Y[f]$ to denote the smooth function one obtains by operating with some vector field, $Y$, on a smooth function, $f$, and in the stochastic integral we contract over the $V^*$ and $V$ factors from $\sigma[f]$ and $Z$ respectively.  We will equivalently write the SDE \req{manifold SDE} in differential notation
\begin{align}\label{manifold SDE}
dX_t=\sigma(X_s)\circ dZ_s.
\end{align}
\end{definition}
Note that when $M$ is a finite dimensional vector space, this definition agrees with the usual one (in the Stratonovich sense). Using a basis for $V$ and the dual basis for $V^*$ to write the contraction in \req{manifold_SDE_def} as a sum over components in these bases we arrive at a formula analogous to the definition in \cite{hsu2002stochastic} (page 21). However, we find it useful to use the above formulation in terms of a vector space and its dual in order to justify use of the summation convention over contracted indices.

The Stratonovich integral is used in \req{manifold_SDE_def} in order to make the definition  diffeomorphism-invariant, as captured by the following Stratonovich calculus variant of the It\^o change-of-variables formula (see \cite{hsu2002stochastic} pp.20-21).
\begin{lemma}\label{SDE_diffeo_lemma}
Let $X_t$ be an $M$-valued semimartingale that satisfies the SDE 
\begin{align}
X_t=X_{t_0}+\int_{t_0}^t \sigma(X_s)\circ dZ_s,
\end{align}
$N$ be another smooth manifold, and $\Phi:M\rightarrow N$ be a diffeomorphism.  Then $\tilde X\equiv \Phi\circ X$ is an $N$-valued semimartingale and satisfies the SDE
\begin{align}
\tilde X_t=\tilde X_{t_0}+\int_{t_0}^t (\Phi_*\sigma)(\tilde X_s)\circ dZ_s
\end{align}
where $\Phi_*$ denotes the pushforward.
\end{lemma}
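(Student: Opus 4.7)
The plan is to verify the SDE for $\tilde X$ directly from Definition~\ref{def:manifold_SDE}, which characterizes solutions by their action on arbitrary smooth test functions. The key observation is that, since $\Phi$ is a diffeomorphism, test functions on $N$ correspond bijectively to test functions on $M$ via pullback $\tilde f\mapsto \tilde f\circ\Phi$, so the desired SDE on $N$ reduces to the hypothesized SDE on $M$.

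First I would observe that $\tilde X=\Phi\circ X$ is automatically an $N$-valued semimartingale: for any $\tilde f\in C^\infty(N)$, the composition $f:=\tilde f\circ\Phi$ lies in $C^\infty(M)$, so $\tilde f(\tilde X_t)=f(X_t)$ is a real-valued semimartingale by the hypothesis that $X_t$ is an $M$-valued semimartingale.

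Next, to verify the integral equation, I would fix an arbitrary $\tilde f\in C^\infty(N)$ and set $f=\tilde f\circ\Phi$. Applying the given SDE for $X$ to the test function $f$ yields
\begin{align}
\tilde f(\tilde X_t)=f(X_t)=f(X_{t_0})+\int_{t_0}^t\sigma(X_s)[f]\circ dZ_s.
\end{align}
The substitution $f(X_{t_0})=\tilde f(\tilde X_{t_0})$ is immediate. For the integrand, the standard naturality identity for the pushforward of a vector field under a diffeomorphism,
\begin{align}
(\Phi_*Y)(\Phi(x))[\tilde f]=Y(x)[\tilde f\circ\Phi],
\end{align}
applied fiberwise in the $V^*$ factor (which $\Phi_*$ leaves untouched, since $\Phi_*$ acts only on the $TM$ factor of $\sigma\in\Gamma(TM\otimes V^*)$), gives $(\Phi_*\sigma)(\tilde X_s)[\tilde f]=\sigma(X_s)[f]$. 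Substituting this back produces precisely the claimed SDE for $\tilde X$.

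The main---and essentially only---subtlety is that one must rely on the Stratonovich form of the stochastic integral throughout. It is precisely the Stratonovich version that obeys the ordinary chain rule without a second-order correction term, so the pointwise geometric identity $\sigma(X_s)[\tilde f\circ\Phi]=(\Phi_*\sigma)(\tilde X_s)[\tilde f]$ translates directly into an equality of stochastic integrands. This is exactly the reason Definition~\ref{def:manifold_SDE} is formulated with Stratonovich integrals in the first place, as emphasized in the remarks preceding the lemma; with It\^o integrals, a quadratic-variation correction from the Hessian of $\Phi$ would intrude and the clean diffeomorphism invariance would fail.
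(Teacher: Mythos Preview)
Your argument is correct and is exactly the standard verification: pull back test functions along $\Phi$, use the definitional SDE for $X$, and invoke the naturality identity $(\Phi_*\sigma)(\Phi(x))[\tilde f]=\sigma(x)[\tilde f\circ\Phi]$ to identify the integrands. The paper does not supply its own proof of this lemma; it merely states the result and refers to \cite{hsu2002stochastic}, pp.~20--21, where the same argument appears.
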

Definition \ref{manifold_SDE_def} can be restated in terms of the It\^o integral as follows, similar to p.23 of \cite{hsu2002stochastic}.
\begin{lemma}\label{Ito_SDE_lemma}
 $X_t$ is a solution to  the SDE
\begin{align}
X_t=X_{t_0}+\int_{t_0}^t \sigma(X_s)\circ dZ_s
\end{align}
iff
\begin{align}
f(X_t)=f(X_{t_0})+\int_{t_0}^t\sigma(X_s) [f] dZ_s+\frac{1}{2}\int_{t_0}^t \sigma_\alpha(X_s)[\sigma_\beta[f]] d[Z^\alpha,Z^\beta]_s
\end{align}
for all $f\in C^\infty(M)$ where the summation convention is employed and the sum is over the components in any basis, dual basis pair for $V$ and $V^*$.  This is another manifestation of the  It\^o formula for the stochastic differential of the composition of a smooth function with a semimartingale.  
\end{lemma}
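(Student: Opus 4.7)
The plan is to reduce the claim to the classical Stratonovich-to-It\^o conversion for $\mathbb{R}$-valued semimartingales, applied termwise to the scalar identity that defines a solution on the manifold. Fix a basis $\{e_\alpha\}_{\alpha=1}^k$ of $V$ with dual basis $\{e^\alpha\}$, set $\sigma_\alpha(x)\equiv\sigma(x)(e_\alpha)\in T_xM$, and write $Z_t=Z^\alpha_t e_\alpha$. With this notation, Definition~\ref{def:manifold_SDE} says that $X$ is a solution iff for every $f\in C^\infty(M)$,
\[ f(X_t)=f(X_{t_0})+\int_{t_0}^t \sigma_\alpha(X_s)[f]\circ dZ^\alpha_s. \]

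First, I apply the classical scalar Stratonovich-to-It\^o formula to each of the $k$ integrals on the right to obtain
\[ \int_{t_0}^t \sigma_\alpha(X_s)[f]\circ dZ^\alpha_s=\int_{t_0}^t \sigma_\alpha(X_s)[f]\,dZ^\alpha_s+\tfrac12\bigl[(\sigma_\alpha[f])(X_\cdot),Z^\alpha\bigr]_t, \]
where $(\sigma_\alpha[f])(X_\cdot)$ is the real-valued semimartingale obtained by composing the smooth function $\sigma_\alpha[f]\in C^\infty(M)$ with $X$. To evaluate this bracket, I bootstrap by applying the defining identity a second time, with $f$ replaced by the smooth function $g_\alpha\equiv\sigma_\alpha[f]$, obtaining a Stratonovich representation of $g_\alpha(X_\cdot)$. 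Since the It\^o--Stratonovich correction is a bounded-variation process and therefore does not contribute to the covariation with $Z^\alpha$, this representation gives
\[ \bigl[(\sigma_\alpha[f])(X_\cdot),Z^\alpha\bigr]_t=\int_{t_0}^t \sigma_\beta(X_s)[\sigma_\alpha[f]]\,d[Z^\beta,Z^\alpha]_s, \]
and substituting back yields the claimed It\^o-form identity.

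The converse direction is symmetric: starting from the It\^o identity for every $f$, the same bracket computation reintroduces the correction term and reproduces the Stratonovich identity, so the two formulations are equivalent. The main difficulty here is notational rather than analytical: one must keep the $V$-component indices (summation dummies over a fixed basis of $V$, not transforming as tangent-space indices on $M$) separate from the differentiations acting on $TM$, and verify that the final expression $\sigma_\alpha[\sigma_\beta[f]]\,d[Z^\alpha,Z^\beta]_s$ is independent of the chosen basis for $V$; this basis-independence is an instance of contracting the intrinsic section $\sigma\otimes\sigma$ of $(TM\otimes TM)\otimes(V^*\otimes V^*)$ against the intrinsic $V\otimes V$-valued bounded-variation process $[Z,Z]$. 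A minor technical point is that the recursive application of Definition~\ref{def:manifold_SDE} to $g_\alpha$ is legitimate because $\sigma$ and $f$ smooth imply $g_\alpha\in C^\infty(M)$.
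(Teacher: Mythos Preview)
Your argument is correct and is the standard way to establish this equivalence. The paper does not actually give a proof of this lemma; it simply cites p.~23 of Hsu's \emph{Stochastic Analysis on Manifolds} and states the result. Your write-up supplies exactly the computation one would expect: apply the scalar Stratonovich--It\^o conversion to each $Z^\alpha$-integral, then bootstrap by feeding $g_\alpha=\sigma_\alpha[f]$ back into the defining identity to evaluate the bracket $[g_\alpha(X_\cdot),Z^\alpha]$, using that bounded-variation corrections drop out of covariations. The converse direction is indeed symmetric once one observes that the It\^o identity for every $f$ already forces $f(X)$ to be a real semimartingale for all smooth $f$, so $X$ is an $M$-valued semimartingale and the Stratonovich integrals are well-defined. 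Your remark on basis independence is a useful sanity check but not strictly needed for the proof, since the lemma as stated allows ``any basis, dual basis pair.''
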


\subsection{Coupling to Noise}
 For the remainder of this paper we will assume $M$  is compact, connected, and without boundary.  Note that this also implies $F_O(M)$ is compact and without boundary.    In this section we describe the coupling of the dynamical system \req{determ_system} to noise, and hence we must also assume $M$ is equipped with a Riemannian metric.

 Let $W$ be an $\mathbb{R}^k$-valued Wiener process and $\sigma:F_O(M)\rightarrow \mathbb{R}^{n\times k}$ be smooth.  We are interested in the following SDE for $(u,v)\in N= F_O(M)\times \mathbb{R}^n$,
\begin{align}
u_t=&u_0+\int_{t_0}^t H_{v_s}(u_s)ds,\label{SDE_u_m}\\
v_t=&v_0+\frac{1}{m}\int_{t_0}^t u_s^{-1}V(u_sv_s)ds+\frac{1}{m}\int_{t_0}^t \sigma(u_s)\circ dW_s.\label{SDE_v_m}
\end{align}
Note that we have replaced $V$ in  \req{determ_system}  with $\frac{1}{m} V$ (where now $V$ is independent of $m$), making the dependence on particle mass, $m$, explicit.

To connect with  Definition \ref{def:manifold_SDE}, one must view $H_v(u)$, $\frac{1}{m} u^{-1}V(uv)$, and $\frac{1}{m}\sigma(u)$ as sections of $TN\bigotimes (\mathbb{R}^{k+1})^*$ (identifying $T_{(u,v)}N$ with $T_uF_O(M)\bigoplus \mathbb{R}^n$), and use the driving $\mathbb{R}^{k+1}$-valued semimartingale  $Z_t=(t,W_t)$. Alternatively, one could view the above objects as $k+1$ vector fields on $N$ and include sums over indices, as done in \cite{hsu2002stochastic}, but for economy of notation, we wish to avoid employing indices and explicit summations when possible.

Because the Wiener process only couples to the equation for $v$, which is a process with values in the second factor of the product space $N= F_O(M)\times \mathbb{R}^n$, a solution of the SDE \reqr{SDE_u_m}{SDE_v_m} on the manifold $N$ in the sense of \req{manifold_SDE_def} is equivalent to the existence of an $N$-valued semimartingale, $(u,v)$, such that the first component is pathwise $C^1$ and pathwise satisfies the ODE
\begin{align}\label{u_ode}
\dot{u}_t=H_{v_t}(u_t), \hspace{2mm} u(t_0)=u_0
\end{align}
and the second component satisfies the SDE on $\mathbb{R}^n$
\begin{align}
v_t=v_0+\frac{1}{m}\int_{t_0}^t u_s^{-1}V(u_sv_s)ds+\frac{1}{m}\int_{t_0}^t \sigma(u_s) dW_s.
\end{align}
Note that $u$ has locally bounded variation, so the choice of stochastic integral in the second equation is not significant.  We use the It\^o notation here.  We emphasize that, while the machinery of Section \ref{SDE_manifold_sec} is not needed in order to formulate the above system, it will be required when we pass to the limit $m\rightarrow 0$.  

For the remainder of the paper we will make the following assumption.
\begin{assumption}\label{drag_assump}
 We will assume that the deterministic vector field $V$ is the sum of a position dependent force term and a position dependent linear drag term
\begin{align}\label{assump_V_form}
V(w)=F(x)-\gamma(x) w, \hspace{2mm}w\in T_xM,\hspace{2mm} x=\pi(w)\in M,
\end{align}  
where  $F$ is a smooth vector field on $M$ and $\gamma$ is a smooth $\binom{1}{1}$ tensor field on $M$. We will {\em not} assume that the force field $F$ comes from a potential.  

As stated in Section \ref{sec:result_summary}, we will also assume that the symmetric part of $\gamma$, $\gamma^s=\frac{1}{2}(\gamma+\gamma^T)$, has eigenvalues bounded below by a constant $\gamma_1>0$  on all of $M$. We again emphasize that this coercivity assumption will be crucial for the momentum decay estimates of Section  \ref{sec:p_zero}.
\end{assumption}

 In the following it will be useful to denote $u^{-1}F(\pi(u))$ by $F(u)$ and $u^{-1}\gamma(\pi(u)) u$ by $\gamma(u)$, letting the context distinguish between the different notations.  These are smooth $\mathbb{R}^n$ and $\mathbb{R}^{n\times n}$-valued functions on $F_O(M)$ respectively. With these definitions, the SDE \req{SDE_u_m} - \req{SDE_v_m}  becomes
\begin{align}
u_t=&u_0+\int_{t_0}^t H_{v_s}(u_s)ds,\label{SDE_u_m2}\\
v_t=&v_0+\frac{1}{m}\int_{t_0}^t [F(u_s)-\gamma(u_s)v_s]ds+\frac{1}{m}\int_{t_0}^t \sigma(u_s) dW_s.\label{SDE_v_m2}
\end{align}
Given $k$ vector fields, $\sigma_\alpha(x)$, on $M$, these induce corresponding noise coefficients on the frame bundle, $\sigma(u)$,  given by
\begin{align}\label{k_vf_case}
\sigma(u)e_\alpha=u^{-1}\sigma_\alpha(\pi(u)).
\end{align}
Additionally, one is often interested in the case where $k=n$ and $\sigma(u)$  comes from a $\binom{1}{1}$-tensor field on a $M$, denoted $\sigma(x)$, in the same manner as $\gamma(u)$ i.e.
\begin{align}\label{sigma_tensor}
\sigma(u)=u^{-1}\sigma(\pi(u))u.
\end{align}
For most of this work we keep the discussion general and deal only with $\sigma(u)$.

The following lemmas will be useful.
\begin{lemma}
Let $\gamma^s$ denote the symmetric part of $\gamma$.  Then
\begin{align}
\gamma^s(u)=u^{-1}\gamma^{s}(\pi(u))u
\end{align}

\end{lemma}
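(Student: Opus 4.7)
The plan is to exploit the defining property of an orthogonal frame, namely that $u \in F_O(M)$ is by definition a linear isometry $u : (\mathbb{R}^n, \langle\cdot,\cdot\rangle) \to (T_{\pi(u)}M, g_{\pi(u)})$. In particular, the inverse $u^{-1}$ coincides with the adjoint $u^\ast$ of $u$ with respect to the two inner products, which is the essential input that makes conjugation by $u^{-1}$ commute with taking symmetric parts.

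More concretely, I would proceed as follows. First, I would compute the transpose of $\gamma(u) = u^{-1}\gamma(\pi(u))u$ with respect to the standard inner product on $\mathbb{R}^n$, by checking on arbitrary $a,b \in \mathbb{R}^n$:
\begin{align}
\langle \gamma(u) a, b\rangle_{\mathbb{R}^n}
&= \langle u^{-1}\gamma(\pi(u)) u a,\, b\rangle_{\mathbb{R}^n}
 = \langle \gamma(\pi(u)) u a,\, u b\rangle_{g} \notag \\
&= \langle u a,\, \gamma^T(\pi(u)) u b\rangle_{g}
 = \langle a,\, u^{-1}\gamma^T(\pi(u)) u b\rangle_{\mathbb{R}^n},
\end{align}
where $\gamma^T(\pi(u))$ denotes the metric-adjoint of the $\binom{1}{1}$ tensor $\gamma(\pi(u))$ on $T_{\pi(u)}M$. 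This identifies $\gamma(u)^T = u^{-1}\gamma^T(\pi(u)) u$. Second, averaging $\gamma(u)$ and $\gamma(u)^T$ and pulling the common factors $u^{-1}(\cdot)u$ outside gives
\begin{align}
\gamma^s(u) = \tfrac{1}{2}\bigl(\gamma(u) + \gamma(u)^T\bigr) = u^{-1}\,\tfrac{1}{2}\bigl(\gamma(\pi(u)) + \gamma^T(\pi(u))\bigr)\,u = u^{-1}\gamma^s(\pi(u))u,
\end{align}
which is the claim.

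There is no real obstacle here; the only subtlety is being careful that the two uses of the symbol $\gamma^T$, one as the adjoint of a linear operator on $T_xM$ with respect to $g_x$ and one as the ordinary matrix transpose of an element of $\mathbb{R}^{n\times n}$, are intertwined precisely by orthogonality of $u$. The argument therefore rests entirely on $u^{-1} = u^\ast$, and no further structure of $\gamma$ beyond its tensorial type is used. In the same way one sees that conjugation by $u$ intertwines the decomposition into symmetric and antisymmetric parts, so the lemma is really a statement about $O(\mathbb{R}^n)$-equivariance of the pullback of $\binom{1}{1}$ tensors to the orthogonal frame bundle.
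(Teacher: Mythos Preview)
Your proof is correct and follows essentially the same approach as the paper: both arguments reduce the claim to showing $\gamma(u)^T = u^{-1}\gamma^T(\pi(u))u$ via a bilinear-form computation that exploits the isometry property of $u \in F_O(M)$. The paper's computation is the transpose of yours (pairing $y$ against $\gamma^T(u)x$ rather than $\gamma(u)a$ against $b$), but the content is identical.
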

\begin{proof}
We are done if we can show
\begin{align}
\gamma^T(u)=u^{-1}\gamma^T(\pi(u))u.
\end{align}
Letting $\cdot$ be the Euclidean inner product on $\mathbb{R}^n$, for $x,y\in\mathbb{R}^n$ we have
\begin{align}
&y\cdot\gamma^T(u)x=(\gamma(u)y)\cdot x=(u^{-1}\gamma(\pi(u))uy)\cdot x=g(\gamma(\pi(u))uy,u x)\\
=&g(uy,\gamma^T(\pi(u))u x)=y\cdot (u^{-1}\gamma^T(\pi(u))u x).\notag
\end{align}
This holds for all $x,y$ and so the proof is complete.
\end{proof}
\begin{corollary}\label{exp_gamma_bounds}
The eigenvalues of $\gamma^s(u)$ and $\gamma^s(\pi(u))$ are the same.  In particular, by Assumption \ref{drag_assump}, the eigenvalues of $\gamma^s(u)$ are also bounded below by $\gamma_1>0$ for all $u\in F_O(M)$.

This also implies that the real parts of the eigenvalues of $\gamma(u)$ are bounded below by $\gamma_1$ for all $u\in F_O(M)$. In addition,
\begin{align}
\|e^{-t\gamma(u)}\|\leq e^{-\gamma_1 t},\hspace{ 2mm} \|e^{-t\gamma(u)^T}\|\leq e^{-\gamma_1 t},
\end{align}
for any $u\in F_O(M)$ and any $t\geq 0$  (see, for example, p.86 of \cite{teschl2012ordinary}).
\end{corollary}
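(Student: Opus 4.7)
The statement breaks into three claims and I would handle them in order.

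For the first claim, the previous lemma already gives $\gamma^s(u) = u^{-1}\gamma^s(\pi(u))u$. Because $u\in F_O(M)$ is an orthogonal frame, $u:\mathbb{R}^n\to T_{\pi(u)}M$ is a linear isometry, so this identity is a similarity transformation (conjugation by an invertible linear map). Similarity preserves the characteristic polynomial, hence the eigenvalues. Assumption~\ref{drag_assump} then transfers the lower bound $\gamma_1$ on the eigenvalues of $\gamma^s(\pi(u))$ over to $\gamma^s(u)$ at every $u\in F_O(M)$.

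For the second claim, I would use the standard numerical-range argument. Let $\lambda\in\mathbb{C}$ be an eigenvalue of $\gamma(u)$ with (possibly complex) eigenvector $z=x+iy\neq 0$. Taking the Euclidean Hermitian inner product of $\gamma(u)z=\lambda z$ with $z$ gives $\bar z^{\,T}\gamma(u)z = \lambda\|z\|^2$. Since $\gamma(u)$ is real, the real part of the left side equals $x^T\gamma(u)x+y^T\gamma(u)y = x^T\gamma^s(u)x+y^T\gamma^s(u)y$, which by the first claim is at least $\gamma_1(\|x\|^2+\|y\|^2)=\gamma_1\|z\|^2$. Dividing by $\|z\|^2>0$ yields $\mathrm{Re}(\lambda)\geq\gamma_1$.

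For the semigroup bounds, rather than invoking a black-box citation, I prefer a self-contained energy estimate. Fix $x\in\mathbb{R}^n$ and set $\phi(t)=\|e^{-t\gamma(u)}x\|^2$. Differentiating,
\begin{align}
\phi'(t) = -2\bigl\langle e^{-t\gamma(u)}x,\, \gamma(u)e^{-t\gamma(u)}x\bigr\rangle = -2\bigl\langle e^{-t\gamma(u)}x,\, \gamma^s(u)e^{-t\gamma(u)}x\bigr\rangle \leq -2\gamma_1\phi(t),
\end{align}
using that for real vectors $\langle w,\gamma(u)w\rangle=\langle w,\gamma^s(u)w\rangle$ together with claim one. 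Gr\"onwall's inequality gives $\phi(t)\leq e^{-2\gamma_1 t}\|x\|^2$ for $t\geq 0$, hence $\|e^{-t\gamma(u)}x\|\leq e^{-\gamma_1 t}\|x\|$, and taking the supremum over unit $x$ yields the operator-norm bound. The bound for $\gamma(u)^T$ is identical because $(\gamma(u)^T)^s=\gamma^s(u)$, so the same computation applies verbatim.

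I do not anticipate any genuine obstacle here: the main subtlety is being careful that the ``conjugation'' in $u^{-1}\gamma^s(\pi(u))u$ really is a similarity of matrices (which requires $u$ to be interpreted as a linear isomorphism $\mathbb{R}^n\to T_{\pi(u)}M$ with an orthonormal identification on the target), and that the bilinear-form manipulation in the eigenvalue step is valid for complex eigenvectors of a real matrix.
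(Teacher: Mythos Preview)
Your proof is correct. The paper does not actually give a proof of this corollary: it states the claims as immediate consequences of the preceding lemma (which furnishes $\gamma^s(u)=u^{-1}\gamma^s(\pi(u))u$) and cites \cite{teschl2012ordinary} for the exponential decay bounds. Your argument fills in all of these steps explicitly---the similarity argument for the eigenvalues, the numerical-range estimate for the real parts, and the energy/Gr\"onwall computation for the semigroup bound---which is more self-contained than the paper's treatment but establishes exactly the same facts by the standard route.
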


\begin{lemma}\label{m_existence_lemma}
For each $(u_0,v_0)\in N$ there exists a unique globally defined solution $(u_t,v_t)$, $t\in[0,\infty)$ to the SDE \ref{SDE_u_m2} - \ref{SDE_v_m2} that pathwise satisfies the initial conditions.  It can be chosen so that pathwise, $t \mapsto u_t$ is $C^1$ and satisfies the ODE \ref{u_ode}.  We emphasize that the global in time existence relies on the compactness of $M$.
\end{lemma}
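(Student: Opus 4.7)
My plan is to establish existence and uniqueness of a local solution in coordinate charts on $N$, patch together a unique maximal solution, and then use compactness of $F_O(M)$ together with a Lyapunov-type estimate on $|v|^2$ to rule out finite-time explosion.

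First I would cover $F_O(M)$ by finitely many local trivializations $\Phi_i:\pi^{-1}(U_i)\rightarrow U_i\times O(\mathbb{R}^n)$ obtained from orthonormal frames on $U_i\subset M$ as in Section~\ref{sec:ode_in_coords}. Combined with the Euclidean factor $\mathbb{R}^n$, these give local charts on $N$ in which, by the explicit coordinate formula \req{H_def} for $H_v$ and smoothness of $F$, $\gamma$, $\sigma$ on $F_O(M)$, the coupled system \reqr{SDE_u_m2}{SDE_v_m2} becomes a standard SDE with smooth, hence locally Lipschitz, coefficients on an open subset of Euclidean space. Classical It\^o theory yields a unique strong solution up to the first exit time from the chart. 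Patching across overlapping charts---and noting via Lemma~\ref{SDE_diffeo_lemma} that the equation is diffeomorphism invariant, while $u$ has locally bounded variation so the Stratonovich/It\^o distinction is irrelevant for the $u$-component---produces a unique $N$-valued maximal strong solution on $[t_0,\zeta)$, where $\zeta$ is the explosion time.

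Since $F_O(M)$ is compact, finiteness of $\zeta$ can only be caused by $\limsup_{t\uparrow\zeta}|v_t|=\infty$. To rule this out, I would apply It\^o's formula to $|v_t|^2$; using $v\cdot\gamma(u)v=v\cdot\gamma^s(u)v\geq\gamma_1|v|^2$ from Corollary~\ref{exp_gamma_bounds} together with uniform bounds on $F$ and $\sigma$ coming from compactness of $F_O(M)$, one obtains
\begin{equation*}
d|v_t|^2\leq\Bigl(-\frac{2\gamma_1}{m}|v_t|^2+\frac{2}{m}\|F\|_\infty|v_t|+\frac{1}{m^2}\|\sigma\|_\infty^2\Bigr)dt+\frac{2}{m}v_t\cdot\sigma(u_t)\,dW_t.
\end{equation*}
Stopping at $\zeta_R\wedge T$, where $\zeta_R$ is the first time $|v|$ leaves $\{|v|\leq R\}$, taking expectations, and applying Gronwall's inequality yields $E[|v_{\zeta_R\wedge T}|^2]\leq C(T,|v_0|)$ uniformly in $R$. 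Markov's inequality then gives $P(\zeta_R\leq T)\rightarrow 0$ as $R\rightarrow\infty$, so $\zeta=\infty$ almost surely.

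The pathwise $C^1$ regularity of $u$ is immediate: because $v$ has continuous paths, the integrand $H_{v(s)}(u(s))$ in \req{SDE_u_m2} is continuous in $s$, so $u$ satisfies \req{u_ode} classically. The main obstacle is the coupling between the two components---the drift of the $u$-equation grows linearly in $v$, while the $v$-equation has coefficients depending on $u$---but compactness of $F_O(M)$, which uniformly bounds the $u$-dependent coefficients, combined with strict dissipativity $\gamma^s\geq\gamma_1 I$, which supplies a restoring force in $v$, is precisely what decouples the estimates and yields global existence.
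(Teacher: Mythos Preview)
Your argument is correct, but it takes a genuinely different route from the paper's. The paper observes that, under Assumption~\ref{drag_assump}, the drift $\frac{1}{m}(F(u)-\gamma(u)v)$ is affine in $v$ and the diffusion $\frac{1}{m}\sigma(u)$ is independent of $v$; by compactness of $F_O(M)$ both are therefore linearly bounded in $(u,v)$ uniformly. The paper then simply embeds $F_O(M)$ compactly in some $\mathbb{R}^l$ and invokes the standard global existence and uniqueness theorem for $\mathbb{R}^{l+n}$-valued SDEs with linearly bounded coefficients (Karatzas--Shreve, Theorem~5.2.9), bypassing any chart-patching or Lyapunov estimate. Your approach instead builds a maximal solution from local charts and rules out explosion via the dissipativity $\gamma^s\geq\gamma_1 I$ and an It\^o-formula bound on $|v_t|^2$. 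What the paper's route buys is brevity: linear growth alone suffices, and the dissipativity bound---which the paper reserves for the later momentum-decay estimates---is not needed here. What your route buys is robustness: it avoids the Whitney-type embedding and would continue to work for drifts that are not affine in $v$ but still satisfy a one-sided dissipativity condition. One minor point: the paper also notes that one may need to modify the solution on a null set so that $t\mapsto u_t$ is $C^1$ and satisfies \req{u_ode} for \emph{every} $\omega$, not just almost surely; you should mention this as well.
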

\begin{proof}
The diffusion term for the SDE is independent of $v$ and the drift is an affine function of $v$, so compactness of $F_O(M)$ implies that the drift and diffusion are linearly bounded in $v$, uniformly in $u$.  Therefore, by embedding $F_O(M)$ compactly in some $\mathbb{R}^l$, one can use the results on global existence and uniqueness of solutions to a vector-valued SDE with linearly bounded coefficients (see for example \cite{karatzas2014brownian}, Theorem 5.2.9) to prove existence of a unique globally defined solution to the SDE that pathwise satisfies the initial conditions. One can modify the result on a measure zero set to ensure that  the $u$ component is also a $C^1$-function of $t$ and satisfies the ODE \req{u_ode} everywhere, not just a.s.
\end{proof}

Often one is only interested in the evolution of the position, $x_t=\pi(u_t)$, and velocity, $\dot{x}_t$,  degrees of freedom.  The SDE \req{SDE_u_m2} implies that $\dot{x}_t=u_tv_t$ and, pathwise, $u(t)$ is horizontal.  In particular, for any $w\in\mathbb{R}^n$, $u(t)w$ is parallel transported along $x(t)$, the same as for the deterministic system.  The following lemma captures the dependence of the solution on the choice of an initial frame in the case where $\sigma$ is given by \req{sigma_tensor}.
\begin{lemma}
Let $h\in O(\mathbb{R}^n)$ and $(u_t,v_t)$ be the solution to \reqr{SDE_u_m2}{SDE_v_m2}  corresponding to the initial condition $(u_0,v_0)$.  Suppose  $\sigma(u)$ is obtained from $\sigma(x)$ as in $\req{sigma_tensor}$. Then
\begin{align}
(\tilde u_t,\tilde v_t)\equiv (u_t h,h^{-1}v_t)
\end{align}
is the solution to \reqr{SDE_u_m2}{SDE_v_m2}   with the initial condition $(u_0 h,h^{-1}v_0)$ and the Wiener process $W_t$ replaced by the Wiener process $\tilde W_t=h^{-1} W_t$.
\end{lemma}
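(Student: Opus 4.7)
The plan is to verify three things in turn: that $\tilde W_t = h^{-1}W_t$ is again an $\mathbb{R}^n$-valued Wiener process; that the pathwise ODE \eqref{u_ode} is preserved under $u_t \mapsto u_t h$ with $v_t$ replaced by $h^{-1}v_t$; and that the It\^o SDE for $v_t$ transforms into the corresponding equation for $h^{-1}v_t$ driven by $\tilde W_t$. Since the initial conditions $(u_0 h, h^{-1}v_0)$ match what is claimed, uniqueness (Lemma \ref{m_existence_lemma}) then identifies the transformed process with the solution started from $(u_0 h, h^{-1}v_0)$ and driven by $\tilde W_t$.

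For the first point, $h\in O(\mathbb{R}^n)$ is a constant orthogonal matrix, so $\tilde W_t = h^{-1}W_t$ is a continuous martingale with quadratic covariation $[\tilde W^\alpha, \tilde W^\beta]_t = (h^{-1}(h^{-1})^T)^{\alpha\beta} t = \delta^{\alpha\beta}t$, hence a standard Wiener process by L\'evy's characterization. For the frame equation, I will simply apply $(R_h)_*$ to $\dot u_t = H_{v_t}(u_t)$. Because $R_h$ is smooth and $u_t$ is pathwise $C^1$, we get $\dot{\tilde u}_t = (R_h)_* H_{v_t}(u_t)$, and by the equivariance formula in Lemma \ref{horizontal_vf} this equals $H_{h^{-1}v_t}(u_t h) = H_{\tilde v_t}(\tilde u_t)$, which is exactly \eqref{SDE_u_m2} for $(\tilde u_t,\tilde v_t)$.

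For the velocity equation, the key observations are the three equivariance identities
\[
F(uh) = h^{-1}F(u), \qquad \gamma(uh) = h^{-1}\gamma(u)h, \qquad \sigma(uh) = h^{-1}\sigma(u)h,
\]
which follow immediately from $F(u)=u^{-1}F(\pi(u))$, $\gamma(u)=u^{-1}\gamma(\pi(u))u$, and the assumption \eqref{sigma_tensor}. Multiplying \eqref{SDE_v_m2} by $h^{-1}$ on the left (pulling $h^{-1}$ through the deterministic and It\^o integrals, which is legitimate since $h^{-1}$ is constant) yields
\[
\tilde v_t = h^{-1}v_0 + \frac{1}{m}\int_{t_0}^t \bigl(h^{-1}F(u_s) - h^{-1}\gamma(u_s)v_s\bigr)\,ds + \frac{1}{m}\int_{t_0}^t h^{-1}\sigma(u_s)\,dW_s.
\]
Using the identities above one rewrites $h^{-1}F(u_s) = F(\tilde u_s)$, $h^{-1}\gamma(u_s)v_s = \gamma(\tilde u_s)\tilde v_s$, and $h^{-1}\sigma(u_s)\,dW_s = \sigma(\tilde u_s)h\,dW_s = \sigma(\tilde u_s)\,d\tilde W_t$, since $d\tilde W_s = h^{-1}dW_s$ so that $h\,d\tilde W_s = dW_s$ (equivalently, $h^{-1}\sigma(u_s) = \sigma(\tilde u_s)h^{-1}$ after using $h^{-1}\sigma(u_s)h = \sigma(\tilde u_s)$, and then $\sigma(\tilde u_s)h^{-1}dW_s = \sigma(\tilde u_s)d\tilde W_s$). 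This gives \eqref{SDE_v_m2} for $(\tilde u_t,\tilde v_t,\tilde W_t)$ with initial value $\tilde v_0 = h^{-1}v_0$.

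No step is really an obstacle: the only thing to be careful about is the bookkeeping between the two ways of viewing the diffusion coefficient (as a matrix acting on the Wiener increment, versus via the Stratonovich formulation of Definition \ref{def:manifold_SDE}). Since $u_t$ has locally bounded variation, $\sigma(u_t)$ is of bounded variation in the relevant sense as well, so Stratonovich and It\^o agree for the noise term, and the linearity of both integrals in the integrand makes the manipulations routine.
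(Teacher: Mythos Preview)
Your proof is correct and follows essentially the same route as the paper's. The only difference is packaging: the paper applies the change-of-variables Lemma~\ref{SDE_diffeo_lemma} to the diffeomorphism $\Phi(u,v)=(uh,h^{-1}v)$ of $N$ and then simplifies the pushforwards $(R_h)_*H_v$ and $(L_{h^{-1}})_*$ on the drift and diffusion, whereas you treat the $u$-ODE and the $v$-SDE separately by hand using the same equivariance identities $(R_h)_*H_v(u)=H_{h^{-1}v}(uh)$, $F(uh)=h^{-1}F(u)$, $\gamma(uh)=h^{-1}\gamma(u)h$, $\sigma(uh)=h^{-1}\sigma(u)h$. Your added verification that $\tilde W_t$ is a Wiener process via L\'evy's theorem is a nice explicit touch that the paper leaves implicit.
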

\begin{proof}
$(\tilde u_t,\tilde v_t)$ is  a semimartingale starting at $(u_0 h,h^{-1}v_0)$.   The map 
\begin{equation}
\Phi(u,v)= (u h,h^{-1}v)
\end{equation}
 is a diffeomorphism of $N$ and therefore Lemma \ref{SDE_diffeo_lemma} implies 
\begin{align}   
&d\tilde u_t= (R_{h})_*(H_{v(t)}(u_t)) dt,\hspace{2mm} d\tilde v_t\\
=&\frac{1}{m}(L_{h^{-1}})_*(F(u_t)-\gamma(u_t)v_t)dt+\frac{1}{m} (L_{h^{-1}})_*\sigma(u_s) dW_s\notag
\end{align}
where $R$ and $L$ denote right and left multiplication respectively.  Using the definitions of $F(u)$, $\gamma(u)$ and $\sigma(u)$ this simplifies to 
\begin{align}   
d\tilde u_t=H_{\tilde v_t}(\tilde u_t) dt,\hspace{2mm} d\tilde v_t=\frac{1}{m}((F(\tilde u_t)-\gamma(\tilde u_t)\tilde v_t)dt+\frac{1}{m} \sigma(\tilde u_t) d\tilde W_t.
\end{align}
\end{proof}

\section{Rate of Decay of the Momentum}\label{sec:p_zero}
We now begin our investigation of the properties of the solutions of the SDE \ref{SDE_u_m2} - \ref{SDE_v_m2}  in the small mass limit by proving that the momentum process, $p_t=mv_t$, converges to zero in several senses as $m\rightarrow 0$.  To this end, we will introduce a superscript to the solutions, $(u^m_t,v^m_t)$, of \reqr{SDE_u_m2}{SDE_v_m2}  to denote the corresponding value of the mass.  The non-random initial conditions, $u_0,v_0$, will be fixed independently of $m$.

More specifically, the momentum process will be shown to converge to zero with a rate dependent on powers of $m$. This convergence is shown with respect to the uniform $L^p$-metric on continuous paths (Prop. \ref{E_sup_p_bound}), $L^p$ metric (Prop. \ref{sup_E_p_bound}), and as a stochastic integral with respect to the momentum (Prop. \ref{quad_p_int_bound}). To prove these propositions, the equation for $v^m_t$, \req{SDE_v_m2}, is solved in terms of $u^m_t$. Estimates are made on the Lebesgue integrals much like in the ordinary differential equation case. The stochastic integral term is rewritten  in order to mirror the ODE case as closely as possible and then broken into small intervals which can be controlled using the Burkholder-Davis-Gundy inequalities.

First we give some useful lemmas. 
\subsection{Some Lemmas}

\begin{lemma}\label{matrix_exp_decay_bound}
Let $X_t=X_0+M_t+A_t$ be a continuous $\mathbb{R}^k$-valued semimartingale on $(\Omega,\mathcal{F},\mathcal{F}_t,P)$ with local martingale component $M_t$ and locally bounded variation component $A_t$. Let $V\in L^1_{loc}(A)\cap L^2_{loc}(M)$ be $\mathbb{R}^{n\times k}$-valued and  let $B(t)$ be a continuous $\mathbb{R}^{n\times n}$-valued adapted process.  Let $\Phi(t)$ be the adapted $\mathbb{R}^{n\times n}$-valued  $C^1$ process that pathwise solves
the initial value problem (IVP) 
\begin{align}
\dot{\Phi}(t)=B(t)\Phi(t),\hspace{2mm} \Phi(0)=I.
\end{align}
Then we have the $P$-a.s. equalities
\begin{align}
&\Phi(t)\int_0^t\Phi^{-1}(s) V_s dX_s=\int_0^t V_s dX_s+\Phi(t)\int_0^t   \Phi^{-1}(s) B(s)\left(\int_0^s V_r dX_r\right)ds\label{Phi_int_equality}\\
=&\Phi(t)\int_0^t V_s dX_s-\Phi(t)\int_0^t \Phi^{-1} (s)B(s) \left(\int_s^t V_r dX_r\right) ds\label{Phi_int_equality2}
\end{align}
for all $t$.

If the eigenvalues of the symmetric part of $B$, $B^s=\frac{1}{2}(B+B^T)$, are bounded above by $-\alpha$ for some $\alpha>0$ then for every $T\geq\delta>0$ we have the $P$-a.s. bound
\begin{align}\label{Phi_int_bound2}
&\sup_{t\in[0,T]}\|\Phi(t)\int_0^t\Phi^{-1}(s) V_s dX_s\|\leq (1+\frac{4}{ \alpha}\sup_{s\in[0,T]}\|B(s)\|)\left(e^{-\alpha\delta}\sup_{t\in[0,T]}\|\int_0^t V_r dX_r\|\right. \notag\\
&\left.+\max_{k=0,...,N-1}\sup_{t\in [k\delta, (k+2)\delta]}\|\int_{k\delta}^tV_rdX_r\|\right)
\end{align}
where $N=\max\{k\in\mathbb{Z}:k\delta<T\}$.  Here and in the following we use the $\ell^2$ norm on every $\mathbb{R}^k$.
\end{lemma}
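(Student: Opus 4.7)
The proof naturally separates into three pieces: the two algebraic identities, the exponential decay of the propagator $\Phi(t)\Phi^{-1}(s)$, and the final splitting of the integral.

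For the identities, I would set $Y_t = \int_0^t V_s\,dX_s$ and note that because $\Phi$ (and hence $\Phi^{-1}$) is pathwise $C^1$ and of bounded variation, the Stratonovich/It\^o integration-by-parts formula for semimartingales applied to the product $\Phi^{-1}(t) Y_t$ has vanishing covariation term and reduces to the classical product rule. Using $\frac{d}{ds}\Phi^{-1}(s) = -\Phi^{-1}(s) B(s)$, this yields
\begin{equation*}
\Phi^{-1}(t) Y_t = \int_0^t \Phi^{-1}(s) V_s\,dX_s - \int_0^t \Phi^{-1}(s) B(s) Y_s\,ds,
\end{equation*}
and multiplying through by $\Phi(t)$ gives identity \eqref{Phi_int_equality}. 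To obtain \eqref{Phi_int_equality2}, write $Y_s = Y_t - (Y_t - Y_s)$ inside the last integral of identity (1); the $Y_t$-piece factors out of the integral as a constant (in $s$) vector and uses $\int_0^t \Phi^{-1}(s) B(s)\,ds = I - \Phi^{-1}(t)$, so $\Phi(t)\int_0^t \Phi^{-1}(s) B(s)\,ds \cdot Y_t = (\Phi(t)-I)Y_t$, and the $Y_t$'s combine with the leading $Y_t$ of identity (1) to give $\Phi(t) Y_t$.

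For the decay estimate, I would set $\Psi(t,s) = \Phi(t)\Phi^{-1}(s)$, which satisfies $\partial_t \Psi = B(t)\Psi$ with $\Psi(s,s) = I$. Applying this to an arbitrary vector $\xi$ and computing $\frac{d}{dt}\|\Psi(t,s)\xi\|^2 = 2\langle B(t)\Psi\xi, \Psi\xi\rangle \leq 2\lambda_{\max}(B^s)\|\Psi\xi\|^2 \leq -2\alpha\|\Psi\xi\|^2$ and integrating yields $\|\Psi(t,s)\| \leq e^{-\alpha(t-s)}$ for $t \geq s$.

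Finally, the bound \eqref{Phi_int_bound2} follows by applying identity \eqref{Phi_int_equality2} and handling the two terms separately. The term $\|\Psi(t,0) Y_t\|$ is bounded by $e^{-\alpha \delta}\sup_{r \leq T} \|Y_r\|$ when $t \geq \delta$, and by the $k=0$ contribution to the max-term when $t < \delta$, so one of the two summands on the right of \eqref{Phi_int_bound2} always dominates it. For the remaining integral term $\int_0^t \Psi(t,s) B(s)(Y_t - Y_s)\,ds$, I would split at $(t-\delta)_+$. On $s \in [(t-\delta)_+, t]$ the increment $Y_t - Y_s$ lies inside some interval $[k\delta, (k+2)\delta]$ (with $k = \lfloor(t-\delta)_+/\delta\rfloor$), so $\|Y_t - Y_s\| \leq 2\max_k \sup_{r \in [k\delta,(k+2)\delta]}\|\int_{k\delta}^r V\,dX\|$; combined with $\int_{(t-\delta)_+}^t e^{-\alpha(t-s)}\,ds \leq 1/\alpha$, this contributes the max-term with factor $2\sup\|B\|/\alpha$. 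On $s \in [0, t-\delta]$, use the crude bound $\|Y_t - Y_s\| \leq 2\sup_r\|Y_r\|$ together with $\int_0^{t-\delta} e^{-\alpha(t-s)}\,ds \leq e^{-\alpha \delta}/\alpha$ to get the $e^{-\alpha\delta}\sup\|Y\|$ term with factor $2\sup\|B\|/\alpha$. Taking the sup over $t\in[0,T]$ and combining constants yields \eqref{Phi_int_bound2}. The only real technical point is being careful with the $t<\delta$ edge case in the splitting and with the indexing of the max; everything else is clean book-keeping.
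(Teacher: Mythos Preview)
Your argument is correct and follows the same overall strategy as the paper: integration by parts for the two identities, the symmetric-part bound for $\|\Phi(t)\Phi^{-1}(s)\|\le e^{-\alpha(t-s)}$, and the splitting of the integral at $t-\delta$ into a ``far'' piece controlled by $e^{-\alpha\delta}\sup\|Y\|$ and a ``near'' piece controlled by the maximum over blocks of length $2\delta$.

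The one genuine difference is in how you obtain the first identity. The paper integrates by parts to get
\[
\Phi(t)\int_0^t\Phi^{-1}(s)V_s\,dX_s=\int_0^tV_s\,dX_s+\int_0^tB(s)\Phi(s)\Big(\int_0^s\Phi^{-1}(r)V_r\,dX_r\Big)ds,
\]
then freezes $\omega$ and recognizes this as a linear integral equation $y(t)=r(t)+\int_0^tB(s)y(s)\,ds$ whose unique solution is $y(t)=r(t)+\Phi(t)\int_0^t\Phi^{-1}(s)B(s)r(s)\,ds$, citing \cite{kedem1981posteriori}. You instead integrate by parts directly on the product $\Phi^{-1}(t)Y_t$, which gives the identity in one line without the detour through an auxiliary integral equation; this is cleaner. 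For the second identity and the final bound your reasoning matches the paper's essentially line by line, including the use of $\int_0^t\Phi^{-1}(s)B(s)\,ds=I-\Phi^{-1}(t)$. Your caveat about the indexing of the max is apt: the formula $k=\lfloor(t-\delta)_+/\delta\rfloor$ can land on $k=N$ when $T=(N+1)\delta$ and $t=T$, but since the intervals $[k\delta,(k+2)\delta]$ overlap one can always take $k\le N-1$ instead, exactly as the paper does by parametrizing $t\in[(k+1)\delta,(k+2)\delta]$.
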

\begin{proof}
Using integration by parts, together with the fact that $\Phi$ is a process of locally bounded variation and $\dot{\Phi}(t)=B(t)\Phi(t)$, we obtain the $P$-a.s. equality
\begin{align}\label{P_as_int_eq}
\Phi(t)\int_0^t \Phi^{-1}(s)V_s dX_s=&\int_0^t V_sdX_s+\int_0^t B(s)\Phi(s) \int_0^s \Phi^{-1}(r)V_r dX_r ds
\end{align}
for all $t$.

 Fix an $\omega\in\Omega$ for which the above equality holds and consider the resulting continuous functions $r(t)=\int_0^t V_s dX_s$ and $y(t)=\Phi(t)\int_0^t\Phi^{-1}(s) V_s dX_s$.  \req{P_as_int_eq}  implies that these satisfy the integral equation
\begin{align}
y(t)=r(t)+\int_0^t B(s) y(s)ds,\hspace{2mm} y(0)=0.
\end{align}
The unique solution to this equation is \cite{kedem1981posteriori}
\begin{align}
y(t)=r(t)+\Phi(t) \int_0^t  \Phi^{-1}(s) B(s) r(s) ds.
\end{align}
This proves the first equality in \req{Phi_int_equality}.  For the second, we compute
\begin{align}
&\Phi(t)\int_0^t V_s dX_s-\Phi(t)\int_0^t \Phi^{-1} (s)B(s) \left(\int_s^t V_r dX_r\right) ds\\
=&\Phi(t)\left(\int_0^t V_s dX_s-\int_0^t \Phi^{-1} (s)B(s) \left(\int_0^t V_r dX_r -\int_0^s V_r dX_r\right) ds\right)\notag\\
=&\Phi(t)\int_0^t \Phi^{-1} (s)B(s)\left(\int_0^s V_r dX_r\right) ds+\Phi(t)\left(I-\int_0^t \Phi^{-1} (s)B(s) ds\right) \int_0^t V_r dX_r\notag\\
=&\Phi(t)\int_0^t \Phi^{-1} (s)B(s)\left(\int_0^s V_r dX_r\right) ds+\Phi(t)\left(I+\int_0^t \frac{d}{ds}\Phi^{-1} (s) ds\right) \int_0^t V_r dX_r\notag\\
=&\Phi(t)\int_0^t \Phi^{-1} (s)B(s)\left(\int_0^s V_r dX_r\right) ds+ \int_0^t V_r dX_r,\notag
\end{align}
where we have used the formula
\begin{equation}
\frac{d}{ds}\Phi^{-1}(s) = -\Phi^{-1}(s)\dot{\Phi}(s)\Phi^{-1}(s).
\end{equation}

To obtain the bound \req{Phi_int_bound2} we start from \req{Phi_int_equality2} and take the norm to find
\begin{align}
&\|\Phi(t)\int_0^t\Phi^{-1}(s) V_s dX_s\|\\
\leq &\|\Phi(t)\|\|\int_0^t V_s dX_s\|+\int_0^t \|\Phi(t)\Phi^{-1} (s)\|\|B(s)\|\|\int_s^t V_r dX_r\| ds.\notag
\end{align}
For $t\geq s$, the fundamental solution $\Phi(t) \Phi^{-1}(s)$ satisfies the bound
\begin{align}\label{fundamental_matrix}
\|\Phi(t) \Phi^{-1}(s)\|\leq e^{\int_{s}^t \lambda_{max}(r)dr}
\end{align}
where $\lambda_{max}(r)$ is the largest eigenvalue of $B^s(r)$ (see, for example, p.86 of \cite{teschl2012ordinary}).  Therefore, assuming $\lambda_{\max}\leq-\alpha<0$ gives
\begin{align}
&\|\Phi(t)\int_0^t\Phi^{-1}(s) V_s dX_s\|\\
\leq & e^{-\alpha t}\|\int_0^t V_s dX_s\|+\sup_{s\in[0,t]}\|B(s)\|\int_0^t e^{-\alpha(t-s)}\|\int_s^t V_r dX_r\| ds.\notag
\end{align}
For any $T\geq \delta>0$ we have the $P$-a.s. bounds
\begin{align}
\sup_{t\in[0,T]}e^{-\alpha t}\|\int_0^t V_s dX_s\|\leq \sup_{t\in[0,\delta]}\|\int_0^t V_s dX_s\|+e^{-\alpha \delta}\sup_{t\in[\delta,T]}\|\int_0^t V_s dX_s\|
\end{align}
and
\begin{align}
&\sup_{t\in[0,T]}\left(\sup_{s\in[0,t]}\|B(s)\|\int_0^t e^{-\alpha(t-s)}\|\int_s^t V_r dX_r\| ds\right)\notag\\
\leq &\sup_{s\in[0,T]}\|B(s)\|\left(\sup_{t\in[0,\delta]}\int_0^t e^{-\alpha(t-s)}\|\int_s^t V_r dX_r\| ds\right.\\
&\left.+\sup_{t\in[\delta,T]}\int_0^t e^{-\alpha(t-s)}\|\int_s^t V_r dX_r\| ds\right).\notag
\end{align}
The first term can be bounded as follows.
\begin{align}\label{first_term_bound}
&\sup_{t\in[0,\delta]}\int_0^t e^{-\alpha(t-s)}\|\int_s^t V_r dX_r\| ds\\
=&\sup_{t\in[0,\delta]}\int_0^t e^{-\alpha(t-s)}\|\int_0^t V_r dX_r-\int_0^s V_r dX_r\| ds\notag\\
\leq&\sup_{t\in[0,\delta]}\int_0^t e^{-\alpha(t-s)}2\sup_{0\leq \tau\leq\delta}\|\int_0^\tau V_r dX_r\|ds\notag
\leq\frac{2}{\alpha}\sup_{0\leq t\leq\delta}\|\int_0^t V_r dX_r\|.
\end{align}
In the second term we can split the integral to obtain
\begin{align}
&\sup_{t\in[\delta,T]}\int_0^t e^{-\alpha(t-s)}\|\int_s^t V_r dX_r\| ds\\
=&\sup_{t\in[\delta,T]}\left(\int_0^{t-\delta} e^{-\alpha(t-s)}\|\int_s^t V_r dX_r\| ds+\int_{t-\delta}^t e^{-\alpha(t-s)}\|\int_s^t V_r dX_r\| ds\right)\notag\\
\leq &\frac{2}{\alpha}e^{-\alpha\delta}\sup_{t\in[0,T]}\|\int_0^t V_r dX_r\| +\sup_{t\in[\delta,T]}\int_{t-\delta}^t e^{-\alpha(t-s)}\|\int_s^t V_r dX_r\| ds.\notag
\end{align}
Let $N=\max\{k\in\mathbb{Z}:k\delta<T\}$.  Then $P$-a.s.
\begin{align}\label{second_term_bound}
&\sup_{t\in[\delta,T]}\int_{t-\delta}^t e^{-\alpha(t-s)}\|\int_s^t V_r dX_r\| ds\\
\leq& \max_{k=0,...,N-1}\sup_{t\in [(k+1)\delta, (k+2)\delta]}\int_{k\delta}^{t}e^{-\alpha(t-s)}\|\int_s^tV_rdX_r\|ds\notag\\
=&  \max_{k=0,...,N-1}\sup_{t\in [(k+1)\delta, (k+2)\delta]}\int_{k\delta}^{t}e^{-\alpha(t-s)}\|\int_{k\delta}^tV_rdX_r-\int_{k\delta}^sV_rdX_r\|ds\notag\\
\leq &\frac{2}{ \alpha} \max_{k=0,...,N-1}\sup_{t\in [k\delta, (k+2)\delta]}\|\int_{k\delta}^tV_rdX_r\|.\notag
\end{align}
Combining \req{first_term_bound} and \req{second_term_bound} and using the inequality 
\begin{align}
\sup_{t\in[0,\delta]}\|\int_0^t V_s dX_s\| \leq  \max_{k=0,...,N-1}\sup_{t\in [k\delta, (k+2)\delta]}\|\int_{k\delta}^tV_rdX_r\|
\end{align}
 gives the $P$-a.s. bound
\begin{align}
&\sup_{t\in[0,T]}\|\Phi(t)\int_0^t\Phi^{-1}(s) V_s dX_s\|\leq \sup_{t\in[0,\delta]}\|\int_0^t V_s dX_s\|+e^{-\alpha \delta}\sup_{t\in[\delta,T]}\|\int_0^t V_s dX_s\|\notag\\
&+\frac{2}{ \alpha}\sup_{s\in[0,T]}\|B(s)\|\left(\sup_{t\in[0,\delta]}\|\int_0^t V_r dX_r\|+e^{-\alpha\delta}\sup_{t\in[0,T]}\|\int_0^t V_r dX_r\| \right.\notag\\
&\left.+ \max_{k=0,...,N-1}\sup_{t\in [k\delta, (k+2)\delta]}\|\int_{k\delta}^tV_rdX_r\|\right)\\
\leq& (1+\frac{4}{ \alpha}\sup_{s\in[0,T]}\|B(s)\|)\left(e^{-\alpha\delta}\sup_{t\in[0,T]}\|\int_0^t V_r dX_r\|\right.\notag\\
&\left. +\max_{k=0,...,N-1}\sup_{t\in [k\delta, (k+2)\delta]}\|\int_{k\delta}^tV_rdX_r\|\right)\notag
\end{align}
as claimed.
\end{proof}

It will also be useful to recall the following (see \cite{karatzas2014brownian}).
\begin{lemma}\label{martingale_lemma}
 If $M\in\mathcal{M}^{c,loc}$, $V\in L^2_{loc}(M)$, and $E[\int_{t_0}^t V^2d[M]_s]<\infty$ for all $t$ then $\int_{t_0}^tV_s dM_s$ is a martingale.
\end{lemma}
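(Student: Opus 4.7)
The plan is to establish this by the standard localization argument, together with a uniform integrability upgrade. By construction of the stochastic integral for $V\in L^2_{loc}(M)$, the process $N_t\equiv\int_{t_0}^tV_s\,dM_s$ is a continuous local martingale, and its quadratic variation is $[N]_t=\int_{t_0}^tV_s^2\,d[M]_s$. The hypothesis therefore says precisely that $E[[N]_t]<\infty$ for every $t$, so what needs to be shown is that a continuous local martingale with integrable quadratic variation at every time is in fact a true martingale.

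To carry this out, I would choose a localizing sequence of stopping times $\tau_n\uparrow\infty$ (for instance $\tau_n=\inf\{s\ge t_0:|N_s|+[N]_s\ge n\}$), so that each stopped process $N^{\tau_n}$ is a bounded, hence square-integrable, martingale. For fixed $t$, the Burkholder--Davis--Gundy inequality gives
\begin{equation}
E\!\left[\sup_{s\in[t_0,t]}|N^{\tau_n}_s|\right]\le C\,E\!\left[[N]_{t\wedge\tau_n}^{1/2}\right]\le C\,E\!\left[[N]_t^{1/2}\right]\le C\,E[[N]_t]^{1/2}<\infty,
\end{equation}
uniformly in $n$, where the last step uses Jensen. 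This bound provides a dominating integrable random variable (namely $\sup_{s\in[t_0,t]}|N^{\tau_n}_s|\le X$ in the $L^1$ sense, or more formally the family $\{N^{\tau_n}_s\}_{n,\,s\le t}$ is uniformly integrable).

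With uniform integrability in hand, I would pass to the limit $n\to\infty$ in the martingale identity $E[N^{\tau_n}_t\mid\mathcal{F}_s]=N^{\tau_n}_s$ for $t_0\le s\le t$, using the fact that $N^{\tau_n}_t\to N_t$ a.s.\ by continuity of $N$ and $\tau_n\uparrow\infty$, together with dominated convergence for conditional expectations. This yields $E[N_t\mid\mathcal{F}_s]=N_s$, so $N$ is a martingale. The only subtle point is the uniform $L^1$-bound on $\sup_{s}|N^{\tau_n}_s|$, which is exactly where the hypothesis $E[\int V_s^2\,d[M]_s]<\infty$ is used through BDG; everything else is routine localization and dominated convergence.
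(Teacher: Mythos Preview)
Your argument is correct. The paper itself does not supply a proof of this lemma; it is stated as a fact to be recalled, with a citation to Karatzas--Shreve. So there is no ``paper's own proof'' to compare against beyond the reference.

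One minor comment on your write-up: the step where you say ``this bound provides a dominating integrable random variable'' is slightly imprecise as stated, since a uniform $L^1$ bound on $\sup_{s\le t}|N^{\tau_n}_s|$ does not by itself produce a single dominating variable. What makes it work is that $\sup_{s\le t\wedge\tau_n}|N_s|$ is monotone increasing in $n$ to $\sup_{s\le t}|N_s|$, so monotone convergence together with your BDG bound shows $E[\sup_{s\le t}|N_s|]<\infty$, and \emph{that} is the dominating variable. Alternatively, and perhaps more in the spirit of the textbook reference, one can bypass BDG entirely: each $N^{\tau_n}$ is a square-integrable martingale with $E[(N^{\tau_n}_t)^2]=E[[N]_{t\wedge\tau_n}]\le E[[N]_t]<\infty$, so the family $\{N^{\tau_n}_t\}_n$ is bounded in $L^2$ and hence uniformly integrable, which suffices to pass to the limit in the conditional expectation. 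Either route is fine.
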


\subsection{  Limit of the Momentum Process }
 In this section we show three propositions about convergence of the momentum process $p_t^m=mv_t^m$ to zero as $m\rightarrow 0$.

\begin{proposition}\label{E_sup_p_bound}
 For any $p> 0$, $T>0$, and $0<\beta<p/2$ we have
\begin{align}
E[\sup_{t\in[0,T]} \|p_t^m\|^{p}]=O(m^{\beta})\text{ as $m\rightarrow 0$}.
\end{align}
\end{proposition}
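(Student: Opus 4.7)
My plan is to solve the SDE for $p_t^m = m v_t^m$ explicitly by variation of parameters, apply Lemma \ref{matrix_exp_decay_bound} to control the resulting stochastic convolution, bound the pieces via the Burkholder--Davis--Gundy inequality (BDG), and finally extend the bound from high moments to arbitrary $p>0$ by Lyapunov's inequality. From \req{SDE_v_m2}, the momentum satisfies
\begin{equation*}
dp_t^m = F(u_t^m)\,dt - \tfrac{1}{m}\gamma(u_t^m)\,p_t^m\,dt + \sigma(u_t^m)\,dW_t,\qquad p_0^m = m v_0.
\end{equation*}
Letting $\Phi(t)$ be the pathwise $C^1$ adapted matrix process with $\dot\Phi(t) = -\tfrac{1}{m}\gamma(u_t^m)\Phi(t)$ and $\Phi(0)=I$, a standard integrating-factor calculation gives
\begin{equation*}
p_t^m = \Phi(t)p_0^m + \Phi(t)\int_0^t \Phi^{-1}(s)F(u_s^m)\,ds + \Phi(t)\int_0^t \Phi^{-1}(s)\sigma(u_s^m)\,dW_s.
\end{equation*}

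Now I apply Lemma \ref{matrix_exp_decay_bound} with $B(t) = -\tfrac{1}{m}\gamma(u_t^m)$. By Corollary \ref{exp_gamma_bounds} the symmetric part of $B(t)$ has eigenvalues bounded above by $-\gamma_1/m$, and compactness of $M$ yields $\|B(t)\|\leq \|\gamma\|_\infty/m$, so the hypotheses hold with $\alpha = \gamma_1/m$. The key observation is that the prefactor $1 + \tfrac{4}{\alpha}\sup\|B\| = 1 + 4\|\gamma\|_\infty/\gamma_1$ is independent of $m$. Writing $J_t^F = \int_0^t F(u_s^m)\,ds$ and $J_t^\sigma = \int_0^t \sigma(u_s^m)\,dW_s$, and applying the lemma separately to the deterministic and stochastic integrals, I obtain, for any $\delta\in(0,T]$ and $N = \max\{k\in\mathbb{Z}:k\delta<T\}$, a pathwise estimate
\begin{align*}
\sup_{t\in[0,T]}\|p_t^m\| &\leq m\|v_0\| + C\, e^{-\gamma_1\delta/m}\Bigl(\sup_{t\in[0,T]}\|J_t^F\| + \sup_{t\in[0,T]}\|J_t^\sigma\|\Bigr)\\
&\quad + C\max_{0\leq k<N}\sup_{t\in[k\delta,(k+2)\delta]}\Bigl(\|J_t^F - J_{k\delta}^F\| + \|J_t^\sigma - J_{k\delta}^\sigma\|\Bigr),
\end{align*}
with $C$ independent of $m$ and $\delta$.

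For any $q\geq 2$, boundedness of $F$ gives $\|J_t^F - J_{k\delta}^F\|\leq 2\delta\|F\|_\infty$ deterministically, while BDG combined with boundedness of $\sigma$ gives $E[\sup_{t\in[k\delta,(k+2)\delta]}\|J_t^\sigma - J_{k\delta}^\sigma\|^q]\leq C_q \delta^{q/2}$; a union bound $E[\max_k Y_k^q]\leq\sum_k E[Y_k^q]$ over the $O(T/\delta)$ subintervals gives $E[\max_k \sup_t\|J_t^\sigma - J_{k\delta}^\sigma\|^q]\leq C_q' \delta^{q/2-1}$, which dominates the deterministic $F$-contribution of order $\delta^{q-1}$. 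Moreover $E[\sup_t\|J_t^\sigma\|^q]$ is bounded uniformly in $m$ by BDG. Choosing $\delta = m^a$ with $a\in(0,1)$ makes $e^{-\gamma_1 m^{a-1}}$ decay faster than any power of $m$, so
\begin{equation*}
E\bigl[\sup_{t\in[0,T]}\|p_t^m\|^q\bigr]\leq C_q m^{a(q/2-1)}\quad\text{for every }q\geq 2.
\end{equation*}
For arbitrary $p>0$, Lyapunov's inequality gives $E[\sup_t\|p_t^m\|^p]\leq (E[\sup_t\|p_t^m\|^{2N}])^{p/(2N)} = O(m^{ap(N-1)/(2N)})$. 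Since the exponent $\tfrac{ap}{2}(1-\tfrac{1}{N})$ tends to $p/2$ as $a\to 1^-$ and $N\to\infty$, it can be made to exceed any prescribed $\beta<p/2$, completing the proof.

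The main obstacle is that the naive factorization $\|\Phi(t)\int_0^t\Phi^{-1}\sigma\,dW\|\leq\|\Phi(t)\|\cdot\sup_s\|\Phi^{-1}(s)\|\cdot\sup_t\|J_t^\sigma\|$ is hopeless because $\|\Phi^{-1}(s)\|$ can grow like $e^{\gamma_1 s/m}$, which blows up in the limit; a useful bound must exploit the cancellation between $\Phi(t)$ and $\Phi^{-1}(s)$. Lemma \ref{matrix_exp_decay_bound} is precisely the device that converts this exponential growth into the small prefactor $e^{-\gamma_1\delta/m}$, at the price of introducing local suprema over windows of length $O(\delta)$; balancing $\delta = m^a$ and then using Lyapunov to reach exponents arbitrarily close to $p/2$ for every $p>0$ is the remaining technical subtlety.
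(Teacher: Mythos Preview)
Your proof is correct and follows essentially the same route as the paper: variation of parameters for $p_t^m$, application of Lemma~\ref{matrix_exp_decay_bound} with $\alpha=\gamma_1/m$ (noting the $m$-independence of the prefactor), the choice $\delta=m^a$ so that the global term decays superpolynomially, BDG on the short windows, and finally H\"older/Lyapunov to reach arbitrary $p>0$. The only notable difference is in how the maximum over the $O(T/\delta)$ subintervals is controlled: the paper uses the $\ell^q$-embedding $\max_k a_k \le (\sum_k a_k^q)^{1/q}$ with an auxiliary $q>1$, which for a fixed moment $p$ already yields an exponent $(1-\kappa)(p/2-1/q)\to p/2$, whereas your crude union bound $E[\max_k Y_k^q]\le \sum_k E[Y_k^q]$ gives only $a(q/2-1)$ and so you must compensate by passing to a high moment $2N$ before applying Lyapunov. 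Both devices land on the same conclusion.
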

\begin{proof}
The strategy here is to first rewrite the equation for $p^m_t$ so that the stochastic integral term has the same form as the left hand side of \req{Phi_int_equality}.  Using the bound \req{Phi_int_bound2} we will then be able to show that both terms decay as $m\rightarrow 0$.  The first term will decay exponentially, and the second term will decay because the stocastic integrals will be taken over ``small" time intervals.

The momentum solves the SDE
\begin{align}\label{p_SDE}
dp^m_t=&(F(u^m_t)-\frac{1}{m}\gamma(u^m_t)p^m_t)dt+\sigma(u^m_t) dW_t.
\end{align}
This is a linear SDE on $\mathbb{R}^n$ where  $F(u^m_t)$, $-\frac{1}{m}\gamma(u^m_t)$, and $\sigma(u^m_t)$ are pathwise continuous adapted vector or matrix-valued processes, and so its unique solution can be written in terms of $u^m_s$
\begin{align}\label{p_SDE_sol}
p^m_t=\Phi(t) \left(p^m_0+\int_0^t \Phi^{-1}(s) F(u^m_s)ds+\int_0^t\Phi^{-1}(s)\sigma(u^m_s) dW_s\right)
\end{align}
where $\Phi(t)$ is the adapted $C^1$ process that pathwise solves the IVP
\begin{align}
\dot{\Phi}(t)=-\frac{1}{m}\gamma(u^m_t)\Phi(t),\hspace{2mm}\Phi(0)=I.
\end{align}
This technique of utilizing the explicit solution of a linear SDE to obtain estimates is used in \cite{Freidlin2004}, where the case of constant, scalar drag on flat Euclidean space is studied.

By Assumption \ref{drag_assump}, the symmetric part of $-\frac{1}{m}\gamma(u)$  has eigenvalues bounded above by $-\gamma_1/m<0$ with the bound uniform in $u$.  Therefore, using \req{fundamental_matrix}, for $s\leq t$,
\begin{align}
\|\Phi(t)\Phi^{-1}(s)\|\leq e^{-\gamma_1(t-s)/m} 
\end{align}
and hence for every $T>0$, $p\geq 1$,
\begin{align}
\sup_{t\in[0,T]}\|p^m_t\|^p\leq&\sup_{t\in[0,T]}3^{p-1}\left(e^{-\gamma_1p t /m} m^p\|v_0\|^p+\left(\int_0^t  e^{-\gamma_1(t-s)/m}\| F(u^m_s)\|ds\right)^p\right.\notag\\
&\left.+\|\Phi(t)\int_0^t \Phi^{-1}(s)\sigma(u^m_s) dW_s\|^p\right)\label{p_bound1}\\
\leq &3^{p-1}\left(m^p\|v_0\|^p+ \frac{m^p}{\gamma_1^p}\|F\|_\infty^p+\sup_{t\in[0,T]}\|\Phi(t)\int_0^t \Phi^{-1}(s)\sigma(u^m_s) dW_s\|^p\right),\notag
\end{align}
where $\|F\|_\infty$ denotes the supremum of $\|F(u)\|$ over $u$ and we have employed the inequality 
\begin{align}\label{tight_power_bound}
\left(\sum_{i=1}^N a_i\right)^p\leq N^{p-1}\sum_{i=1}^N a_i^p
\end{align}
for every $p\geq 1$, $N\in\mathbb{N}$.

Taking the $p$th power of \req{Phi_int_bound2}, for any $\delta$ with  $0<\delta<T$ we have the $P$-a.s. bound

\begin{align}\label{Phi_int_bound3}
&\sup_{t\in[0,T]}\|\Phi(t)\int_0^t\Phi^{-1}(s)\sigma(u^m_s) dW_s\|^p\notag\\
\leq& 2^{p-1}(1+\frac{4}{ \gamma_1}\sup_{s\in[0,T]}\|\gamma(u^m_t)\|)^p\bigg(e^{-p\delta\gamma_1/m}\sup_{t\in[0,T]}\|\int_0^t \sigma(u^m_r) dW_r\|^p \\
&+\max_{k=0,...,N-1}\sup_{t\in [k\delta, (k+2)\delta]}\|\int_{k\delta}^t  \sigma(u^m_r) dW_r\|^p\bigg)\notag
\end{align}
where $N=\max\{k\in\mathbb{Z}:k\delta<T\}$.

We now return to bounding the momentum using \req{p_bound1}.  As was done in \req{p_bound1}, the supremum of a quantity $\|A(u)\|$ will be denoted by $\|A\|_\infty$ for an arbitrary matrix or vector-valued function $A$ (rather than by the more precise but less readable $\| \|A\| \|_\infty$).
\begin{align}
&\sup_{t\in[0,T]}\|p^m_t\|^p\notag\\
\leq &3^{p-1}\bigg[m^p\|v_0\|^p+ \frac{m^p}{\gamma_1^p}\|F\|_\infty^p+2^{p-1}(1+\frac{4}{ \gamma_1}\|\gamma\|_\infty)^p\bigg(e^{-p\delta\gamma_1/m}\sup_{t\in[0,T]}\|\int_0^t \sigma(u^m_r) dW_r\|^p \notag\\
&+\left(\sum_{k=0}^{N-1}\sup_{t\in [k\delta, (k+2)\delta]}\|\int_{k\delta}^t  \sigma(u^m_r) dW_r\|^{pq}\right)^{1/q}\bigg)\bigg]
\end{align}
where we used \req{Phi_int_bound3} and the fact that the sup norm on $\mathbb{R}^N$ is bounded by the $\ell^q$ norm for any $q\geq 1$. We will take $q>1$.

Taking the expected value and then using H\"older's inequality on the expectations we get
\begin{align}
&E[\sup_{t\in [0,T]}\|p^m_t\|^p] \leq 3^{p-1}\bigg[m^p\|v_0\|^p+ \frac{m^p}{\gamma_1^p}\|F\|_\infty^p\notag\\
&+2^{p-1}(1+\frac{4}{ \gamma_1}\|\gamma\|_\infty)^p\bigg(e^{-p\delta\gamma_1/m}E[\sup_{t\in[0,T]}\|\int_0^t \sigma(u^m_r) dW_r\|^{pq}]^{1/q} \\
&+\left(\sum_{k=0}^{N-1}E[\sup_{t\in [k\delta, (k+2)\delta]}\|\int_{k\delta}^t  \sigma(u^m_r) dW_r\|^{pq})]\right)^{1/q}\bigg)\bigg].\notag
\end{align}

The Burkholder-Davis-Gundy inequalities (see for example Theorem 3.28 in \cite{karatzas2014brownian}), for $d>1$ imply the existence of a constant $C_{d,n}>0$ such that
\begin{align}
 &E[\sup_{0\leq s\leq T}\|\int_{0 }^s \sigma(u^m_r) dW_r\|^{d}]\leq C_{d,n} E[(\int_{0 }^T \|\sigma(u^m_r)\|_F^2 dr)^{d/2}]
\end{align}
where $\|\cdot\|_F$ denotes the Frobenius (or Hilbert-Schmidt) norm.

Therefore,  letting $\delta=m^{1-\kappa}$ for $0<\kappa<1$,   we find
\begin{align}
&E[\sup_{t\in [0,T]}\|p^m_t\|^p] \leq 3^{p-1}\bigg[m^p\|v_0\|^p+ \frac{m^p}{\gamma_1^p}\|F\|_\infty^p\notag\\
&+2^{p-1} C_{pq,n}^{1/q}(1+\frac{4}{ \gamma_1}\|\gamma\|_\infty)^p\bigg(e^{-p\gamma_1/m^\kappa} E[(\int_{0 }^{T} \|\sigma(u^m_r)\|_F^2 dr)^{pq/2}]^{1/q} \notag\\
&+\left( \sum_{k=0}^{N-1} E[(\int_{k\delta }^{(k+2)\delta} \|\sigma(u^m_r)\|_F^2 dr)^{pq/2}]\right)^{1/q}\bigg)\bigg]\\
\leq &3^{p-1}\bigg[m^p\|v_0\|^p+ \frac{m^p}{\gamma_1^p}\|F\|_\infty^p+2^{p-1} C_{pq,n}^{1/q}\|\sigma\|_{F,\infty}^{p}(1+\frac{4}{ \gamma_1}\|\gamma\|_\infty)^p\bigg(e^{-p\gamma_1/m^\kappa} T^{p/2}\notag\\
&+2^{p/2}\left( N \delta^{pq/2} \right)^{1/q}\bigg)\bigg],\notag
\end{align}
where we define $\|\sigma\|_{F,\infty}=\sup_u \|\sigma(u)\|_F$.

$N\delta<T$, hence
\begin{align}
N\delta^{pq/2}<T\delta^{pq/2-1}=Tm^{(1-\kappa)(pq/2-1)}.
\end{align}
Therefore
\begin{align}
E[\sup_{t\in[0,T]}\|p^m_t\|^p]=O(m^{(1-\kappa)(p/2-1/q)}).
\end{align}
For any $0<\beta<p/2$ we can choose $0<\kappa<1$ and $q>1$ so that $(1-\kappa)(p/2-1/q)=\beta$, thereby proving the claim for $p\geq1$.

For any $0<p<1$ and $0<\beta<p/2$, take $q\geq 1$.  Then  $\beta q/p<q/2$ so, using H\"older's inequality, we find
\begin{align}
E[\sup_{t\in[0,T]}\|p^m_t\|^p]\leq E[(\sup_{t\in[0,T]}\|p^m_t\|^p)^{q/p}]^{p/q}= O(m^\beta).
\end{align}
\end{proof}

If we don't take the supremum over $t$ inside the expectation, we can prove a stronger decay result.
\begin{proposition}\label{sup_E_p_bound}
 For any $q>0$ and any $m_0>0$ there exists a $C>0$ such that  
\begin{align}
\sup_{ t\in[0, \infty)}E[\|p_t^m\|^{q}]\leq Cm^{q/2}
\end{align}
 for all $0<m\leq m_0$.
\end{proposition}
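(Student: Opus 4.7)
The strategy is to decompose $p_t^m$ into a deterministic-style part that is pathwise $O(m)$ and a stochastic part that solves a dissipative linear SDE, then apply It\^o's formula to even powers of the stochastic part to obtain an ODE-type inequality whose solution has the desired scaling in $m$.

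Concretely, using \req{p_SDE_sol}, I would write $p_t^m = \Phi(t)p_0^m + A_t + Y_t$, where
\[
A_t = \Phi(t)\int_0^t \Phi^{-1}(s) F(u_s^m)\,ds, \qquad Y_t = \Phi(t)\int_0^t \Phi^{-1}(s)\sigma(u_s^m)\,dW_s.
\]
The bound $\|\Phi(t)\Phi^{-1}(s)\|\leq e^{-\gamma_1(t-s)/m}$ from Corollary \ref{exp_gamma_bounds} gives the pathwise estimates
\[
\|\Phi(t)p_0^m\| \leq m\|v_0\| e^{-\gamma_1 t/m}, \qquad \|A_t\|\leq \frac{m}{\gamma_1}\|F\|_\infty,
\]
so the contribution of these two terms to $\|p_t^m\|^q$ is $O(m^q)$, which is $O(m^{q/2})$ for $m\leq m_0$. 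All of the work therefore goes into $Y_t$, which pathwise satisfies the linear SDE
\[
dY_t = -\frac{1}{m}\gamma(u_t^m) Y_t\,dt + \sigma(u_t^m)\,dW_t, \qquad Y_0 = 0.
\]

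For even integer exponents $2k$, I apply It\^o's formula to $f(y)=\|y\|^{2k}$. Using $Y_t\cdot\gamma(u_t^m)Y_t = Y_t\cdot\gamma^s(u_t^m)Y_t \geq \gamma_1\|Y_t\|^{2k}$ from Corollary \ref{exp_gamma_bounds} for the drift, and bounding the Hessian term by $k(2k-1)\|\sigma\|_{F,\infty}^2\|Y_t\|^{2k-2}$ via $\|\sigma^T Y_t\|^2\leq\|\sigma\|_F^2\|Y_t\|^2$, I obtain
\[
d\|Y_t\|^{2k} \leq \Bigl(-\frac{2k\gamma_1}{m}\|Y_t\|^{2k} + k(2k-1)\|\sigma\|_{F,\infty}^2\|Y_t\|^{2k-2}\Bigr)dt + dM_t^{(k)},
\]
where $M_t^{(k)}$ is a local martingale. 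A standard localization with the stopping times $\tau_N=\inf\{t:\|Y_t\|>N\}$, combined with the finiteness of $E[\sup_{s\in[0,T]}\|Y_s\|^{4k-2}]$ guaranteed by Proposition \ref{E_sup_p_bound} applied on each compact interval, lets me take expectations and pass $N\to\infty$ via Fatou to get the ODE inequality
\[
\frac{d}{dt}E[\|Y_t\|^{2k}] \leq -\frac{2k\gamma_1}{m}E[\|Y_t\|^{2k}] + k(2k-1)\|\sigma\|_{F,\infty}^2 E[\|Y_t\|^{2k-2}].
\]

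I then induct on $k$. The base case $k=0$ is trivial. Assuming $\sup_{t\geq 0}E[\|Y_t\|^{2(k-1)}]\leq C_{k-1}m^{k-1}$, the ODE inequality above combined with $Y_0=0$ yields, by a Gronwall/equilibrium computation,
\[
\sup_{t\geq 0}E[\|Y_t\|^{2k}] \leq \frac{(2k-1)\|\sigma\|_{F,\infty}^2 C_{k-1}}{2\gamma_1}\, m^k = C_k m^k.
\]
For arbitrary $q>0$, I choose an integer $k$ with $2k\geq q$ and apply Jensen's inequality to conclude $\sup_{t\geq 0} E[\|Y_t\|^q]\leq (C_k m^k)^{q/(2k)}=O(m^{q/2})$. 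Combining with the pathwise bound on $p_t^m-Y_t$ via the elementary inequality $(a+b)^q\leq 2^{(q-1)_+}(a^q+b^q)$ completes the proof. The main subtlety is the martingale justification in step three; all remaining steps are standard Gronwall/induction manipulations.
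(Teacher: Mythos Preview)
Your proof is correct and follows the same core strategy as the paper: apply It\^o's formula to even powers, exploit the uniform dissipativity $y\cdot\gamma(u)y\geq\gamma_1\|y\|^2$ to get a differential inequality, induct on the exponent, and finish with H\"older/Jensen for general $q$.

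The organizational difference is worth noting. You split off the initial-condition and forcing contributions \emph{before} applying It\^o, via the decomposition $p_t^m=\Phi(t)p_0^m+A_t+Y_t$, so that the remaining process $Y_t$ has zero initial data and no drift forcing; this makes the It\^o step clean and the induction constants explicit. The paper instead works directly with $p_t^m$ through the substitution $z_t^m=e^{\alpha t/m}p_t^m$ (with $0<\alpha<\gamma_1$), and absorbs the cross term $(p_s^m)^T F(u_s^m)$ by an arithmetic--geometric mean trick with parameter $a^2=2\epsilon/m$. Your route avoids that parameter juggling and the auxiliary $z$-process; the paper's route avoids the decomposition but pays for it with slightly messier constants. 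Both rely on Proposition~\ref{E_sup_p_bound} to justify taking expectations past the local martingale, and both close with the same H\"older/Jensen reduction to even integer exponents.

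One small point: your invocation of Proposition~\ref{E_sup_p_bound} is for $p_t^m$, not $Y_t$, but since $\|Y_t\|\leq\|p_t^m\|+m\|v_0\|+m\|F\|_\infty/\gamma_1$ pathwise, the required moment bound for $Y_t$ follows immediately.
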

\begin{proof}
Let $\gamma_1 >\alpha>0$ and define the process $z^m_t=e^{\alpha t/m}p^m_t$.  By It\^o's formula
\begin{align}\label{z_eq}
z^m_t=&p^m_0+\int_0^t \frac{\alpha}{m} z^m_sds+\int_0^te^{\alpha s/m}(F(u^m_s)-\frac{1}{m}\gamma(u^m_s)p^m_s)ds+\int_0^te^{\alpha s/m}\sigma(u^m_s)dW_s\\
=&p^m_0+\int_0^t[ -\frac{1}{m}(\gamma(u^m_s)-\alpha )z^m_s+e^{\alpha s/m}F(u^m_s)]ds+\int_0^te^{\alpha s/m}\sigma(u^m_s)dW_s.
\end{align}
This holds for any $\alpha$, but we will need $0<\alpha<\gamma_1$ later.

Now applying the It\^o formula again to $ \|z^m_t\|^2=(z^m_t)^Tz^m_t$ we find
\begin{align}
\|z^m_t\|^2=&\|p^m_0\|^2+2\left(\int_0^t (z^m_s)^T(-\frac{1}{m}(\gamma(u^m_s)-\alpha )z^m_s+e^{\alpha s/m}F(u^m_s))ds\right.\notag\\
&\left.+\int_0^te^{\alpha s/m}(z^m_s)^T\sigma(u^m_s)dW_s \right)\\
&+\sum_i[\sum_j\int_0^\cdot e^{\alpha s/m}\sigma^i_j(u^m_s)dW^j_s,\sum_j\int_0^\cdot e^{\alpha s/m}\sigma^i_j(u^m_s)dW^j_s]_t.\notag
\end{align}
The quadratic variation term is
\begin{align}
&\sum_i[\sum_j\int_0^\cdot e^{\alpha s/m}\sigma^i_j(u^m_s)dW^j_s,\sum_j\int_0^\cdot e^{\alpha s/m}\sigma^i_j(u^m_s)dW^j_s]_t\notag\\
=&\int_0^te^{2\alpha s/m}\|\sigma(u^m_s)\|_F^2ds.
\end{align}
  Therefore
\begin{align}
\|z^m_t\|^2=&\|p^m_0\|^2-\frac{2}{m}\int_0^t (z^m_s)^T(\gamma(u^m_s)-\alpha )z^m_sds+2\int_0^te^{\alpha s/m}(z^m_s)^T F(u^m_s)ds\notag\\
&+\int_0^te^{2\alpha s/m}\|\sigma(u^m_s)\|_F^2ds+2\int_0^te^{\alpha s/m}(z^m_s)^T\sigma(u^m_s)dW_s .
\end{align}

First we will show the result for $q=2p$ with $p$ a positive integer.  Using It\^o's formula one more time,  we obtain
\begin{align}
\|z^m_t\|^{2p}=&\|p^m_0\|^{2p}-\frac{2}{m}\int_0^t p\|z^m_s\|^{2(p-1)}(z^m_s)^T(\gamma(u^m_s)-\alpha )z^m_sds\notag\\
&+2\int_0^t p\|z^m_s\|^{2(p-1)}e^{\alpha s/m}(z^m_s)^T F(u^m_s)ds\notag\\
& +\int_0^t p\|z^m_s\|^{2(p-1)}e^{2\alpha s/m}\|\sigma(u^m_s)\|_F^2ds\\
&+\frac{p(p-1)}{2}\int_{t_0}^t\|z^m_s\|^{2(p-2)}4e^{2\alpha s/m}\|\sigma^T(u^m_s)z_s^m\|^2ds\notag\\
&+2\int_0^t p\|z^m_s\|^{2(p-1)}e^{\alpha s/m}(z^m_s)^T\sigma(u^m_s)dW_s.\notag
\end{align}

By Assumption \ref{drag_assump}, we have $y^T\gamma(u)y\geq \gamma_1 \|y\|^2$ for all $u\in F_O(M)$, $y\in\mathbb{R}^n$. Hence, defining $\epsilon=\gamma_1-\alpha>0$, we deduce from here the following upper bound on the norm of $p_t^m$.
\begin{align}\label{y_bound}
\|p^m_t\|^{2p}\leq&e^{-2p\alpha t/m}\|p^m_0\|^{2p}-\frac{2p\epsilon}{m}\int_0^t e^{-2p\alpha (t-s)/m}\|p^m_s\|^{2p}ds\notag\\
&+2p\int_0^t e^{-2p\alpha(t- s)/m}\|p^m_s\|^{2(p-1)}(p^m_s)^T F(u^m_s)ds\notag\\
& +p\int_0^t e^{-2p\alpha(t- s)/m}\|p^m_s\|^{2(p-1)}\|\sigma(u^m_s)\|_F^2ds\\
&+2p(p-1)\int_{t_0}^te^{-2p\alpha (t-s)/m}\|p^m_s\|^{2(p-2)}\|\sigma^T(u^m_s)p_s^m\|^2ds\notag\\
&+2p\int_0^t e^{-2p\alpha(t- s)/m}\|p^m_s\|^{2(p-1)}(p^m_s)^T\sigma(u^m_s)dW_s.\notag
\end{align}
Using Lemma \ref{martingale_lemma}, the following computation shows that 
\begin{equation}
M_t\equiv\int_0^t e^{2p\alpha s/m}\|p^m_s\|^{2(p-1)}(p^m_s)^T\sigma(u^m_s)dW_s
\end{equation}
 is a martingale.
\begin{align}\label{martingale_proof}
&E[\int_0^t \|e^{2p\alpha s/m}\|p^m_s\|^{2(p-1)}(p^m_s)^T\sigma(u^m_s)\|^2 ds]\\
\leq& e^{4p\alpha t/m}\|\sigma\|^2_{\infty} t E[\sup_{s\in[0,t]} \|p^m_s\|^{4p-2}]<\infty,\notag
\end{align}
where in the last step, the expectation is finite by Proposition \ref{E_sup_p_bound}.
Using this and taking the expected value of \req{y_bound} we find, for any $a>0$,
\begin{align}
E[\|p^m_t\|^{2p}]\leq&e^{-2p\alpha t/m}\|p^m_0\|^{2p}-\frac{2p\epsilon}{m}\int_0^t e^{-2p\alpha (t-s)/m}E[\|p^m_s\|^{2p}]ds\\
&+2p\int_0^t e^{-2p\alpha(t- s)/m}E[\|p^m_s\|^{2(p-1)}(p^m_s)^T F(u^m_s)]ds\notag\\
& +p\int_0^t e^{-2p\alpha(t- s)/m}E[\|p^m_s\|^{2(p-1)}\|\sigma(u^m_s)\|_F^2]ds\notag\\
&+2p(p-1)\int_{t_0}^te^{-2p\alpha (t-s)/m}E[\|p^m_s\|^{2(p-2)}\|\sigma^T(u^m_s)p_s^m\|^2]ds\notag\\
\leq &e^{-2p\alpha t/m}\|p^m_0\|^{2p}-\frac{2p\epsilon}{m}\int_0^t e^{-2p\alpha (t-s)/m}E[\|p^m_s\|^{2p}]ds\\
&+p\int_0^t e^{-2p\alpha(t- s)/m}E[\|p^m_s\|^{2(p-1)}(a^2\|p^m_s\|^2 +\|F\|_\infty^2/a^2)]ds\notag\\
& +(p\|\sigma\|_{F,\infty}^2+2p(p-1)\|\sigma\|_\infty^2)\int_0^t e^{-2p\alpha(t- s)/m}E[\|p^m_s\|^{2(p-1)}]ds\notag\\
= &e^{-2p\alpha t/m}\|p^m_0\|^{2p}-p\left(\frac{2\epsilon}{m}-a^2\right)\int_0^t e^{-2p\alpha (t-s)/m}E[\|p^m_s\|^{2p}]ds\notag\\
&+p\|F\|_\infty^2/a^2\int_0^t e^{-2p\alpha(t- s)/m}E[\|p^m_s\|^{2(p-1)}]ds\\
& +(p\|\sigma\|_{F,\infty}^2+2p(p-1)\|\sigma\|_\infty^2)\int_0^t e^{-2p\alpha(t- s)/m}E[\|p^m_s\|^{2(p-1)}]ds.\notag
\end{align}
Leting $a^2=2\epsilon/m$ yields
\begin{align}
E[\|p^m_t\|^{2p}]\leq&e^{-2p\alpha t/m}\|p^m_0\|^{2p} +(\frac{pm}{2\epsilon}\|F\|_\infty^2+p\|\sigma\|_{F,\infty}^2+2p(p-1)\|\sigma\|_\infty^2)\notag\\
&\times \int_0^t e^{-2p\alpha(t- s)/m}E[\|p^m_s\|^{2(p-1)}]ds.
\end{align}
For $p=1$ this becomes
\begin{align}
E[\|p^m_t\|^2]\leq &m^2e^{-2\alpha t/m}\|v_0\|^2+\frac{m^2}{4\epsilon\alpha}\|F\|_\infty^2+\frac{m}{2\alpha }\|\sigma\|_{F,\infty}^2.
\end{align}
Hence
\begin{align}
\sup_{t\in[0,\infty)}E[\|p^m_t\|^2]\leq m^2\|v_0\|^2+\frac{m^2}{4\epsilon\alpha}\|F\|_\infty^2+\frac{m}{2\alpha }\|\sigma\|_{F,\infty}^2\leq Cm
\end{align}
for all $0<m\leq m_0$.  So the claim holds for $p=1$.

Now suppose it holds for $p-1$.  Then
\begin{align}
&\sup_{ t\in[0, \infty)}E[\|p^m_t\|^{2p}]\notag\\
\leq&m^{2p}\|v_0\|^{2p} +(\frac{pm}{2\epsilon}\|F\|_\infty^2+p\|\sigma\|_{F,\infty}^2+2p(p-1)\|\sigma\|_\infty^2)\\
&\times \sup_{t\in[0,\infty)} \int_0^t e^{-2p\alpha(t- s)/m}E[\|p^m_s\|^{2(p-1)}]ds\notag\\
\leq&m^{2p}\|v_0\|^{2p} +(\frac{pm}{2\epsilon}\|F\|_\infty^2+p\|\sigma\|_{F,\infty}^2+2p(p-1)\|\sigma\|_\infty^2)\\
&\times \sup_{t\in[0,\infty)} \int_0^t e^{-2p\alpha(t- s)/m}ds\sup_{t\in[0,\infty)}E[\|p^m_t\|^{2(p-1)}]\notag\\
\leq&m^{2p}\|v_0\|^{2p} +Cm^{p-1}\frac{m}{2p\alpha}(\frac{pm}{2\epsilon}\|F\|_\infty^2+p\|\sigma\|_{F,\infty}^2+2p(p-1)\|\sigma\|_\infty^2)\notag\\
\leq &\tilde C m^p
\end{align}
for all $0<m\leq m_0$. This proves the claim for all positive integer $p$ by induction.

Finally, let $q>0$ be arbitrary.  Take $p\in\mathbb{N}$ with $2p> q$.   Using H\"older's inequality we find
\begin{align}
\sup_{ t\in[0, \infty)}E[\|p^m_t\|^{q}]\leq \left(\sup_{ t\in[0, \infty)}E[\|p^m_t\|^{2p}]\right)^{q/(2p)}\leq (Cm^p)^{q/(2p)}=\tilde C m^{q/2}
\end{align}
for all $0<m\leq m_0$.  Therefore the result holds for any $q>0$.
\end{proof}

We can use the above decay results to prove that certain integrals with respect to the momentum process also vanish in the limit.

\begin{proposition}\label{quad_p_int_bound}
For any smooth function $f$ on $F_O(M)$, any $T> 0$, $p> 0$, and any $\alpha,\beta=1,...,n$ we have
\begin{align}
E[\sup_{0\leq t\leq T}|\int_{0}^t f(u^m_s) d((p^m_s)^\alpha (p^m_s)^\beta)|^{p}]=O(m^{p/2}) \text{ as } m\rightarrow 0,
\end{align}
where $\alpha$, $\beta$ refer to the components of the momentum process in the standard basis for $\mathbb{R}^n$.
\end{proposition}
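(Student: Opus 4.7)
The plan is to use integration by parts to replace the stochastic integral against $d((p^m_s)^\alpha (p^m_s)^\beta)$ by boundary terms plus a pathwise Lebesgue integral, and then control each piece separately using the moment estimates already established in Propositions \ref{E_sup_p_bound} and \ref{sup_E_p_bound}. The essential observation is that, by Lemma \ref{m_existence_lemma}, $u^m$ is pathwise $C^1$ and satisfies $\dot u^m = H_{v^m}(u^m)$, so for $f \in C^\infty(F_O(M))$ the process $f(u^m_\cdot)$ has locally bounded variation and its quadratic covariation with the semimartingale $(p^m)^\alpha (p^m)^\beta$ vanishes. Integration by parts therefore gives the $P$-a.s.\ identity
\begin{align*}
\int_0^t f(u^m_s)\,d((p^m_s)^\alpha (p^m_s)^\beta) =& f(u^m_t)(p^m_t)^\alpha(p^m_t)^\beta - f(u^m_0)(p^m_0)^\alpha(p^m_0)^\beta \\
& - \int_0^t (p^m_s)^\alpha(p^m_s)^\beta\,df(u^m_s),
\end{align*}
and the linearity of $H_v(u)$ in $v$ together with the chain rule yields $df(u^m_s) = \tfrac{1}{m}(p^m_s)^\gamma (H_\gamma[f])(u^m_s)\,ds$ (sum over $\gamma$, $H_\gamma = H_{e_\gamma}$).

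With this representation I would then take $\sup_{t\in[0,T]}|\cdot|^p$, then expectation, and use $(a+b+c)^p \le C_p(a^p+b^p+c^p)$ to treat the three terms individually. The initial term is deterministic of order $m^2$, hence trivially $O(m^{2p})$. The terminal term is bounded by $\|f\|_\infty \|p^m_t\|^2$, so Proposition \ref{E_sup_p_bound} applied with exponent $2p$ gives $E[\sup_{t\in[0,T]}\|p^m_t\|^{2p}] = O(m^\beta)$ for any $0 < \beta < p$, which certainly delivers $O(m^{p/2})$. The integral term is absolutely bounded by $\tfrac{C}{m}\int_0^T \|p^m_s\|^3\,ds$ where $C$ absorbs the supremum of $|H_\gamma[f]|$ over the compact manifold $F_O(M)$. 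For $p \ge 1$, Jensen's inequality bounds its expected $p$-th power by $\tfrac{C^p T^{p-1}}{m^p}\int_0^T E[\|p^m_s\|^{3p}]\,ds$, and Proposition \ref{sup_E_p_bound} with $q=3p$ supplies $\sup_s E[\|p^m_s\|^{3p}] \le C' m^{3p/2}$, producing an $O(m^{p/2})$ bound; for $0 < p < 1$ the concavity of $x\mapsto x^p$ gives the same rate.

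The conceptual hurdle, and the reason this is not a one-line consequence of the previous propositions, is that a direct It\^o expansion of $d((p^m)^\alpha (p^m)^\beta)$ produces a ``fluctuation'' contribution $\int_0^t f(u^m_s)(\sigma\sigma^T)^{\alpha\beta}(u^m_s)\,ds$ that is individually $O(1)$, together with an $O(1/m)$ drag contribution and a martingale term; obtaining the $O(m^{p/2})$ rate directly from that representation would demand a delicate cancellation of the leading-order pieces. Integration by parts sidesteps this entirely: the singular factor $1/m$ reappears inside $df(u^m_s)$, but it now multiplies \emph{three} factors of $p^m$, and Proposition \ref{sup_E_p_bound} provides exactly the $m^{3/2}$ of decay needed to absorb the $1/m$ and still leave the desired $m^{1/2}$.
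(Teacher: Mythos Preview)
Your proposal is correct and follows essentially the same route as the paper: integrate by parts using the pathwise $C^1$ regularity of $u^m$, express $df(u^m_s)=\tfrac{1}{m}(p^m_s)^\gamma H_\gamma[f](u^m_s)\,ds$, bound the boundary terms via Proposition~\ref{E_sup_p_bound} and the cubic integral via Proposition~\ref{sup_E_p_bound}. The only cosmetic difference is your direct concavity argument for $0<p<1$, whereas the paper first proves the case $p>1$ and then descends via H\"older's inequality; both are routine.
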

\begin{proof}
First assume $p>1$. Integrating by parts, and using the fact that $u^m$ is pathwise $C^1$, we get
\begin{align}
\int_{0}^t f(u^m_s) d((p^m_s)^\alpha (p^m_s)^\beta)=&f(u^m_t) (p^m_t)^\alpha (p^m_t)^\beta-f(u^m_{0}) (p^m_{0})^\alpha (p^m_{0})^\beta\notag\\
&-\int_{0}^t (p^m_s)^\alpha (p^m_s)^\beta \frac{d}{ds}f(u^m_s) ds.
\end{align}
From the original SDE, \req{SDE_u_m2}, $u^m$ satisfies the ODE $\dot{u}=H_v(u)$, so
\begin{align}
\frac{d}{ds}f(u^m_s)= \frac{1}{m}H_{p^m_s}(u^m_s)[f].
\end{align}
$F_O(M)$ is compact, hence $f$ is bounded.  Therefore, decomposing the vector $p_s^m$ in the standard basis $e_\nu$ of $\mathbb{R}^n$ and using the notation $H_\nu$ introduced in Lemma \ref{horizontal_vf}, we have
\begin{align}
&E[\sup_{0\leq t\leq T}|\int_{0}^t f(u^m_s) d((p^m_s)^\alpha (p^m_s)^\beta)|^{p}]\leq 3^{p-1}\left(E[ \sup_{0\leq t\leq T} |f(u^m_t) (p^m_t)^\alpha (p^m_t)^\beta|^{p}]\right.\notag\\
&\left.+E[|f(u^m_{0}) (p^m_{0})^\alpha (p^m_{0})^\beta|^{p}]+E[\sup_{0\leq t\leq T}|\int_{0}^t (p^m_s)^\alpha (p^m_s)^\beta  \frac{1}{m}H_{p^m_s}(u^m_s)[f]ds|^{p}]\right)\notag\\
\leq&3^{p-1}\left(\|f\|_\infty^{p} E[ \sup_{0\leq t\leq T} \|p^m_t\|^{2p}]+m^{2p} \|f\|_\infty^{p}\|v_{0}\|^{2p}\right)\\
&+\frac{3^{p-1}}{m^{p}}E[\sup_{0\leq t\leq T}|\int_{0}^t (p^m_s)^\alpha (p^m_s)^\beta ( p^m_s)^\nu H_{\nu}(u^m_s)[f]ds|^{p}].\notag
\end{align}
We have assumed $p>1$, so  we can use H\"older's inequality with  exponents $p$ and $p/(p-1)$ to estimate the last term.  Using boundedness of $H_\nu[f]$ (implied by compactness of $F_O(M)$), this gives
\begin{align}
&E[\sup_{0\leq t\leq T}|\int_{0}^t (p^m_s)^\alpha (p^m_s)^\beta ( p^m_s)^\nu H_{\nu}(u^m_s)[f]ds|^{p}]\notag\\
\leq&E[\left(\int_{0}^T |(p^m_s)^\alpha (p^m_s)^\beta ( p^m_s)^\nu H_{\nu}(u^m_s)[f]|ds\right)^{p}]\notag\\
\leq&T^{p-1}E[\int_{0}^T |(p^m_s)^\alpha (p^m_s)^\beta ( p^m_s)^\nu H_{\nu}(u^m_s)[f]|^{p}ds]\\
\leq&CT^{p-1}E[\int_{0}^T \|p^m_s\|^{3p} ds]=CT^{p-1}\int_{0}^T E[\|p^m_s\|^{3p}] ds\notag\\
\leq &CT^{p} \sup_{0\leq s\leq T}E[\|p^m_s\|^{3p}]\notag
\end{align}
for some $C>0$.

By Proposition \ref{E_sup_p_bound}, for any $0<\kappa<p$ we obtain
\begin{align}
&E[\sup_{0\leq t\leq T}|\int_{0}^t f(u^m_s) d((p^m_s)^\alpha (p^m_s)^\beta)|^{p}]\\
\leq& O(m^\kappa)+\frac{3^{p-1}CT^{p}}{m^{p}} \sup_{0\leq s\leq T}E[\|p^m_s\|^{3p}].\notag
\end{align}
Applying Proposition \ref{sup_E_p_bound} to the second term we get
\begin{align}
&E[\sup_{0\leq t\leq T}|\int_{0}^t f(u^m_s) d((p^m_s)^\alpha (p^m_s)^\beta)|^{p}]\notag\\
\leq& O(m^\kappa)+\frac{3^{p-1}CT^{p}}{m^{p}}\tilde C m^{3p/2}\\
\leq& O(m^\kappa)+O( m^{p/2}).\notag
\end{align}
Taking $\kappa=p/2$ gives the result for all $p>1$.  The result for arbitrary $p>0$ then follows by an application of H\"older's inequality, as in Propositions \ref{E_sup_p_bound} and \ref{sup_E_p_bound}.

\end{proof}

\section{Calculation of the Limiting SDE}\label{sec:limit_SDE}
We now manipulate the SDE,  \reqr{SDE_u_m2}{SDE_v_m2}, for $(u_t^m,v_t^m)$  to extract the terms that survive when $m\rightarrow 0$ in order to derive a candidate for the limiting SDE.  The actual convergence proof will be given in Section \ref{convergence_proof_sec}.

 In order to express the equations in terms of Lebesgue and  It\^o integrals on some $\mathbb{R}^l$, we consider the composition of a function $f\in C^\infty(F_O(M))$ with $u^m_t$.  The following equations are then satisfied on $\mathbb{R}$ and $\mathbb{R}^n$ respectively,
\begin{align}
&df(u^m_t)=H_{v^m_t}(u^m_t)[f]dt,\\
&dv^m_t=\frac{1}{m}(F(u^m_t)-\gamma(u^m_t)v^m_t)dt+\frac{1}{m}\sigma(u^m_t) dW_t.\label{v_SDE_sec_6}
\end{align}

We know that the momentum, $p^m_t$, converges to zero (in various senses) and our objective will be to separate out such terms.  We begin by solving for  $v^m_t dt$ in the equation for $dv^m_t$.  Note that, by assumption, $\gamma^s(u)$ is positive definite for all $u$, and hence $\gamma(u)$ is invertible for every $u$. Therefore
\begin{align}\label{v_dt}
v^m_t dt=-\gamma^{-1}(u^m_t)dp^m_t+\gamma^{-1}(u^m_t)F(u^m_t)dt+\gamma^{-1}(u^m_t)\sigma(u^m_t) dW_t.
\end{align}
Using the linearity of $H_v$ in $v$, the first equation can be written
\begin{align}\label{du_1}
df(u^m_t)=&(v^m_t)^\nu H_\nu (u^m_t)[f]dt,
\end{align}
where we are  again using the notation $H_\nu$ introduced in Lemma \ref{horizontal_vf} and components of the $\mathbb{R}^n$-valued process $v^m_t$ will always refer to the standard basis.  For any $n\times l$ matrix $A$, we will let $H_A(u)$ denote the element of $(\mathbb{R}^l)^*$ with action $w\mapsto H_{Aw}(u)$. For any $\mathbb{R}^l$-valued semimartingale, $X$,  will also write $H_A(u)dX$ as shorthand for the contraction $H_{Ae_\eta}(u)dX^\eta$.

  With these notations, after substituting \req{v_dt} into \req{du_1} we obtain
\begin{align}\label{df_eq2}
df(u^m_t)=&-H_\nu (u^m_t)[f](\gamma^{-1})^\nu_\mu(u^m_t)d(p^m_t)^\mu+H_\nu (u^m_t)[f](\gamma^{-1})_\mu^\nu(u^m_t)F^\mu(u^m_t)dt\notag\\
&+H_\nu (u^m_t)[f](\gamma^{-1})_\mu^\nu(u^m_t)\sigma^\mu_\eta(u^m_t) dW^\eta_t\\
=&-H_{\gamma^{-1}(u^m_t)} (u^m_t)[f]d(p^m_t)+H_{(\gamma^{-1} F) (u^m_t)} (u^m_t)[f]dt\notag\\
&+H_{(\gamma^{-1}\sigma)(u^m_t)} (u^m_t)[f] dW_t.\notag
\end{align}

\begin{remark}
Here and in the following, components of matrix or $\mathbb{R}^n$-valued functions on the frame bundle, such as $\gamma(u)$ or $F(u)$,  as well as the implied sums over repeated indices, will always refer to the standard basis.  We emphasize that these components are unrelated to local coordinates on $M$ or $F_O(M)$ and do not in any way imply that the statements have a local character.  Rather, by lifting a tensor or vector field from $M$ to a  matrix or vector valued function on $F_O(M)$, we are able to speak about its components in each frame in a globally defined manner.
\end{remark}

The second and third terms in \req{df_eq2} are independent of the momentum, so we focus on the first term.  Integrating by parts and using the fact that $u_t^m$ is pathwise $C^1$ and satisfies the ODE \req{SDE_u_m2} we get
\begin{align}
&H_{\gamma^{-1}(u^m_t)} (u^m_t)[f]d(p^m_t)\notag\\
=&d(H_{\gamma^{-1}(u^m_t)p^m_t} (u^m_t)[f])-(p^m_t)^\nu d(H_{\gamma^{-1}(u^m_t)e_\nu} (u^m_t)[f])\notag\\
=&d(H_{\gamma^{-1}(u^m_t)p^m_t} (u^m_t)[f])-(p^m_t)^\nu \frac{d}{dt}(H_{\gamma^{-1}(u^m_t)e_\nu} (u^m_t)[f])dt\\
=&d(H_{\gamma^{-1}(u^m_t)p^m_t} (u^m_t)[f])-(p^m_t)^\nu H_{v^m_t}(u^m_t)[(\gamma^{-1})_\nu^\mu H_{\mu}[f]]dt\notag\\
=&d(H_{\gamma^{-1}(u^m_t)p^m_t} (u^m_t)[f])-m(v^m_t)^\nu (v^m_t)^\xi  K^f_{\nu\xi}(u^m_t)dt\notag
\end{align}
by \req{du_1}, where we define 
\begin{align}\label{Kf_def}
K^f_{\nu\xi}(u)=H_{\xi}(u)[(\gamma^{-1})_\nu^\mu] H_{\mu}(u)[f]+(\gamma^{-1})_\nu^\mu(u) H_{\xi}(u)[H_{\mu}[f]].
\end{align}
The equation for $f(u^m_t)$ then becomes
\begin{align}
df(u^m_t)=&-d(H_{\gamma^{-1}(u^m_t)p^m_t} (u^m_t)[f])+m(v^m_t)^\nu (v^m_t)^\mu K^f_{\nu\mu}(u^m_t)dt\\
&+H_{(\gamma^{-1} F) (u^m_t)} (u^m_t)[f]dt+H_{(\gamma^{-1}\sigma)(u^m_t)} (u^m_t)[f] dW_t.\notag
\end{align}
To simplify $m(v^m_t)^\nu (v^m_t)^\mu dt$  we follow \cite{Hottovy2014} and compute
\begin{align}
&d(m(v^m_t)^\nu m(v^m_t)^\mu)\\
=&m(v^m_t)^\nu d(m(v^m_t)^\mu)+m(v^m_t)^\mu d(m(v^m_t)^\nu) +d([m(v^m_t)^\nu,m(v^m_t)^\mu])\notag\\
=&m(v^m_t)^\nu d(m(v^m_t)^\mu)+m(v^m_t)^\mu d(m(v^m_t)^\nu) +\sum_\delta\sigma^\nu_\delta(u^m_t) \sigma^\mu_\delta(u^m_t) dt,\notag
\end{align}
where we have used  the SDE for $(v^m_t)^\nu$,  \req{v_SDE_sec_6}.   Defining 
\begin{align}
\Sigma^{\nu\mu}(u)=\sum_\delta\sigma^\nu_\delta(u) \sigma^\mu_\delta(u)
\end{align}
 and using the SDE for $(v^m_t)^\nu$ again, we get
\begin{align}
&d(m(v^m_t)^\nu m(v^m_t)^\mu)\notag\\
=&m(v^m_t)^\nu (F^\mu(u^m_t)-\gamma^\mu_\xi(u^m_t)(v^m_t)^\xi)dt+m(v^m_t)^\nu\sigma^\mu_\eta(u^m_t) dW^\eta_t\\
&+m(v^m_t)^\mu(F^\nu(u^m_t)-\gamma^\nu_\xi(u^m_t)(v^m_t)^\xi)dt+m(v^m_t)^\mu\sigma^\nu_\eta(u^m_t) dW^\eta_t +\Sigma^{\nu\mu}(u^m_t) dt.\notag
\end{align}
Therefore
\begin{align}\label{KE_gamma_eq}
&m(v^m_t)^\nu \gamma^\mu_\xi(u^m_t)(v^m_t)^\xi dt+m(v^m_t)^\mu\gamma^\nu_\xi(u^m_t)(v^m_t)^\xi dt\notag\\
=&- d(m(v^m_t)^\nu m(v^m_t)^\mu)+\left((p^m_t)^\nu F^\mu(u^m_t)+(p^m_t)^\mu F^\nu(u^m_t)\right)dt\\
&+\left((p^m_t)^\nu\sigma^\mu_\eta(u^m_t)+(p^m_t)^\mu\sigma^\nu_\eta(u^m_t)  \right)dW^\eta_t+\Sigma^{\nu\mu}(u^m_t) dt.\notag
\end{align}

For scalar $\gamma$, one can immediately solve for  $m(v^m_t)^\nu (v^m_t)^\mu dt$.  To handle the general case, we utilize the technique developed in \cite{Hottovy2014}.  First we rewrite \req{KE_gamma_eq} in integral notation:
\begin{align}
&\int_0^t \left(m(v^m_s)^\nu \gamma^\mu_\xi(u^m_s)(v^m_s)^\xi+m(v^m_s)^\mu\gamma^\nu_\xi(u^m_s)(v^m_t)^\xi \right)ds\\
=&- m(v^m_t)^\nu m(v^m_t)^\mu+m(v_0)^\nu m(v_0)^\mu+\int_0^t\left((p^m_s)^\nu F^\mu(u^m_s)+(p^m_s)^\mu F^\nu(u^m_s)\right)ds\notag\\
&+\int_0^t\left((p^m_s)^\nu\sigma^\mu_\eta(u^m_s) +(p^m_s)^\mu\sigma^\nu_\eta(u^m_s) \right)dW^\eta_s+\int_0^t\Sigma^{\nu\mu}(u^m_s) ds.\notag
\end{align}

The formula for the left hand side implies that the right hand side, which we denote by $C_t^{\mu\nu}$, is $C^1$  $P$-a.s.  and so we can differentiate both sides with respect to $t$ to obtain
\begin{align}
m(v^m_t)^\nu \gamma^\mu_\xi(u^m_t)(v^m_t)^\xi+m(v^m_t)^\mu\gamma^\nu_\xi(u^m_t)(v^m_t)^\xi =\dot{C}_t^{\mu\nu}.
\end{align}
Define the matrix $V_t^{\nu\mu}=m(v^m_t)^\nu(v^m_t)^\mu$.  The above equation can be written in matrix form
\begin{align}
\gamma V+V\gamma^T=\dot{C}.
\end{align}
This is a Lyapunov equation, where $-\gamma$ has eigenvalues with real part bounded above by $-\gamma_1$.  Hence, we can solve uniquely for $V$,
\begin{align}
V=\int_0^\infty e^{-y\gamma}\dot{C} e^{-y\gamma^T}dy,
\end{align}
 see for example Theorem 6.4.2 in \cite{ortega2013matrix}.  Integrating with respect to $t$, we get  
\begin{align}
\int_0^t m(v^m_s)^\nu(v^m_s)^\mu ds=\int_0^t \int_0^\infty (e^{-y\gamma(u^m_s)})^{\nu}_\eta  (e^{-y\gamma(u^m_s)})^{\mu}_\xi dy (\dot{C}_s)^{\eta\xi}ds.
\end{align}
Define the functions $G^{\nu\mu}_{\eta\xi}(u)=\int_0^\infty (e^{-y \gamma(u)})^\nu_\eta (e^{-y\gamma(u)})^\mu_\xi dy$.  Using the dominated convergence theorem, along with the formula for the derivative of the matrix exponential found in \cite{exp_deriv}, one can prove that $G$ is a smooth function of $u$.  Therefore, the $G^{\nu\mu}_{\eta\xi}(u^m_t)$ are  semimartingales and
\begin{align}\label{mvv_eq}
&\int_0^t m(v^m_s)^\beta(v^m_s)^\alpha ds=\int_0^t  G^{\beta\alpha}_{\mu\nu}(u^m_s)dC_s^{\mu\nu}=-\int_0^t  G^{\beta\alpha}_{\mu\nu}(u^m_s)d((p^m_s)^\nu (p^m_s)^\mu)\notag\\
&+\int_0^t G^{\beta\alpha}_{\mu\nu}(u^m_s)\left((p^m_s)^\nu F^\mu(u^m_s)+(p^m_s)^\mu F^\nu(u^m_s)\right)ds\\
&+\int_0^t G^{\beta\alpha}_{\mu\nu}(u^m_s)\left((p^m_s)^\mu\sigma^\nu_\eta(u^m_s) +(p^m_s)^\nu\sigma^\mu_\eta(u^m_s) \right)dW^\eta_s\notag\\
&+\int_0^t G^{\beta\alpha}_{\mu\nu}(u^m_s)\Sigma^{\mu\nu}(u^m_s) ds.\notag
\end{align}
Note that if we define 
\begin{equation}\label{J_def}
J^{\beta\alpha}(u)= G^{\beta\alpha}_{\mu\nu}(u)\Sigma^{\mu\nu}(u)
\end{equation}
 then $J$ is symmetric and is the unique solution to the Lyapunov equation
\begin{align}\label{J_Lyap_eq}
\gamma J+J\gamma^T=\Sigma.
\end{align}

Using \req{mvv_eq} we find that $f(u^m_t)$  satisfies 
\begin{align}\label{dfu_m}
&df(u^m_t)\notag\\
=&H_{(\gamma^{-1} F) (u^m_t)} (u^m_t)[f]dt+J^{\beta\alpha}(u^m_t) K^f_{\beta\alpha}(u^m_t) dt+H_{(\gamma^{-1}\sigma)(u^m_t)} (u^m_t)[f] dW_t\notag\\
&-d(H_{\gamma^{-1}(u^m_t)p^m_t} (u^m_t)[f])- K^f_{\beta\alpha}(u^m_t)G^{\beta\alpha}_{\mu\nu}(u^m_t)d((p^m_t)^\nu (p^m_t)^\mu)\\
&+K^f_{\beta\alpha}(u^m_t)G^{\beta\alpha}_{\mu\nu}(u^m_t)\left((p^m_t)^\nu F^\mu(u^m_t)+(p^m_t)^\mu F^\nu(u^m_t)\right)dt\notag\\
&+K^f_{\beta\alpha}(u^m_t)G^{\beta\alpha}_{\mu\nu}(u^m_t)\left((p^m_t)^\mu\sigma^\nu_\eta(u^m_t) +(p^m_t)^\nu\sigma^\mu_\eta(u^m_t) \right)dW^\eta_t.\notag
\end{align}

Based on the results of Section \ref{sec:p_zero}, we expect that terms involving $p^m_t$ will converge to zero as $m\rightarrow 0$.  Therefore, if it exists, we would expect the limiting process $u_t$ to satisfy 
\begin{align}\label{df_eq_Ito}
df(u_t)=&H_{(\gamma^{-1} F) (u_t)} (u_t)[f]dt+J^{\beta\alpha}(u_t) K^f_{\beta\alpha}(u_t) dt+H_{(\gamma^{-1}\sigma)(u_t)} (u_t)[f] dW_t
\end{align}
for every $f\in C^\infty(F_O(M))$. Note that by Lemma \ref{horiz_lift_vf_formula}, the first term is the horizontal lift of the vector field $(\gamma^{-1}F)(x)$ on $M$ to $F_O(M)$, 
\begin{align}
H_{(\gamma^{-1} F)(u)} (u)=(\gamma^{-1}F)^h(u).
\end{align}

In order to express \req{df_eq_Ito} as an SDE on the manifold $F_O(M)$ we need to rewrite it using the Stratonovich integral. This is accomplished by the following lemma.
\begin{lemma}
An $F_O(M)$-valued semimartingale $u_t$ satisfies the SDE
\begin{align}\label{limiting_SDE}
du_t=&(\gamma^{-1}F)^h(u_t)dt+H_{(\gamma^{-1}\sigma)(u_t)} (u_t)\circ dW_t\\
&-\frac{1}{2}\sum_\alpha (\gamma^{-1}(u_t))^\eta_\mu\sigma_\alpha^\mu(u_t)(\gamma^{-1}(u_t))^\xi_\nu H_{ \eta} (u_t) [ \sigma_\alpha^\nu] H_{ \xi} (u_t)dt\notag \\
&-\frac{1}{2}\left((\gamma^{-1}(u_t))^\eta_\mu J^{\mu\chi}(u_t)\gamma^\nu_\chi(u_t)-J^{\eta\nu}(u_t)\right)  H_{ \eta} (u_t) [(\gamma^{-1})^\xi_\nu] H_{ \xi}(u_t) dt\notag\\
&-\frac{1}{2}J^{\xi\nu}(u_t)(\gamma^{-1}(u_t))^\eta_\nu  [H_{ \eta} ,H_{ \xi}](u_t) dt\notag
\end{align}
iff
\begin{align}\label{limiting_SDE_Ito}
df(u_t)=&H_{(\gamma^{-1} F) (u_t)} (u_t)[f]dt+J^{\beta\alpha}(u_t) K^f_{\beta\alpha}(u_t) dt+H_{(\gamma^{-1}\sigma)(u_t)} (u_t)[f] dW_t
\end{align}
for every $f\in C^\infty(F_O(M))$.  The sum over $\alpha$ in \req{limiting_SDE} is taken over the standard basis for $\mathbb{R}^k$.
\end{lemma}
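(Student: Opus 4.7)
The plan is to use Lemma \ref{Ito_SDE_lemma} to rewrite the Stratonovich SDE \req{limiting_SDE} in It\^o form and verify term-by-term that the result equals \req{limiting_SDE_Ito}. Since the driving process is $Z=(t,W)$ with $W$ a $k$-dimensional Wiener process, only the Wiener components contribute nonzero quadratic covariation ($d[W^\alpha,W^\beta]=\delta_{\alpha\beta}\,dt$), so the Stratonovich-to-It\^o correction comes entirely from the $H_{(\gamma^{-1}\sigma)(u)}\circ dW$ term and equals $\tfrac{1}{2}\sum_\alpha H_{(\gamma^{-1}\sigma)e_\alpha}[H_{(\gamma^{-1}\sigma)e_\alpha}[f]]\,dt$. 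The $dW$-coefficient agrees trivially with the noise in \req{limiting_SDE_Ito}, and the first drift term in \req{limiting_SDE}, $(\gamma^{-1}F)^h$, matches $H_{(\gamma^{-1}F)(u_t)}[f]$ by Lemma \ref{horiz_lift_vf_formula}. The entire task therefore reduces to showing that the It\^o correction together with drifts 2, 3, 4 of \req{limiting_SDE} produce exactly $J^{\beta\alpha}(u_t)K^f_{\beta\alpha}(u_t)$.

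The first step is to expand the correction via the Leibniz rule, using $H_{(\gamma^{-1}\sigma)e_\alpha}=(\gamma^{-1})^\xi_\mu\sigma^\mu_\alpha H_\xi$, into three pieces: (A) $H_\eta$ falling on $(\gamma^{-1})^\xi_\mu$, (B) $H_\eta$ falling on $\sigma^\mu_\alpha$, and (C) the pure double-derivative $H_\eta H_\xi[f]$. Summing over $\alpha$ collapses $\sum_\alpha\sigma^\nu_\alpha\sigma^\mu_\alpha$ to $\Sigma^{\nu\mu}$ in (A) and (C). Piece (B) retains $\sum_\alpha\sigma^\nu_\alpha H_\eta[\sigma^\mu_\alpha]$, and after a rename of the dummies $\mu\leftrightarrow\nu$ it is the exact negative of drift 2 in \req{limiting_SDE}. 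That cancellation is the easy observation.

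The key algebraic tool for the remaining analysis is the Lyapunov identity \req{J_Lyap_eq}, $\gamma J+J\gamma^T=\Sigma$, which gives the substitution
\begin{equation*}
(\gamma^{-1})^\eta_\nu\Sigma^{\nu\mu}=J^{\eta\mu}+(\gamma^{-1})^\eta_\nu J^{\nu\rho}\gamma^\mu_\rho.
\end{equation*}
Applied to (A), one summand yields directly $\tfrac{1}{2}J^{\eta\mu}H_\eta[(\gamma^{-1})^\xi_\mu]H_\xi[f]$ and the other summand cancels the $(\gamma^{-1})^\eta_\mu J^{\mu\chi}\gamma^\nu_\chi$ piece of drift 3; the remaining $J^{\eta\nu}$ piece of drift 3 supplies the other half, and together they reproduce the first-derivative part $J^{\beta\alpha}H_\alpha[(\gamma^{-1})^\mu_\beta]H_\mu[f]$ of $J^{\beta\alpha}K^f_{\beta\alpha}$ (using the symmetry of $J$). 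Applied to (C), the substitution produces two matrix products $(\gamma^{-1}J)^{\xi\eta}$ and $(\gamma^{-1}J)^{\eta\xi}$; using $H_\eta H_\xi=H_\xi H_\eta+[H_\eta,H_\xi]$ to symmetrize the latter, one obtains the symmetric second-derivative piece $(\gamma^{-1}J)^{\xi\eta}H_\eta H_\xi[f]$ that matches the remaining part $J^{\beta\alpha}(\gamma^{-1})^\mu_\beta H_\alpha H_\mu[f]$ of $J^{\beta\alpha}K^f_{\beta\alpha}$, plus a residual $\tfrac{1}{2}(\gamma^{-1}J)^{\eta\xi}[H_\eta,H_\xi][f]$ that is precisely cancelled by drift 4 of \req{limiting_SDE} (again invoking the symmetry of $J$).

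The main obstacle is bookkeeping rather than any deep step: four-index sums intertwine with two distinct bases, that of $\mathbb{R}^n$ underlying the $H_\eta$'s and that of $\mathbb{R}^k$ underlying the Wiener process, so dummy indices must be renamed carefully. The conceptually delicate point is that the Lyapunov substitution in piece (C) breaks the $\eta\leftrightarrow\xi$ symmetry originally present in $(\gamma^{-1})^\eta_\nu(\gamma^{-1})^\xi_\mu\Sigma^{\nu\mu}$; the commutator drift 4 in \req{limiting_SDE} is designed precisely to absorb the resulting antisymmetric remainder, and recognizing this interplay is what makes the four seemingly ad hoc drift corrections in \req{limiting_SDE} combine into the clean It\^o drift $J^{\beta\alpha}K^f_{\beta\alpha}$.
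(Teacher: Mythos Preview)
Your proposal is correct and follows essentially the same approach as the paper: convert the Stratonovich SDE to It\^o form via Lemma~\ref{Ito_SDE_lemma}, expand the correction $\tfrac12\sum_\alpha H_{(\gamma^{-1}\sigma)e_\alpha}[H_{(\gamma^{-1}\sigma)e_\alpha}[f]]$ by the Leibniz rule into three pieces, and use the Lyapunov identity \req{J_Lyap_eq} together with the symmetry of $J$ and the commutator $[H_\eta,H_\xi]$ to match the pieces against the extra drift terms in \req{limiting_SDE} and against the two summands of $J^{\beta\alpha}K^f_{\beta\alpha}$. The only organizational difference is that the paper first isolates the full It\^o correction and then rewrites it as $J^{\beta\alpha}K^f_{\beta\alpha}$ plus the negatives of the drift corrections (using in particular $\gamma^{-1}\Sigma(\gamma^{-1})^T=\gamma^{-1}J+J(\gamma^{-1})^T$ and $\gamma^{-1}\Sigma-2J=\gamma^{-1}J\gamma^T-J$), whereas you pair each piece of the correction directly with the drift term that cancels it; the underlying algebra is identical.
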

\begin{proof}
By Lemma \ref{Ito_SDE_lemma}, $u_t$ satisfies the SDE \req{limiting_SDE} iff
\begin{align}\label{limiting_SDE_Ito2}
df(u_t)=&H_{(\gamma^{-1}F)(u_t)}(u_t)[f]dt+H_{(\gamma^{-1}\sigma)(u_t)} (u_t)[f] dW_t\\
&+\frac{1}{2}(\gamma^{-1}(u_t))^\eta_\mu \sigma_\alpha^\mu(u_t)  H_{ \eta} (u_t) [(\gamma^{-1})^\xi_\nu \sigma_\beta^\nu H_{ \xi}[f] ] d[W^\alpha,W^\beta]_t\notag\\
&-\frac{1}{2}(\gamma^{-1}(u_t))^\eta_\mu\sigma_\alpha^\mu(u_t)(\gamma^{-1}(u_t))^\xi_\nu H_{ \eta} (u_t) [ \sigma_\beta^\nu] H_{ \xi}[f] (u_t)\delta^{\alpha\beta}dt \notag\\
&-\frac{1}{2}\left((\gamma^{-1}(u_t))^\eta_\mu J^{\mu\chi}(u_t)\gamma^\nu_\chi(u_t)-J^{\eta\nu}(u_t)\right)  H_{ \eta} (u_t) [(\gamma^{-1})^\xi_\nu] H_{ \xi}(u_t)[f] dt\notag\\
&-\frac{1}{2}J^{\xi\nu}(u_t)(\gamma^{-1}(u_t))^\eta_\nu  [H_{ \eta} ,H_{ \xi}](u_t) [f]dt\notag
\end{align}
for all $f\in C^\infty(F_O(M))$.

The  Lyapunov equation, \req{J_Lyap_eq}, implies $\gamma^{-1}\Sigma(\gamma^{-1})^T=\gamma^{-1}J+J(\gamma^{-1})^T$. Together with the symmetry of $J$ and $\Sigma$, this yields the following for any $F_O(M)$-valued semimartingale $u_t$ and any $f\in C^\infty(F_O(M))$.
\begin{align}\label{limit_SDE_derivation}
&\frac{1}{2}(\gamma^{-1}(u_t))^\eta_\mu \sigma_\alpha^\mu(u_t)  H_{ \eta} (u_t) [(\gamma^{-1})^\xi_\nu \sigma_\beta^\nu H_{ \xi}[f] ] d[W^\alpha,W^\beta]_t\\
=&\frac{1}{2}(\gamma^{-1}(u_t))^\eta_\mu \sigma_\alpha^\mu(u_t)  \left(H_{ \eta} (u_t) [(\gamma^{-1})^\xi_\nu] \sigma_\beta^\nu(u_t) H_{ \xi}(u_t)[f] \right.\notag\\
&\left.+(\gamma^{-1}(u_t))^\xi_\nu H_{ \eta} (u_t) [ \sigma_\beta^\nu] H_{ \xi}[f] (u_t) +(\gamma^{-1}(u_t))^\xi_\nu\sigma_\beta^\nu(u_t) H_{ \eta} (u_t) [ H_{ \xi}[f]]\right)\delta^{\alpha\beta}dt\notag\\
=&\frac{1}{2}(\gamma^{-1}(u_t))^\eta_\mu \Sigma^{\mu\nu}(u_t)  H_{ \eta} (u_t) [(\gamma^{-1})^\xi_\nu] H_{ \xi}(u_t)[f] dt\\
&+\frac{1}{2}(\gamma^{-1}(u_t))^\eta_\mu\sigma_\alpha^\mu(u_t)(\gamma^{-1}(u_t))^\xi_\nu H_{ \eta} (u_t) [ \sigma_\beta^\nu] H_{ \xi}[f] (u_t)\delta^{\alpha\beta}dt\notag \\
&+\frac{1}{4}(\gamma^{-1}(u_t))^\eta_\mu \Sigma^{\mu\nu}(u_t)(\gamma^{-1}(u_t))^\xi_\nu\left( H_{ \eta} (u_t) [ H_{ \xi}[f]]+ H_{ \xi} (u_t) [ H_{ \eta}[f]]\right)dt\notag\\
=&\frac{1}{2}(\gamma^{-1}(u_t))^\eta_\mu \Sigma^{\mu\nu}(u_t)  H_{ \eta} (u_t) [(\gamma^{-1})^\xi_\nu] H_{ \xi}(u_t)[f] dt\\
&+\frac{1}{2}(\gamma^{-1}(u_t))^\eta_\mu\sigma_\alpha^\mu(u_t)(\gamma^{-1}(u_t))^\xi_\nu H_{ \eta} (u_t) [ \sigma_\beta^\nu] H_{ \xi}[f] (u_t)\delta^{\alpha\beta}dt\notag \\
&+\frac{1}{2}J^{\xi\nu}(u_t)(\gamma^{-1}(u_t))^\eta_\nu \left( [H_{ \eta} ,H_{ \xi}](u_t) [f]+ 2H_{ \xi} (u_t) [ H_{ \eta}[f]]\right)dt\notag\\
=& J^{\xi\nu}(u_t)K^f_{\nu\xi}(u_t)dt+\frac{1}{2}(\gamma^{-1}(u_t))^\eta_\mu\sigma_\alpha^\mu(u_t)(\gamma^{-1}(u_t))^\xi_\nu H_{ \eta} (u_t) [ \sigma_\beta^\nu] H_{ \xi}[f] (u_t)\delta^{\alpha\beta}dt \notag\\
&+\frac{1}{2}\left((\gamma^{-1}(u_t))^\eta_\mu \Sigma^{\mu\nu}(u_t)- 2J^{\eta\nu}(u_t)\right)  H_{ \eta} (u_t) [(\gamma^{-1})^\xi_\nu] H_{ \xi}(u_t)[f] dt\\
&+\frac{1}{2}J^{\xi\nu}(u_t)(\gamma^{-1}(u_t))^\eta_\nu  [H_{ \eta} ,H_{ \xi}](u_t) [f]dt.\notag
\end{align}

Using the Lyapunov equation, \req{J_Lyap_eq}, one more time gives 
\begin{equation}
\gamma^{-1}\Sigma-2J=\gamma^{-1}J\gamma^T-J.
\end{equation}
 Therefore, combining the result of \req{limit_SDE_derivation} with \req{limiting_SDE_Ito2}  yields \req{limiting_SDE_Ito}.
\end{proof}

The proposed limiting SDE, \req{limiting_SDE}, includes a drift term generated by the vector field $S(u)=S^h(u)+S^v(u)$, where
\begin{align}
S^h(u)=&-\frac{1}{2}\sum_\alpha (\gamma^{-1}(u_t))^\eta_\mu\sigma_\alpha^\mu(u_t)(\gamma^{-1}(u_t))^\xi_\nu H_{ \eta} (u_t) [ \sigma_\alpha^\nu] H_{ \xi} (u_t)\label{S_h} \\
&-\frac{1}{2}\left((\gamma^{-1}(u_t))^\eta_\mu J^{\mu\chi}(u_t)\gamma^\nu_\chi(u_t)-J^{\eta\nu}(u_t)\right)  H_{ \eta} (u_t) [(\gamma^{-1})^\xi_\nu] H_{ \xi}(u_t), \notag\\
S^v(u)=&-\frac{1}{2}J^{\xi\nu}(u_t)(\gamma^{-1}(u_t))^\eta_\nu  [H_{ \eta} ,H_{ \xi}](u_t).\label{S_v}
\end{align} 
This is on top of the horizontal lift of the deterministic force to the frame bundle, $(\gamma^{-1}F)^h(u)$.  $S^h(u)$ is a linear combination of the $H_\alpha(u)$'s, hence is a horizontal vector field on $F_O(M)$. It corresponds to an additional ``force'' on the particle's position, which we call  the {\em noise-induced drift}. A nonzero noise-induced drift requires either a non-trivial state dependence of the noise coefficients, $\sigma^\mu_\nu$, or  non-trivial state dependence of the drag,  together with $\gamma^{-1}J\gamma^T\neq J$, but it does not require any deterministic forcing, $F$, to be present in the original system.  An analogous phenomenon was derived in Euclidean space of arbitrary dimension in \cite{Hottovy2014}. See \cite{GiovanniReview} for a recent review of  noise induced drift in systems with multiplicative noise.

 For a torsion free connection on $M$, such as the Levi-Civita connection employed here, the $[H_{ \eta} ,H_{ \xi}]$ are vertical vector fields on $F_O(M)$  that can be expressed in terms of the curvature tensor \cite{hsu2002stochastic}.  Therefore the $S^v$ is a vertical vector field on $F_O(M)$  that results in an additional rotational ``force" on the particle's frame.

Note that if $\gamma$ is a scalar then $\gamma$ and $J$ commute, hence the second term of $S^h$ vanishes.  $S^v$ also vanishes in this case, due to the symmetry of $J^{\xi\eta}$ combined with the antisymmetry of  $[H_{ \eta} ,H_{ \xi}]$.

\subsection{Some Special Cases}
Before we prove convergence to the proposed limiting equation, \req{limiting_SDE}, we will study its form   in several  cases of interest. For all these cases, we make the assumption that $k=n$ and $\sigma^\mu_\nu$ comes from a $\binom{1}{1}$-tensor field, $\sigma(x)$, on $M$,  as in \req{sigma_tensor}. The following lemma will allow us to obtain a geometric formula for the noise induced drift in this case. A similar study could be done of the case of \req{k_vf_case}, but we don't pursue this here. 
\begin{lemma}
Let $\sigma,\tau,\kappa$,  be  $\binom{1}{1}$-tensor fields on $M$ and define the matrix-valued functions $\sigma(u),\tau(u),\kappa(u)$ as done in \req{sigma_tensor}, i.e. $\sigma(u)=u^{-1}\sigma(\pi(u))u$ and similarly for $\tau$, $\kappa$.  Let $Y$ be the vector field on $M$ defined by
\begin{align}
Y^\chi(x)=g^{\nu\rho}(x)\tau^\beta_\nu(x)\kappa_\delta^\chi(x) (\nabla_\beta\sigma_\rho^\delta)(x).
\end{align}
The horizontal lift of $Y$ is given by
\begin{align}
Y^h(u)=\sum_\eta \tau^\nu_\eta(u)H_\nu(u)[\sigma^\xi_\eta] \kappa_\xi^\chi H_\chi(u).
\end{align}
\end{lemma}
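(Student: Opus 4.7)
The plan is to reduce to Lemma~\ref{horiz_lift_vf_formula}, which states $Y^h(u)=H_{u^{-1}Y(\pi(u))}(u)=(u^{-1}Y(\pi(u)))^\chi H_\chi(u)$, and then to compute the components $(u^{-1}Y(\pi(u)))^\chi$ in terms of $\tau(u)$, $\kappa(u)$, and horizontal derivatives of $\sigma(u)$. The key observation is that an orthonormal frame $u$ converts the metric-dependent contraction $g^{\nu\rho}$ in the definition of $Y$ into the Euclidean $\delta^{\nu\rho}$, so that all ``raising of indices'' becomes trivial once one works in the frame $u$.

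First I would establish the following sublemma, which is the real content of the statement: for any $\binom{1}{1}$-tensor field $T$ on $M$, its associated matrix-valued function $T(u)=u^{-1}T(\pi(u))u$ satisfies
\begin{align}
H_\nu(u)[T^\xi_\eta]=\langle(\nabla_{u(e_\nu)}T)(u(e_\eta)),\,u(e_\xi)\rangle_g=(\nabla_{u(e_\nu)}T)^\xi_\eta(u),
\end{align}
where the last expression denotes the $(\xi,\eta)$-entry of $u^{-1}(\nabla_{u(e_\nu)}T)(\pi(u))u$. To prove this, let $\phi(t)$ be the integral curve of $H_\nu$ starting at $u$. By the lemmas preceding Lemma~\ref{horiz_lift_vf_formula}, $x(t)\equiv\pi(\phi(t))$ is the geodesic through $\pi(u)$ with $\dot x(0)=u(e_\nu)$, and each $\phi(t)e_\alpha$ is the parallel translate of $u(e_\alpha)$ along $x(t)$. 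Because $\phi(t)$ is an isometry, $T^\xi_\eta(\phi(t))=\langle T(x(t))\phi(t)e_\eta,\phi(t)e_\xi\rangle_g$. Differentiating at $t=0$, the parallelism of $\phi(t)e_\eta$ and $\phi(t)e_\xi$ kills two of the three Leibniz terms, leaving the claimed identity.

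Applying the sublemma with $T=\sigma$ gives $H_\nu(u)[\sigma^\xi_\eta]=(\nabla_{u(e_\nu)}\sigma)^\xi_\eta(u)$. Next I would evaluate the defining formula for $Y$ in the orthonormal frame $u(e_\alpha)$; there $g^{\nu\rho}=\delta^{\nu\rho}$, and taking frame components of $Y$ (which is the same as applying $u^{-1}$) yields
\begin{align}
(u^{-1}Y(\pi(u)))^\chi
=\tau^\nu_\eta(u)\,\kappa^\chi_\xi(u)\,(\nabla_{u(e_\nu)}\sigma)^\xi_\eta(u)
=\tau^\nu_\eta(u)\,H_\nu(u)[\sigma^\xi_\eta]\,\kappa^\chi_\xi(u).
\end{align}
Substituting into $Y^h(u)=(u^{-1}Y(\pi(u)))^\chi H_\chi(u)$ from Lemma~\ref{horiz_lift_vf_formula} gives the stated formula, with the sum over $\eta$ made explicit exactly because $\eta$ appears once as an upper index on $\tau$ and once as a lower index on $\sigma$.

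The main (and essentially only non-routine) obstacle is the sublemma: showing that $H_\nu(u)$ acts on the scalar functions $T^\xi_\eta$ by the covariant derivative in the direction $u(e_\nu)$, expressed in the frame $u$. Everything else is index bookkeeping driven by the orthogonality of $u$ on $\mathbb{R}^n$.
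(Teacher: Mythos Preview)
Your argument is correct and lands at the same endpoint as the paper (both conclude via Lemma~\ref{horiz_lift_vf_formula}), but the route is genuinely different. The paper proves the identity $\sum_\eta \tau^\nu_\eta(u)H_\nu(u)[\sigma^\xi_\eta]=\tau^\beta_\nu(\pi(u))\,\delta^{\nu\rho}(\nabla_\beta\sigma)^\chi_\rho(\pi(u))(h^{-1})^\xi_\chi$ by a direct coordinate computation: it writes $\sigma^\xi_\eta(u)=(h^{-1})^\xi_\chi\sigma^\chi_\rho(x)h^\rho_\eta$ and applies the explicit expression \eqref{H_def} for $H_\nu$, differentiating in both the base variables and the fiber variables $h^\delta_\mu$; after a long chain of cancellations the connection coefficients $A^\alpha_{\beta\eta}$ reassemble into $(\nabla_\beta\sigma)^\chi_\rho$. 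Your sublemma replaces all of that with a one-line geometric observation: along the integral curve of $H_\nu$ the frame is parallel, so differentiating $T^\xi_\eta(\phi(t))=\langle T(x(t))\phi(t)e_\eta,\phi(t)e_\xi\rangle_g$ kills the frame terms by metric compatibility and leaves exactly $(\nabla_{u(e_\nu)}T)^\xi_\eta(u)$. This is cleaner, manifestly coordinate-free, and isolates the reusable fact that $H_\nu$ acts on lifted tensor components as the covariant derivative in the $u(e_\nu)$ direction; the paper's computation, by contrast, verifies the same fact only implicitly and only for the particular combination appearing in the lemma.
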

\begin{proof}
We work in the domain of a coordinate chart and local o.n. frame $E_\alpha$  and use the  notation  defined in Section \ref{sec:ode_in_coords}.  Let $\sigma^\mu_\nu(x)$ be the components of $\sigma$ in the frame $E_\alpha$.  Therefore
\begin{align}
 \sigma_\xi^\nu(u)=(h^{-1})_\mu^\nu\sigma^\mu_\alpha(x)   h^\alpha_\xi
\end{align}
and similarly for $\tau$, $\kappa$. Using Section \ref{sec:ode_in_coords} and the fact that $h\in O(\mathbb{R}^n)$, we obtain
\begin{align}
&\sum_\eta \tau^\nu_\eta(u)H_\nu(u)[\sigma^\xi_\eta]=\sum_\eta \tau^\nu_\eta(u) h^\beta_\nu( (\Lambda^{-1})_\beta^i(x)\partial_i\sigma^\xi_\eta(u) - h^\kappa_\mu A_{\beta \kappa}^\delta(x)\partial_{h^\delta_\mu}\sigma^\xi_\eta(u))\\
=&\sum_\eta \tau^\beta_\nu(x)   h^\nu_\eta( (\Lambda^{-1})_\beta^i(x)\partial_i((h^{-1})_\chi^\xi\sigma^\chi_\rho(x)   h^\rho_\eta) - h^\kappa_\mu A_{\beta \kappa}^\delta(x)\partial_{h^\delta_\mu}((h^{-1})_\chi^\xi\sigma^\chi_\rho(x)   h^\rho_\eta))\notag\\
=& \tau^\beta_\nu(x)   (\sum_\eta h^\rho_\eta h^\nu_\eta (\Lambda^{-1})_\beta^i(x)(h^{-1})_\chi^\xi\partial_i(\sigma^\chi_\rho )(x)   -\sum_\eta h^\nu_\eta h^\kappa_\mu A_{\beta \kappa}^\delta(x)\partial_{h^\delta_\mu}((h^{-1})_\chi^\xi)\sigma^\chi_\rho(x)   h^\rho_\eta\notag\\
& -\sum_\eta h^\nu_\eta h^\kappa_\mu A_{\beta \kappa}^\delta(x)(h^{-1})_\chi^\xi\sigma^\chi_\rho(x)   \partial_{h^\delta_\mu}(h^\rho_\eta))\\
=& \tau^\beta_\nu(x)   (\delta^{\rho\nu} (\Lambda^{-1})_\beta^i(x)(h^{-1})_\chi^\xi\partial_i(\sigma^\chi_\rho )(x)   -\delta^{\rho\nu}h^\kappa_\mu A_{\beta \kappa}^\delta(x)((-h^{-1} \partial_{h^\delta_\mu}(h) h^{-1})_\chi^\xi)\sigma^\chi_\rho(x) \notag \\
& -\sum_\eta h^\nu_\eta h^\kappa_\mu A_{\beta \kappa}^\delta(x)(h^{-1})_\chi^\xi\sigma^\chi_\rho(x)   \delta^{\rho}_\delta\delta^{\mu}_\eta)\\
=& \tau^\beta_\nu(x)  \delta^{\nu\rho}  ((\Lambda^{-1})_\beta^i(x)(h^{-1})_\chi^\xi\partial_i(\sigma^\chi_\rho )(x)   +h^\kappa_\mu A_{\beta \kappa}^\delta(x)(h^{-1})^\xi_\phi \partial_{h^\delta_\mu}(h^\phi_\psi)( h^{-1})_\chi^\psi\sigma^\chi_\rho(x) \notag \\
& -A_{\beta \rho}^\delta(x)(h^{-1})_\chi^\xi\sigma^\chi_\delta(x)  )\\
=& \tau^\beta_\nu(x)  \delta^{\nu\rho}  ((\Lambda^{-1})_\beta^i(x)(h^{-1})_\chi^\xi\partial_i(\sigma^\chi_\rho )(x)   +h^\kappa_\mu A_{\beta \kappa}^\delta(x)(h^{-1})^\xi_\delta ( h^{-1})_\chi^\mu\sigma^\chi_\rho(x) \notag \\
& -A_{\beta \rho}^\delta(x)(h^{-1})_\chi^\xi\sigma^\chi_\delta(x)  )\\
=& \tau^\beta_\nu(x)  \delta^{\nu\rho}((\Lambda^{-1})_\beta^i(x)\partial_i(\sigma^\chi_\rho )(x)   + A_{\beta \kappa}^\chi(x)\sigma^\kappa_\rho(x)  -A_{\beta \rho}^\delta(x)\sigma^\chi_\delta(x)  ) (h^{-1})_\chi^\xi.
\end{align} 

On the last line, the terms in parentheses are the components of the covariant derivative of $\sigma$ in the direction $E_\beta$, therefore
\begin{align}
&\sum_\eta \tau^\nu_\eta(u)H_\nu(u)[\sigma^\xi_\eta]= \tau^\beta_\nu(x)  \delta^{\nu\rho}(\nabla_\beta\sigma)_\rho^\chi(x) (h^{-1})_\chi^\xi.
\end{align} 
All tensor components are taken in the o.n. frame $E_\alpha$, hence $g^{\mu\nu}=\delta^{\mu\nu}$.  Therefore
\begin{align}
&\sum_\eta \tau^\nu_\eta(u)H_\nu(u)[\sigma^\xi_\eta]\kappa_\xi^\chi(u) e_\chi=u^{-1}(g^{\nu\rho}\tau^\beta_\nu(x)  (\nabla_\beta\sigma)_\rho^\alpha(x)\kappa^\mu_\alpha(x) E_\mu) =u^{-1}Y(x).
\end{align} 
This proves 
\begin{align}
\sum_\eta \tau^\nu_\eta(u)H_\nu(u)[\sigma^\xi_\eta] \kappa_\xi^\chi(u) H_\chi(u)=H_{u^{-1}Y(\pi(u))}=Y^h(u).
\end{align}
\end{proof}

Several applications of the above lemma can be used to show that, for tensor $\sigma$, the noise induced drift is a horizontal lift.
\begin{corollary}
Define the smooth tensor fields on $M$, 
\begin{align}
\Sigma^{\xi\eta}(x)=&\sigma^\xi_\alpha(x)g^{\alpha\beta}(x)\sigma_\beta^\eta(x),\\
 J^{\mu\nu}(x)= &\int_0^\infty (e^{-y \gamma(x)})^\mu_\delta\Sigma^{\delta\eta}(x)  (e^{-y\gamma(x)})^\nu_\eta   dy, \\
L^{\mu\nu}(x)=& \int_0^\infty (\gamma^{-1}(x)e^{-y\gamma(x)})^\mu_\alpha  \Sigma^{\alpha\beta}(x) (\gamma(x)e^{-y \gamma(x)})^\nu_\beta dy.
\end{align}
Then the noise induced drift vector field, $S^h$, is the horizontal lift of the vector field, $Y$, on $M$ defined by
\begin{align}
 Y^\chi=-\frac{1}{2}g^{\nu\rho}(\gamma^{-1}\sigma)^\beta_\nu(\gamma^{-1})_\delta^\chi \nabla_\beta\sigma_\rho^\delta-\frac{1}{2}(L^{\beta\rho}-J^{\beta\rho})\nabla_\beta(\gamma^{-1})_\rho^\chi.
\end{align}

\end{corollary}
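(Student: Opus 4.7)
The plan is to apply the preceding lemma twice -- once to each of the two summands in $S^h$ -- after first verifying that, under the hypothesis $\sigma(u)=u^{-1}\sigma(\pi(u))u$, every frame-bundle matrix appearing in $S^h$ is the standard lift of a tensor field on $M$.

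First I would establish the requisite dictionary between frame-bundle quantities and tensor fields on $M$. Because $u \in F_O(M)$ is an isometry, in an o.n.\ frame on $T_{\pi(u)}M$ one has $u^{-1}=u^T g(\pi(u))$, so the matrix identity $\Sigma(u)=\sigma(u)\sigma(u)^T$ simplifies to $\Sigma(u)=u^{-1}\Sigma(\pi(u))u$, where $\Sigma^{\xi\eta}(x)=\sigma^\xi_\alpha g^{\alpha\beta}\sigma^\eta_\beta$ is the tensor in the statement. Substituting $e^{-y\gamma(u)}=u^{-1}e^{-y\gamma(\pi(u))}u$ into the definition of $G$ then gives $J(u)=u^{-1}J(\pi(u))u$, with $J(x)$ as defined in the corollary. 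A one-line calculation, using that $\gamma$ commutes with $e^{-y\gamma}$, yields the algebraic identity
\begin{equation}
L=\gamma^{-1}J\gamma^T
\end{equation}
between tensor fields on $M$; consequently $(\gamma^{-1})^\eta_\mu J^{\mu\chi}\gamma^\nu_\chi-J^{\eta\nu}$ at $u$ equals $(L-J)^{\eta\nu}(u)$.

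With these identifications I would apply the previous lemma to each summand of $S^h$ in turn. For the first summand, take $\tau=\gamma^{-1}\sigma$, $\sigma=\sigma$, and $\kappa=\gamma^{-1}$; the lemma then identifies that summand as the horizontal lift of $-\tfrac{1}{2}g^{\nu\rho}(\gamma^{-1}\sigma)^\beta_\nu(\gamma^{-1})^\chi_\delta\nabla_\beta\sigma^\delta_\rho$. For the second summand, regard $L-J$ -- a priori a $\binom{2}{0}$-tensor -- as a $\binom{1}{1}$-tensor by lowering one index via $g$ (in an o.n.\ frame the components are unchanged), and apply the lemma with $\tau=L-J$, $\sigma=\gamma^{-1}$, and $\kappa=\mathrm{Id}$. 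This presents the second summand as the horizontal lift of $-\tfrac{1}{2}(L^{\beta\rho}-J^{\beta\rho})\nabla_\beta(\gamma^{-1})^\chi_\rho$. Summing the two horizontal lifts gives $S^h=Y^h$ with $Y$ as stated.

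The one delicate point, and the place I would be most careful, is the index bookkeeping when passing between $(2,0)$- and $(1,1)$-tensor descriptions of $J$, $L$, and $\Sigma$: the previous lemma is stated for $(1,1)$-tensors, whereas $J$, $L$, $\Sigma$ are most naturally $(2,0)$-tensors. The point that makes everything compatible is that the frame-bundle components of a tensor lifted via $u$ are the same numerical array regardless of how one positions the indices, precisely because $u$ is orthogonal and each $\mathrm{Ad}_{u^{-1}}$ preserves $g$. Once this point is fixed, no analytic difficulty remains and the corollary follows directly from the lemma.
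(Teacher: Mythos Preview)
Your proposal is correct and matches the paper's approach exactly: the paper simply states that the corollary follows from ``several applications of the above lemma,'' and what you outline is precisely that---one application with $(\tau,\sigma,\kappa)=(\gamma^{-1}\sigma,\sigma,\gamma^{-1})$ for the first summand and one with $(\tau,\sigma,\kappa)=(L-J,\gamma^{-1},\mathrm{Id})$ for the second, after the algebraic identification $L=\gamma^{-1}J\gamma^{T}$. Your remark about the $(2,0)$ versus $(1,1)$ bookkeeping is the only point requiring care, and your explanation (orthogonality of $u$ makes the numerical components agree) is the right one.
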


An important special case is when the drag and noise satisfy the fluctuation-dissipation relation.
\begin{corollary}[A Particle Satisfying the Fluctuation-Dissipation Relation]
Suppose the  fluctuation-dissipation relation is satisfied,
\begin{align}
\gamma^\mu_\nu(x)=\frac{1}{2k_BT}\sigma^\mu_\alpha(x) g^{\alpha\beta}(x)\sigma_{\beta}^\delta(x) g_{\delta\nu}(x),
\end{align}
where $T$ is the temperature and $k_B$ is the Boltzmann constant. Then $\gamma^\mu_\nu(u)=\frac{1}{2k_BT}\Sigma^{\mu\alpha}(u)\delta_{\alpha\nu}$, $J^{\mu\nu}=k_BT\delta^{\mu\nu}$, $S^v=0$, and $S^h$ is the horizontal lift of
\begin{align}
 Y^\chi=-k_BT g^{\beta\mu}(\sigma^{-1})_\mu ^\rho(\gamma^{-1})_\delta^\chi \nabla_\beta\sigma_\rho^\delta.
\end{align}
\end{corollary}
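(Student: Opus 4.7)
The plan is to specialize each of the four assertions by combining two observations. First, on $F_O(M)$ the orthonormality of frames reduces $g_{\alpha\beta}(u)$ and $g^{\alpha\beta}(u)$ to $\delta_{\alpha\beta}$ and $\delta^{\alpha\beta}$ respectively. Second, under the fluctuation--dissipation hypothesis this forces $\gamma(u)$ to be a scalar multiple of the symmetric matrix $\Sigma(u)$: substituting $g(u)=\delta$ into $\gamma^\mu_\nu=\frac{1}{2k_BT}\sigma^\mu_\alpha g^{\alpha\beta}\sigma^\delta_\beta g_{\delta\nu}$ and contracting yields $\gamma^\mu_\nu(u)=\frac{1}{2k_BT}\Sigma^{\mu\alpha}(u)\delta_{\alpha\nu}$. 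In particular $\gamma(u)^T=\gamma(u)$ on $F_O(M)$; this symmetry is the workhorse of the remaining steps.

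Next I would verify that $J^{\mu\nu}=k_BT\delta^{\mu\nu}$ solves the Lyapunov equation \req{J_Lyap_eq}: indeed $\gamma(k_BT I)+(k_BT I)\gamma^T=2k_BT\gamma=\Sigma$ by the first step, and uniqueness of the Lyapunov solution (already invoked in the derivation of \req{mvv_eq}) identifies $J$ as claimed. Plugging this $J$ into the vertical drift gives $S^v(u)=-\tfrac{k_BT}{2}(\gamma^{-1})^\eta_\xi[H_{\eta},H_{\xi}](u)$; pairing the symmetry of $\gamma^{-1}$ against the antisymmetry of $[H_{\eta},H_{\xi}]$ in $(\eta,\xi)$, a relabeling of summation indices shows $S^v=-S^v$, hence $S^v=0$.

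For $S^h$, the bracketed prefactor $(\gamma^{-1})^\eta_\mu J^{\mu\chi}\gamma^\nu_\chi-J^{\eta\nu}$ of its second term collapses to $k_BT(\gamma^{-1}\gamma)^\eta_\nu-k_BT\delta^{\eta\nu}=0$ once $J=k_BT I$ and $\gamma^\nu_\chi=\gamma^\chi_\nu$ are used, killing that term. The surviving first term is recognized by the Lemma preceding the previous Corollary (taking $\tau=\gamma^{-1}\sigma$, $\kappa=\gamma^{-1}$, and leaving $\sigma$ itself in the covariant-derivative slot) as the horizontal lift of $Y^\chi=-\tfrac{1}{2}g^{\nu\rho}(\gamma^{-1}\sigma)^\beta_\nu(\gamma^{-1})^\chi_\delta\nabla_\beta\sigma^\delta_\rho$. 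To finish, I would rewrite the FD relation as the matrix identity $\gamma=\frac{1}{2k_BT}\sigma g^{-1}\sigma^T g$ on $M$, invert to obtain $(\gamma^{-1}\sigma)^\beta_\nu=2k_BT\,g^{\beta\alpha}(\sigma^{-1})^\kappa_\alpha g_{\kappa\nu}$, substitute into $Y^\chi$, and use the contraction $g^{\nu\rho}g_{\kappa\nu}=\delta^\rho_\kappa$ to collapse two metric factors and reach the claimed $Y^\chi=-k_BT g^{\beta\mu}(\sigma^{-1})_\mu^\rho(\gamma^{-1})_\delta^\chi\nabla_\beta\sigma^\delta_\rho$. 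The main obstacle is not analytic but bookkeeping: distinguishing matrix identities on $F_O(M)$ (where $g=\delta$ and $\gamma$ is symmetric) from tensor-component identities on $M$, and keeping track of index placement when inverting $\sigma^T$ under the FD substitution.
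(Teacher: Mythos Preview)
Your proof is correct and follows essentially the same route the paper implies: the statement is given as a corollary of the preceding Corollary (which expresses $S^h$ as the horizontal lift of $Y^\chi=-\tfrac{1}{2}g^{\nu\rho}(\gamma^{-1}\sigma)^\beta_\nu(\gamma^{-1})_\delta^\chi \nabla_\beta\sigma_\rho^\delta-\tfrac{1}{2}(L^{\beta\rho}-J^{\beta\rho})\nabla_\beta(\gamma^{-1})_\rho^\chi$), and your argument supplies exactly the missing verifications---that $\gamma(u)$ is symmetric, $J=k_BT I$, the second drift term vanishes, and $(\gamma^{-1}\sigma)^\beta_\nu$ simplifies to $2k_BT\,g^{\beta\alpha}(\sigma^{-1})^\kappa_\alpha g_{\kappa\nu}$. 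The only cosmetic difference is that you kill the second term of $S^h$ directly by showing $\gamma^{-1}J\gamma^T=J$, whereas the paper's framing would have you compute $L=k_BT I$ and observe $L=J$; these are equivalent one-line checks under the symmetry of $\gamma(u)$.
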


Next we specialize to scalar drag.
\begin{corollary}[A Particle with Scalar Drag]
If $\gamma^\mu_\nu(x)=\gamma(x)\delta^\mu_\nu$ for some $\gamma\in C^\infty(M)$ then $S^v=0$ and
\begin{align}
S^h=-\frac{1}{2}(\gamma^{-2}Y)^h,\hspace{2mm} Y^\chi=g^{\nu\rho}\sigma^\beta_\nu \nabla_\beta\sigma_\rho^\chi.
\end{align} 
 The proposed limiting SDE is then
\begin{align}
du_t=&(\gamma^{-1}F)^h(u_t)dt-\frac{1}{2}(\gamma^{-2}Y)^h(u_t)dt+\gamma^{-1}(\pi(u_t))H_{\sigma(u_t)} (u_t) \circ dW_t.
\end{align}
\end{corollary}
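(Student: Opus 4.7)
The plan is to specialize the general formulas for $S^h$ and $S^v$ preceding the corollary to the scalar-drag case and watch two algebraic cancellations drop out. First observe that, because $\gamma(x)$ is a scalar multiple of the identity at each $x$, we have $\gamma(u) = u^{-1}\gamma(\pi(u)) u = \gamma(\pi(u)) I$, and similarly $(\gamma^{-1})^\mu_\nu(u) = \gamma^{-1}(\pi(u))\delta^\mu_\nu$. The Lyapunov equation $\gamma J + J\gamma^T = \Sigma$ then collapses to $2\gamma J = \Sigma$, giving $J^{\mu\nu} = \Sigma^{\mu\nu}/(2\gamma)$; in particular $J$ is symmetric and commutes with $\gamma$.

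For the vertical part $S^v = -\tfrac{1}{2}J^{\xi\nu}(\gamma^{-1})^\eta_\nu [H_\eta, H_\xi]$, pulling the scalar $\gamma^{-1}(\pi(u))$ out and relabeling reduces the contraction to $J^{\xi\eta}[H_\eta, H_\xi]$. Since $J^{\xi\eta}$ is symmetric in $(\xi,\eta)$ while $[H_\eta, H_\xi]$ is antisymmetric in $(\eta,\xi)$, this contraction vanishes, so $S^v = 0$.

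For the horizontal part, the second summand of $S^h$ carries the factor $(\gamma^{-1})^\eta_\mu J^{\mu\chi}\gamma^\nu_\chi - J^{\eta\nu}$, which under the scalar hypothesis becomes $\gamma^{-1}\gamma\, J^{\eta\nu} - J^{\eta\nu} = 0$. Only the first summand survives, and it simplifies to
$$S^h(u) = -\tfrac{1}{2}\gamma^{-2}(\pi(u))\sum_\alpha \sigma^\eta_\alpha(u)\, H_\eta(u)[\sigma^\xi_\alpha]\, H_\xi(u).$$
Applying the preceding lemma (on horizontal lifts of vector fields of the form $g^{\nu\rho}\tau^\beta_\nu \kappa^\chi_\delta \nabla_\beta \sigma^\delta_\rho$) with $\tau = \sigma$ and $\kappa = I$ identifies the sum over $\alpha$ with $Y^h(u)$, where $Y^\chi = g^{\nu\rho}\sigma^\beta_\nu \nabla_\beta \sigma^\chi_\rho$. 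Since horizontal lifting is $C^\infty(M)$-linear, multiplication by the function $-\tfrac{1}{2}\gamma^{-2}$ produces $(-\tfrac{1}{2}\gamma^{-2} Y)^h$, as claimed.

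Finally, substituting $S^v = 0$ and $S^h = -\tfrac{1}{2}(\gamma^{-2}Y)^h$ into the general limiting SDE, together with the identity $H_{(\gamma^{-1}\sigma)(u)}(u) = \gamma^{-1}(\pi(u))\, H_{\sigma(u)}(u)$ (again by scalar-function linearity of the horizontal vector field assignment $v \mapsto H_v$), yields the stated SDE. No step is a real obstacle; the only point deserving care is recognising that both cancellations arise from symmetry/antisymmetry once $\gamma$ has been factored out as a scalar and that the surviving drift matches the previous lemma on horizontal lifts of covariant derivatives.
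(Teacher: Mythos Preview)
Your proof is correct and follows essentially the same approach as the paper. The paper does not give a separate proof of this corollary, but the remarks immediately preceding the special-cases subsection already state exactly your two cancellation arguments (symmetry of $J$ against antisymmetry of $[H_\eta,H_\xi]$ for $S^v$, and $\gamma^{-1}J\gamma^T=J$ for the second summand of $S^h$), and the identification of the surviving term via the lemma on horizontal lifts with $\tau=\sigma$, $\kappa=I$ is the intended route.
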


Next, we consider the case where both $\gamma$ and  $\sigma$  are scalars.
\begin{corollary}[A Particle with Scalar Drag and Noise]
Specializing further to $\gamma^\mu_\nu(x)=\gamma(x)\delta^\mu_\nu$ and $\sigma^\mu_\nu(x)=\sigma(x) \delta^\mu_\nu$ for some $\gamma,\sigma\in C^\infty(M)$ we obtain $S^v=0$,
\begin{align}
S^h=&-\frac{1}{2} (\gamma^{-2}\sigma \nabla \sigma)^h,
\end{align}
and hence the proposed limiting SDE is
\begin{align}
du_t=&(\gamma^{-1}F)^h(u_t)dt-\frac{1}{2} (\gamma^{-2}\sigma \nabla \sigma)^h(u_t)dt+(\gamma^{-1}\sigma)(\pi(u_t))H (u_t) \circ dW_t.
\end{align}
In particular, if $\sigma$ is a constant then the noise induced drift vanishes.
\end{corollary}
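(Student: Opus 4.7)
The plan is to obtain this corollary as a direct specialization of the preceding corollary (scalar drag case). Starting from that result, I have $S^v = 0$ and $S^h = -\frac{1}{2}(\gamma^{-2}Y)^h$, where $Y^\chi = g^{\nu\rho}\sigma^\beta_\nu\nabla_\beta\sigma_\rho^\chi$. So the only real work is to simplify the vector field $Y$ on $M$ under the additional assumption $\sigma^\mu_\nu(x) = \sigma(x)\delta^\mu_\nu$, and then to rewrite the noise term in the limiting SDE under the same assumption.

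To simplify $Y$, I would plug $\sigma^\mu_\nu = \sigma \delta^\mu_\nu$ into the expression for $Y^\chi$ and use the Leibniz rule together with the fact that the Kronecker $\binom{1}{1}$-tensor is parallel with respect to the Levi-Civita connection, so that $\nabla_\beta(\sigma\delta^\chi_\rho) = (\nabla_\beta\sigma)\delta^\chi_\rho$. This collapses $Y^\chi$ to $g^{\nu\rho}\sigma\delta^\beta_\nu(\nabla_\beta\sigma)\delta^\chi_\rho = g^{\beta\chi}\sigma\nabla_\beta\sigma$, which is exactly the components of the vector field $\sigma\nabla\sigma$, interpreting $\nabla\sigma$ as the gradient of the scalar $\sigma$ (i.e.\ the metric dual of $d\sigma$). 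Substituting into the formula for $S^h$ from the previous corollary gives $S^h = -\tfrac{1}{2}(\gamma^{-2}\sigma\nabla\sigma)^h$.

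For the SDE itself, I would import the form given in the scalar-drag corollary and rewrite the diffusion coefficient under the new assumption. Because $\sigma(u) = u^{-1}\sigma(\pi(u))u = \sigma(\pi(u))I$ is a scalar multiple of the identity on $\mathbb{R}^n$, the contraction $H_{(\gamma^{-1}\sigma)(u)}(u)\circ dW_t$ reduces to $(\gamma^{-1}\sigma)(\pi(u))\,H(u)\circ dW_t$, where $H(u)$ denotes the tuple of canonical horizontal vector fields $H_\alpha$ of Lemma \ref{horizontal_vf}. Combining this with the horizontal lift of $\gamma^{-1}F$ and the simplified $S^h$ yields the stated SDE.

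The final assertion that a constant $\sigma$ kills the noise-induced drift is immediate: $\nabla\sigma = 0$ makes $Y = 0$, hence $S^h = 0$. There is no serious obstacle here; the only mild care needed is to remember that $\delta^\chi_\rho$ is parallel (so the covariant derivative passes through it), and to interpret $\nabla\sigma$ consistently as a vector field via the metric dual, which is why the $g^{\beta\chi}$ appears in the intermediate calculation.
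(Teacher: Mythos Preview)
Your proposal is correct and follows exactly the route the paper intends: the corollary is stated without proof in the paper because it is a direct specialization of the preceding scalar-drag corollary, and your computation—using that $\delta^\chi_\rho$ is parallel to reduce $Y$ to $\sigma\nabla\sigma$, and that $\sigma(u)=\sigma(\pi(u))I$ to pull the scalar out of the horizontal vector field—is precisely the intended simplification.
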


Finally, we arrive at the case that leads to Brownian motion in the limit.
\begin{corollary}[Brownian Motion]
If $\gamma=\sigma$ are constant scalars and $F=0$ then the proposed limiting SDE is
\begin{align}
du_t=&H (u_t) \circ dW_t,
\end{align}
whose solution is the lift of a Brownian motion on $M$ to the frame bundle \cite{hsu2002stochastic}.
\end{corollary}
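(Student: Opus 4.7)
The plan is to derive the stated SDE directly from the preceding corollary and then invoke the standard Eells--Elworthy--Malliavin construction of Brownian motion on the frame bundle to identify $\pi(u_t)$ as a Brownian motion on $M$.

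First I would specialize the formula
\begin{align}
du_t=&(\gamma^{-1}F)^h(u_t)dt-\frac{1}{2} (\gamma^{-2}\sigma \nabla \sigma)^h(u_t)dt+(\gamma^{-1}\sigma)(\pi(u_t))H (u_t) \circ dW_t
\end{align}
from the previous corollary to the case $\gamma=\sigma=$ constant and $F=0$. Three simplifications occur in sequence: the term $(\gamma^{-1}F)^h$ vanishes because $F\equiv 0$; the noise-induced drift $-\tfrac{1}{2}(\gamma^{-2}\sigma\nabla\sigma)^h$ vanishes because $\sigma$ is constant, so $\nabla\sigma=0$; and the diffusion coefficient $\gamma^{-1}\sigma$ equals $1$. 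What remains is precisely $du_t=H(u_t)\circ dW_t$.

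Next I would identify the projected process. By Lemma~\ref{horizontal_vf}, for any $v\in\mathbb{R}^n$ the horizontal vector field $H_v$ satisfies $\pi_* H_v(u)=u(v)$. Applying the Stratonovich change-of-variables formula (Lemma~\ref{SDE_diffeo_lemma}) to $x_t\equiv \pi(u_t)$ with $W$ as the driving process yields $dx_t=u_t\circ dW_t$ in the sense of an $M$-valued semimartingale. This is the standard SDE whose solution defines Brownian motion on $(M,g)$ via horizontal lift: the process $u_t$ is a horizontal semimartingale on $F_O(M)$ (since each $H_\alpha$ is horizontal), so $u_t$ realizes stochastic parallel transport along $x_t$, and a standard computation using It\^o's formula shows the generator of $x_t$ acting on $f\in C^\infty(M)$ equals $\tfrac{1}{2}\sum_\alpha H_\alpha^2[f\circ\pi]=\tfrac{1}{2}\Delta_M f$, where $\Delta_M$ is the Laplace--Beltrami operator. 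I would simply cite \cite{hsu2002stochastic} for this last identification.

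This corollary is essentially a matter of bookkeeping, so I do not anticipate a real obstacle; the only subtlety worth spelling out is that $\gamma=\sigma$ forces the diffusion coefficient to be exactly $1$ (rather than an arbitrary constant), which is necessary to get the standard-normalized Brownian motion whose generator is $\tfrac{1}{2}\Delta_M$ rather than a rescaled version. Any other ratio would produce a time-rescaled Brownian motion, which is consistent with the remark in the introduction that the general fluctuation--dissipation case yields a ``rescaled'' Brownian motion.
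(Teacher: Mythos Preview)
Your proposal is correct and matches the paper's approach: the corollary is stated without proof in the paper, as it follows immediately from specializing the preceding corollary (scalar drag and noise) to $\gamma=\sigma$ constant and $F=0$, together with the citation to \cite{hsu2002stochastic} for the identification with Brownian motion. Your added remarks about the generator and the role of $\gamma=\sigma$ in normalizing the diffusion are accurate elaborations but not required for the corollary as stated.
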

Once we prove convergence in the next section, this last corollary will complete the objective set forth in the introduction, namely deriving Brownian motion on the manifold as the small mass limit of a noisy inertial system with drag.\\

\subsection{Behavior Under Change of Frame}
The transformation properties of the proposed  limiting equation \req{limiting_SDE} under right multiplication by an orthogonal matrix are given by the following lemma.
\begin{lemma}
Let $h\in O(\mathbb{R}^n)$ and $u_t$ be the solution to \req{limiting_SDE} corresponding to the initial condition $u_0$.  Suppose $\sigma$ is of the form \req{sigma_tensor}. Then
\begin{align}
\tilde u_t=u_t h
\end{align}
is the solution to \req{limiting_SDE}  corresponding to the initial condition $u_0 h$ with the Wiener process $W_t$ replaced by the Wiener process $\tilde W_t=h^{-1} W_t$.
\end{lemma}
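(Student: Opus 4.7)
The plan is to apply the Stratonovich change of variables formula (Lemma \ref{SDE_diffeo_lemma}) to the diffeomorphism $\Phi = R_h \colon F_O(M) \to F_O(M)$, push the SDE \req{limiting_SDE} for $u_t$ forward to $\tilde u_t = u_t h$, and then recognize that the resulting equation is exactly \req{limiting_SDE} driven by $\tilde W_t = h^{-1}W_t$. Since $h$ is orthogonal, $\tilde W_t$ is again a standard $\mathbb{R}^k$-valued Wiener process, and $\tilde u_0 = u_0 h$ is automatic. After the pushforward, there is nothing left to do beyond checking that each of the four ingredients in the right-hand side of \req{limiting_SDE}—the forcing drift, the noise term, and the horizontal and vertical pieces of the noise-induced drift—transforms in the expected equivariant fashion.

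The transformation rules I will repeatedly invoke are: (i) from Lemma \ref{horizontal_vf}, $(R_h)_* H_v(u) = H_{h^{-1}v}(uh)$, hence $(R_h)_* H_\eta = (h^{-1})^\alpha_\eta H_\alpha$ and consequently $(R_h)_*[H_\eta,H_\xi] = (h^{-1})^\alpha_\eta(h^{-1})^\beta_\xi [H_\alpha,H_\beta]$; (ii) from Lemma \ref{horiz_lift_vf_formula}, the horizontal lift of any vector field on $M$ is $R_h$-invariant; (iii) directly from the definitions on the frame bundle, $\gamma(uh) = h^{-1}\gamma(u)h$, $F(uh) = h^{-1}F(u)$, and under the tensor form \req{sigma_tensor}, $\sigma(uh) = h^{-1}\sigma(u)h$, hence $\Sigma(uh) = h^{-1}\Sigma(u)h$; (iv) by uniqueness of solutions to the Lyapunov equation \req{J_Lyap_eq}, $J(uh) = h^{-1}J(u)h$.

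I then treat the four terms in turn. The deterministic forcing term $(\gamma^{-1}F)^h$ is a horizontal lift from $M$, hence $R_h$-invariant by (ii). Under the standing tensor assumption on $\sigma$, the Corollary in the Some Special Cases subsection identifies $S^h$ as the horizontal lift of an explicit vector field on $M$, so $S^h$ is $R_h$-invariant by the same reasoning. For the noise term, the transformation of $\gamma^{-1}\sigma$ gives
\begin{equation*}
(R_h)_* H_{(\gamma^{-1}\sigma)(u)e_\alpha}(u)\circ dW_t^\alpha = H_{(\gamma^{-1}\sigma)(\tilde u_t)e_\delta}(\tilde u_t)\circ (h^{-1}dW_t)^\delta = H_{(\gamma^{-1}\sigma)(\tilde u_t)}(\tilde u_t)\circ d\tilde W_t,
\end{equation*}
which matches the noise term of \req{limiting_SDE} written in terms of $\tilde W_t$ and $\tilde u_t$.

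The only step that requires genuine care is the equivariance of the vertical contribution $S^v = -\tfrac{1}{2}J^{\xi\nu}(\gamma^{-1})^\eta_\nu[H_\eta,H_\xi]$, since it is not a horizontal lift from $M$. Pushing forward via (i) and substituting (iii), (iv) reduces the check to the matrix identity $h^{-1}\bigl[J(u)(\gamma^{-1}(u))^T\bigr] h = J(\tilde u)(\gamma^{-1}(\tilde u))^T$ with $\tilde u = uh$, which follows immediately from $J(uh) = h^{-1}J(u)h$, $\gamma^{-1}(uh) = h^{-1}\gamma^{-1}(u)h$, and $h^T = h^{-1}$. Once this index bookkeeping is carried out, the four transformed terms sum to exactly the right-hand side of \req{limiting_SDE} with $u_t$ replaced by $\tilde u_t$ and $W_t$ replaced by $\tilde W_t$, which is what was to be shown.
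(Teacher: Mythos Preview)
Your proposal is correct and follows essentially the same route as the paper's proof: apply Lemma \ref{SDE_diffeo_lemma} to $R_h$, use right-invariance of horizontal lifts for the forcing drift and $S^h$, compute the transformation of $S^v$ directly via $(R_h)_*H_\eta=(h^{-1})^\alpha_\eta H_\alpha$, and recognize the pushed-forward noise term as the original noise term driven by $\tilde W_t=h^{-1}W_t$. The only minor difference is that you explicitly invoke the Corollary in the special-cases subsection to justify that $S^h$ is a horizontal lift (hence $R_h$-invariant), whereas the paper leaves this implicit; your version is, if anything, slightly more transparent on that point.
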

\begin{proof}
$\tilde u_t$ is a semimartingale with initial condition $u_0 h$.  Right multiplication by $h$ is a diffeomorphism, and hence by Lemma \ref{SDE_diffeo_lemma}, $\tilde u_t$ is a solution to
\begin{align}\label{u_tilde_eq}
d\tilde u_t=&\left((R_{h})_*(\gamma^{-1}F)^{h}+(R_{h})_*S^h+(R_{h})_*S^v\right)(\tilde u_t)dt+(R_{h})_*( H_{\gamma^{-1}\sigma}) (\tilde u_t) \circ dW_t.
\end{align}
By Lemma \ref{horiz_lift_vf_formula}, horizontal lifts are invariant under right translation. The vertical term is also right invariant by the following computation:
\begin{align}
&((R_{h})_*S^v)(\tilde u_t)=(R_{h})_*(S^v( u_t))\notag\\
=&-\frac{1}{2}J^{\xi\nu}(u_t)(\gamma^{-1}(u_t))^\eta_\nu  [(R_{h})_*H_{ \eta} ,(R_{h})_*H_{ \xi}](\tilde u_t)\\
=&-\frac{1}{2} (h^{-1})_\xi^\beta J^{\xi\nu}(u_t)(\gamma^{-1}(u_t))^\eta_\nu (h^{-1})_\eta^\alpha [H_{ \alpha} ,H_{ \beta}](\tilde u_t).\notag
\end{align}
Using the definition of $\gamma(u)$ along with \req{J_def} and  one can show that
\begin{align}
(h^{-1})_\xi^\beta J^{\xi\nu}(u_t)(\gamma^{-1}(u_t))^\eta_\nu (h^{-1})_\eta^\alpha=J^{\beta\nu}(\tilde u_t)(\gamma^{-1}(\tilde u_t))^\alpha_\nu 
\end{align}
and hence
\begin{align}
((R_{h})_*S^v)(\tilde u_t)=S^v(\tilde u_t).
\end{align}

The last term in \req{u_tilde_eq} is
\begin{align}
&(R_{h})_*( H_{\gamma^{-1}\sigma}) (\tilde u_t)=H_{h^{-1}(\gamma^{-1}\sigma)(u_t)}(\tilde u_t)=H_{(\gamma^{-1}\sigma)(\tilde u_t)h^{-1}}(\tilde u_t).
\end{align}
 Hence we have
\begin{align}
d\tilde u_t=&(\gamma^{-1}F)^{h}(\tilde u_t)+S^h(\tilde u_t)+S^v(\tilde u_t)+( H_{\gamma^{-1}\sigma}) (\tilde u_t) \circ d\tilde W_t
\end{align}
as claimed.
\end{proof}

We end this section by showing that, if one is only interested in the statistics of the position process, then the vertical drift term, $S^v$, can be neglected.
\begin{proposition}
Let $u_t$ be the solution to the proposed limiting SDE, \req{limiting_SDE}. Suppose $\sigma$ is of the form \req{sigma_tensor}.  Then there exists an $O(\mathbb{R}^n)$-valued semimartingale, $h_t$, with $h_0=I$, the identity matrix, such that $\tilde u_t=u_th_t$ is a solution to the  limiting SDE, minus the vertical component of the drift, i.e. a solution to
\begin{align}\label{limiting_SDE_horiz}
d\tilde u_t=&(\gamma^{-1}F)^h(\tilde u_t)dt+H_{(\gamma^{-1}\sigma)(\tilde u_t)} (\tilde u_t)\circ d\tilde W_t\\
&-\frac{1}{2}\sum_\alpha (\gamma^{-1}(\tilde u_t))^\eta_\mu\sigma_\alpha^\mu(\tilde u_t)(\gamma^{-1}(\tilde u_t))^\xi_\nu H_{ \eta} (\tilde u_t) [ \sigma_\alpha^\nu] H_{ \xi} (\tilde u_t)dt \notag\\
&-\frac{1}{2}\left((\gamma^{-1}(\tilde u_t))^\eta_\mu J^{\mu\chi}(\tilde u_t)\gamma^\nu_\chi(\tilde u_t)-J^{\eta\nu}(\tilde u_t)\right)  H_{ \eta} (\tilde u_t) [(\gamma^{-1})^\xi_\nu] H_{ \xi}(\tilde u_t) dt,\notag
\end{align}
where the Wiener process $\tilde W_t=\int_0^t h^{-1}_s dW_s$ is used in the Stratonovich integral.

 In particular, since $u_th_t$ and $u_t$ have the same position process and uniqueness in law holds for a SDE on a compact manifold (as can be seen by employing a smooth embedding in some $\mathbb{R}^l$ along with the corresponding result in \cite{karatzas2014brownian}), the distribution of the position process is unchanged by the vertical component of the drift, even if one doesn't make a change to the Wiener process $\tilde W_t$.
\end{proposition}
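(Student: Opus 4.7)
The plan is to exploit the fact that $S^v$ is a vertical vector field on $F_O(M)$ and choose $h_t$ so that the time-dependent right action $u \mapsto u h_t$ produces an additional drift $-S^v(\tilde u_t)\,dt$ in $\tilde u_t = u_t h_t$, exactly cancelling the vertical piece of \req{limiting_SDE}. That $S^v$ is vertical is the observation recorded just after its definition, using that $[H_\eta,H_\xi]$ is vertical for the torsion-free Levi-Civita connection. The fundamental-vector-field map $B \mapsto B^*$, $B^*(u) = \tfrac{d}{ds}\big|_{s=0} u\exp(sB)$, is a pointwise linear isomorphism of $\mathfrak{o}(\mathbb{R}^n)$ onto the vertical subspace at $u$, so I would let $A \colon F_O(M) \to \mathfrak{o}(\mathbb{R}^n)$ be the unique smooth function satisfying $A(u)^*(u) = -S^v(u)$ and define $h_t$ pathwise by the ODE $\dot h_t = h_t\, A(u_t h_t)$, $h_0 = I$. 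Skew-symmetry of $A$ forces $\tfrac{d}{dt}(h_t^T h_t) = 0$, so $h_t \in O(\mathbb{R}^n)$; compactness of $O(\mathbb{R}^n)$ then yields global existence. Thus $h_t$ is an adapted, pathwise-$C^1$ (hence bounded-variation) $O(\mathbb{R}^n)$-valued semimartingale with $h_0 = I$.

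Because $h_t$ is pathwise $C^1$, the map $(t,u) \mapsto u h_t$ is a smooth family of diffeomorphisms and a Stratonovich change-of-variables gives
\[
d\tilde u_t \;=\; (R_{h_t})_*\,du_t \;+\; \tfrac{d}{ds}\big|_{s=0} u_t h_{t+s}\,dt .
\]
The explicit-time-dependence term equals $u_t \dot h_t = \tilde u_t\, A(\tilde u_t) = A(\tilde u_t)^*(\tilde u_t) = -S^v(\tilde u_t)$ by construction. For the pushforward term I would invoke the preceding lemma on right-invariance of \req{limiting_SDE}: its proof shows that $(R_{h_t})_*$ sends each of the drifts $(\gamma^{-1}F)^h$, $S^h$, $S^v$ evaluated at $u_t$ to the same vector field evaluated at $\tilde u_t$, and transforms the Stratonovich noise term as
\[
(R_{h_t})_*\, H_{(\gamma^{-1}\sigma)(u_t)}(u_t) \circ dW_t \;=\; H_{(\gamma^{-1}\sigma)(\tilde u_t)}(\tilde u_t) \circ \bigl(h_t^{-1}\,dW_t\bigr),
\]
using the tensor form \req{sigma_tensor} of $\sigma$. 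Adding all contributions and cancelling the $\pm S^v(\tilde u_t)\,dt$ pair produces exactly \req{limiting_SDE_horiz}, driven by the process $\tilde W_t = \int_0^t h_s^{-1}\,dW_s$.

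Finally, $\tilde W_t$ is a continuous local martingale with matrix quadratic variation $\int_0^t h_s^{-1}(h_s^{-1})^T\,ds = t I$ because $h_s \in O(\mathbb{R}^n)$, so L\'evy's characterization identifies it as an $\mathbb{R}^n$-valued Brownian motion. The main obstacle is the bookkeeping in the middle paragraph: one must either set up a Stratonovich change-of-variables for a time-dependent smooth family of diffeomorphisms, or equivalently combine the Stratonovich chain rule for the smooth right-action map $F_O(M) \times O(\mathbb{R}^n) \to F_O(M)$ with the pathwise ODE for $h_t$, and then verify that the extra drift produced by $\dot h_t$ matches $-S^v$ on the nose while every other term collapses to the right-invariance statement already established. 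Once this is done cleanly, the cancellation is automatic and the remainder of the proof is formal.
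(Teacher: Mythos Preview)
Your proposal is correct and follows the same overall strategy as the paper: introduce an $O(\mathbb{R}^n)$-valued process $h_t$ via an ODE, apply the Stratonovich chain rule to the right-action map $\phi(u,h)=uh$, and verify via L\'evy's theorem that $\tilde W_t=\int_0^t h_s^{-1}\,dW_s$ is a Wiener process. The paper carries this out by setting up the coupled SDE on $F_O(M)\times O(\mathbb{R}^n)$ and pushing forward by $\phi$, which is exactly the ``equivalent'' route you mention at the end.

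There is one genuine variation worth noting. The paper writes $S^v(u)=V^i(u)V_i(u)$ in the fundamental-vector-field basis and takes $\dot h_t=-V^i(u_t)X_i(h_t)$ with $X_i(h)=(R_h)_*A_i$; the cancellation is then the identity $\phi_*\bigl(V(u),\,-V^i(u)X_i(h)\bigr)=(R_h)_*V(u)-V^i(u)(R_h)_*V_i(u)=0$, which holds for \emph{any} vertical $V$ and does not use right-invariance of $S^v$. Your ODE $\dot h_t=h_t\,A(u_t h_t)$ instead produces the time-dependence contribution $-S^v(\tilde u_t)$, and you cancel it against $(R_{h_t})_*S^v(u_t)=S^v(\tilde u_t)$, which requires the right-invariance of $S^v$ established in the preceding lemma (and hence the tensor hypothesis on~$\sigma$). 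Both ODEs give valid $h_t$'s, and both arguments go through; the paper's version is slightly more general and sidesteps the need to invoke right-invariance of the vertical drift, while yours leans more directly on what was already proved.
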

\begin{proof}
Note that for any $O(\mathbb{R}^n)$-valued semimartingale, $h_t$, the process  $\tilde W_t=\int_0^t h^{-1}_s dW_s$ is a continuous $\mathbb{R}^n$-valued local martingale with quadratic covariation
\begin{align}
[\tilde W^\alpha,\tilde W^\beta]_t=\int_0^t (h^{-1}_s)^\alpha_\delta (h^{-1}_s)^\beta_\eta \delta^{\delta\eta}ds =t\delta^{\alpha\beta}.
\end{align}
Hence,  $\tilde W_t$ is a Wiener process by Levy's theorem \cite{karatzas2014brownian}.

We will show more generally that, given a vector field, $Y(x)$, on $M$, a $\binom{1}{1}$-tensor field, $\tau(x)$, on $M$, and a vertical vector field, $V(u)$, on $F_O(M)$  (not to be confused with the forcing, \req{assump_V_form}) one can go from a solution of the SDE
\begin{align}\label{genericSDE_with_vert}
du_t=&(Y)^h(u_t)dt+V(u_t)dt+H_{\tau(u_t)} (u_t)\circ d W_t
\end{align}
to a solution of
\begin{align}\label{genericSDE_no_vert}
d\tilde u_t=&(Y)^h(\tilde u_t)dt+H_{\tau(\tilde u_t)} (\tilde u_t)\circ d\tilde W_t
\end{align}
with the same initial condition in the manner described above.

Let $\phi:F_O(M)\times O(\mathbb{R}^n)\rightarrow F_O(M)$ be the right action and, for $u\in F_O(M)$, define $\phi_u(h)=\phi(u,h)$.  These are both smooth maps.  For each $A\in \mathfrak{o}(\mathbb{R}^n)$, the Lie algebra of $O(\mathbb{R}^n)$,  we obtain a smooth vertical vector field $V_A$ on $F_O(M)$ defined by
\begin{align}
V_A(u)=(\phi_u)_*A.
\end{align}
If we let $A_i$ be a basis for  $\mathfrak{o}(\mathbb{R}^n)$ then $V_i(u)\equiv V_{A_i}(u)$ form a basis for the vertical subspace at each $u$ \cite{hsu2002stochastic}.  Therefore we can write $V(u)=V^i(u)V_i(u)$ for some smooth functions $V^i$ on $F_O(M)$.

Let $X_i$ be the smooth right invariant vector fields on $O(\mathbb{R}^n)$ defined by $X_i(h)=(R_h)_* A_i$. Consider the SDE on the compact manifold $F_O(M)\times O(\mathbb{R}^n)$,
\begin{align}
du_t=&(Y)^h(u_t)dt+V(u_t)dt+H_{\tau(u_t)} (u_t)\circ d W_t,\\
dh_t=& - V^i(u_t)X_i(h)dt.
\end{align}
By compactness, a unique solution, $(u_t,h_t)$, corresponding to the initial condition $(u_0,I)$ exists for all $t\geq 0$. Note that the first component, $u_t$, is a solution to \req{genericSDE_with_vert} with the initial condition $u_0$.

Next we see how the vector fields of this SDE behave under the pushforward. 
\begin{align}\label{vector_push1}
&\phi_* ((Y)^h(u)+V(u), -V^i(u)X_i(h))\notag\\
=&(R_h)_*((Y)^h(u)+V(u))-(\phi_u)_*(V^i(u)X_i(h))\notag\\
=&(Y)^h(uh)+(R_h)_*(V(u))-V^i(u)(\phi_u\circ R_h)_* A_i\notag\\
=&(Y)^h(uh)+(R_h)_* V(u)-V^i(u) (R_h)_* V_i(u)\notag\\
=&(Y)^h(uh)
\end{align}
and
\begin{align}\label{vector_push2}
\phi_*(H_{\tau(u)e_\alpha} (u),0)=(R_h)_*H_{\tau(u)e_\alpha} (u)=H_{h^{-1}\tau(u)e_\alpha} (uh)=(h^{-1})^\beta_\alpha H_{\tau(uh)e_\beta} (uh).
\end{align}

Using \req{vector_push1} and \req{vector_push2}, we can derive the SDE satisfied by the semimartingale $\tilde u_t=u_th_t=\phi(u_t,h_t)$. For any $f\in C^\infty(F_O(M))$,  $\tilde f=f\circ\phi$ is a smooth function on $F_O(M)\times O(\mathbb{R}^n)$, hence 
\begin{align}
f(\tilde u_t)=&\tilde f(u_0,I)+\int_0^t((Y)^h(u_s)+V(u_s), -V^i(u_s)X_i(h_s))[\tilde f]ds\notag\\
&+\int_0^t H_{\tau(u_s)} (u_s)[\tilde f]\circ d W_s\\
=&f(u_0)+\int_0^t\phi_*((Y)^h(u_s)+V(u_s), -V^i(u_s)X_i(h_s))[ f]ds\notag\\
&+\int_0^t \phi_*H_{\tau(u_s)} (u_s)[ f]\circ d W_s\\
=&f( u_0)+\int_0^t(Y)^h(\tilde u_s)[ f]ds+\int_0^t  H_{\tau(\tilde u_s)e_\beta} (\tilde u_s)[ f](h_s^{-1})^\beta_\alpha\circ d W^\alpha_s\notag\\
=&f( u_0)+\int_0^t(Y)^h(\tilde u_s)[ f]ds+\int_0^t  H_{\tau(\tilde u_s)} (\tilde u_s)[ f]\circ d \tilde W_s,
\end{align}
where we used the fact that $h_t$ has locally bounded variation, and hence 
\begin{align}
\int_0^t h^{-1}_s \circ dW_s=\int_0^t h^{-1}_s  dW_s=\tilde W_t.
\end{align}
Therefore, $\tilde u_t$ solves the SDE \req{genericSDE_no_vert} with the initial condition $u_0$, as claimed. Applying this result to the proposed limiting SDE,  \req{limiting_SDE}, completes the proof. 
\end{proof}

\section{Existence of the Zero Mass Limit}\label{convergence_proof_sec}
We are now in a position to prove convergence of the solutions  of the SDE with mass,  \reqr{SDE_u_m2}{SDE_v_m2}, to the solution of the proposed limiting SDE, \req{limiting_SDE}, as $m\rightarrow 0$.  First we need a pair of lemmas that relate metric distance on $F_O(M)$ to smooth functions.  These lemmas will allow us to prove convergence globally on the manifold $F_O(M)$, without explicitly patching together computations in local coordinates. Both lemmas make important use of  compactness of the manifold in question.

\begin{lemma}\label{f_metric_bound}
Let $(M,g)$ be a compact, connected Riemannian manifold and $d$ be the metric on $M$ induced by $g$.  For any $f\in C^\infty(M)$ there exists a $C>0$ such that
\begin{align}
|f(x)-f(y)|\leq Cd(x,y)
\end{align}
for all $x,y\in M$.
\end{lemma}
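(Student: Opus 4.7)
The plan is to use the standard identification of the Riemannian distance $d(x,y)$ with the infimum of lengths of piecewise smooth curves joining $x$ to $y$, and then bound the variation of $f$ along any such curve by its gradient. Compactness of $M$ will be used exactly once, to guarantee that $\|\nabla f\|$ has a finite supremum.

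First I would note that since $M$ is connected (and therefore path-connected, being a manifold), any two points $x,y\in M$ can be joined by a piecewise smooth curve $\gamma:[0,1]\to M$ with $\gamma(0)=x$, $\gamma(1)=y$, so $d(x,y)<\infty$ and equals the infimum of the lengths $L(\gamma)=\int_0^1\|\dot\gamma(t)\|_{g}\,dt$ over all such curves. For any such $\gamma$, the fundamental theorem of calculus applied to the smooth (away from the breakpoints) function $f\circ\gamma$ gives
\begin{align}
f(y)-f(x)=\int_0^1 df_{\gamma(t)}(\dot\gamma(t))\,dt=\int_0^1 g_{\gamma(t)}(\nabla f(\gamma(t)),\dot\gamma(t))\,dt,
\end{align}
and Cauchy--Schwarz yields $|f(y)-f(x)|\leq\int_0^1\|\nabla f(\gamma(t))\|_g\|\dot\gamma(t)\|_g\,dt\leq\|\nabla f\|_\infty\,L(\gamma)$, where $\|\nabla f\|_\infty\equiv\sup_{z\in M}\|\nabla f(z)\|_g$.

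Next I would observe that $\|\nabla f\|_\infty<\infty$: the function $z\mapsto\|\nabla f(z)\|_g$ is continuous on the compact manifold $M$, hence bounded and attaining its maximum. Taking the infimum over all piecewise smooth curves $\gamma$ joining $x$ to $y$ then gives $|f(y)-f(x)|\leq\|\nabla f\|_\infty\,d(x,y)$, so $C=\|\nabla f\|_\infty$ is the desired constant.

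There is no serious obstacle here; the only point where care is warranted is to ensure that the curves one uses to compute $d(x,y)$ are regular enough for the fundamental theorem of calculus argument to apply, which is why I would use piecewise smooth curves (or equivalently, absolutely continuous curves) in the infimum defining $d$. Both connectedness (for existence of joining curves) and compactness (for the uniform bound on $\|\nabla f\|$) are essential hypotheses.
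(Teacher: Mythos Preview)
Your proof is correct and follows essentially the same argument as the paper: write $f(y)-f(x)$ as the integral of $g(\nabla f,\dot\gamma)$ along a piecewise smooth curve, bound by $\|\nabla f\|_\infty$ times the length, use compactness for finiteness of $\|\nabla f\|_\infty$, and take the infimum over curves. The paper's proof is terser but identical in content.
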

\begin{proof}
The result follows from writing
\begin{align}
f(y)-f(x)=\int_0^1 \dot{\eta}_t[f]dt=\int_0^1g_t( \dot{\eta}_t,\nabla f_t) dt
\end{align}
for any  piecewise smooth curve $\eta$ from $x$ to $y$, taking the absolute value, and using the definition of $d$ together with compactness of $M$.
\end{proof}

\begin{lemma}\label{metric_f_bound}
Let $(M,g)$ be a compact, connected Riemannian manifold with metric $d$.  For any $p>0$ there exists finitely many smooth functions $f_i$ such that
\begin{align}
d(x,y)^{p}\leq \sum_i|f_i(x)-f_i(y)|^{p}.
\end{align}
\end{lemma}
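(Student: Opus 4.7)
The plan is to use a smooth embedding of $M$ into Euclidean space and take the $f_i$ to be (rescaled) coordinate functions. Since $M$ is compact, it admits a smooth embedding $\Phi = (\phi^1,\ldots,\phi^N) : M \to \mathbb{R}^N$ (by the Whitney embedding theorem, or directly by a partition-of-unity argument on a finite atlas). Each component $\phi^i$ is a smooth function on $M$, so these are the natural candidates for the $f_i$, modulo rescaling.

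The crux of the proof is to show that the Riemannian distance $d$ is bounded above by the Euclidean chord distance $e(x,y) := \|\Phi(x) - \Phi(y)\|_2$ up to a multiplicative constant: there exists $C > 0$ with $d(x,y) \leq C\, e(x,y)$ for all $x,y \in M$. Locally this is straightforward. Fix $x_0 \in M$ and consider the map $\Psi := \Phi \circ \exp_{x_0} : T_{x_0} M \to \mathbb{R}^N$ defined on a small ball around $0$; its derivative at $0$ equals $D\Phi|_{x_0}$, which is injective since $\Phi$ is an immersion. A Taylor expansion (using smoothness) then yields a neighborhood of $0$ on which $\|\Psi(v_1) - \Psi(v_2)\| \geq c\|v_1 - v_2\|$ for some $c>0$, while in geodesic normal coordinates the intrinsic distance $d(y,z)$ is comparable to $\|v_y - v_z\|$ for $y=\exp_{x_0}(v_y)$, $z=\exp_{x_0}(v_z)$. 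Combining gives a local estimate $d(y,z)\leq C_{x_0} e(y,z)$ on some neighborhood $U_{x_0}$. By compactness of $M$, finitely many such neighborhoods $U_1,\ldots, U_k$ cover $M$ and yield a single constant $C_{\mathrm{loc}}$ so that $d(y,z)\leq C_{\mathrm{loc}}\, e(y,z)$ whenever $y,z$ lie in a common $U_i$. To globalize, note that the ``bad set'' $A = \{(x,y)\in M\times M :  x \text{ and } y \text{ lie in no common } U_i\}$ is closed and disjoint from the diagonal; since $e$ is continuous and strictly positive off the diagonal, it is bounded below by some $\eta>0$ on the compact set $A$. As $d$ is bounded above by $\mathrm{diam}(M)<\infty$, this gives $d(x,y)\leq (\mathrm{diam}(M)/\eta)\, e(x,y)$ on $A$, and combining with the local estimate yields the global inequality with $C := \max(C_{\mathrm{loc}}, \mathrm{diam}(M)/\eta)$.

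Once the comparison $d\leq C\,e$ is in hand, an elementary norm inequality on $\mathbb{R}^N$ finishes the argument: for any $p>0$, using $\|v\|_2\leq \sqrt{N}\max_i|v_i|$ gives $\|v\|_2^p \leq N^{p/2}\sum_i |v_i|^p$, so
\begin{equation*}
d(x,y)^p \leq C^p\, e(x,y)^p \leq C^p N^{p/2}\sum_{i=1}^N|\phi^i(x)-\phi^i(y)|^p.
\end{equation*}
Setting $f_i := (C^p N^{p/2})^{1/p}\,\phi^i$ absorbs the constants and yields the desired inequality. The main obstacle is the global comparison of $d$ with $e$; this is essentially a standard fact for smoothly embedded compact manifolds, but it requires combining an immersion-based local estimate with the compactness/patching step above, rather than being immediate from smoothness of $\Phi$ alone.
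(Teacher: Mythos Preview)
Your proof is correct and takes a genuinely different route from the paper's. The paper works locally: it covers $M$ by finitely many coordinate charts $(U_i,\phi_i)$ on which $d(y,z)\leq C_i\|\phi_i(y)-\phi_i(z)\|$, extends each $\phi_i$ smoothly to all of $M$, and then handles the case where $y\in V_i\subset U_i$ but $z\notin U_i$ by introducing bump functions $\psi_i$ equal to $R=\operatorname{diam}(M)$ on $V_i$ and supported in $U_i$, so that $|\psi_i(y)-\psi_i(z)|=R\geq d(y,z)$. The final collection of $f_i$'s consists of the rescaled chart coordinates $C_i\phi_i^j$ together with the bump functions $\psi_i$, and the general $p$ case is obtained from $p=2$ by raising to the power $p/2$.

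Your approach instead produces a single global comparison $d\leq C\,e$ with the Euclidean chord distance of a Whitney embedding, using essentially the same local immersion estimate but replacing the bump-function trick by the observation that $e$ is bounded away from zero on the compact set of pairs not lying in a common chart. This yields fewer functions (just the $N$ embedding coordinates, suitably rescaled) and is conceptually cleaner, since it reduces the lemma to the well-known equivalence of intrinsic and extrinsic distances on a compactly embedded manifold. The paper's argument, on the other hand, avoids invoking an embedding theorem and keeps everything in terms of charts and explicit cutoffs, which is slightly more elementary. Both globalization steps rest on the same idea---that ``far'' pairs are trivially controlled because $d$ is bounded---implemented differently.
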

\begin{proof}
For each $x\in M$ there exists a coordinate chart $(U_x,\phi_x)$ about $x$ and a constant $C_x$ such that for $y,z\in U_x$ we have $d(y,z)\leq C_x\|\phi_x(y)-\phi_x(z)\|$ where the right hand side is the Euclidean norm (see \cite{lee2013introduction, lee2006riemannian}).  Shrinking the charts if necessary, we can assume that  $\phi_x$ extends smoothly to all of $M$.

Take another open set, $V_x$, containing $x$ with $V_x$ having compact closure in $U_x$.  By compactness of $M$ there exists a finite subcover $V_i\equiv V_{x_i}$ and there also exists $R>0$ such that $d(y,z)\leq R$ for all $y,z\in M$.  For each $i$ let $\psi_i$ be a smooth function equal to $R$ on $V_i$ with compact support in $U_i\equiv U_{x_i}$ and define $C_i\equiv C_{x_i}$, $\phi_i\equiv \phi_{x_i}$.

Take $y,z\in M$. $y\in V_i$ for some $i$.  If $z\in U_i$ then
\begin{align}
d(y,z)^2\leq C_i^2\|\phi_i(y)-\phi_i(z)\|^2=\sum_j(C_i\phi^j_i(y)-C_i\phi^j_i(z))^2.
\end{align}
Otherwise, $z\not\in U_i$, hence
\begin{align}
d(y,z)^2\leq R^2=(\psi_i(y)-\psi_i(z))^2.
\end{align}
So taking the collection of functions $C_i\phi^j_i$, $\psi_i$ gives the result for $p=2$.

For $p>0$, if we let the $N$ be the number of functions we obtained when $p=2$ then raising the $p=2$ result to the power $p/2$ gives 
\begin{align}
d(x,y)^{p}\leq (\sum_i|f_i(x)-f_i(y)|^{2})^{p/2}\leq N^{p/2}\sum_i |f_i(x)-f_i(y)|^{p}=\sum_i |g_i(x)-g_i(y)|^{p}
\end{align}
where $g_i=N^{1/2}f_i$.
\end{proof}

With these two lemmas, we can now prove the convergence result.

\begin{theorem}\label{Main_theorem}
Fix $(u_0,v_0)\in N$. Let $(u^m_t,v^m_t)$ be the unique solution to  \reqr{SDE_u_m2}{SDE_v_m2} with initial condition $(u_0,v_0)$ and let $u_t$ be the unique solution to \req{limiting_SDE} with initial condition $u_0$.  Fix $T> 0$ and a Riemannian metric tensor field on $F_O(M)$. Let $d$ be the associated metric on the connected component of $F_O(M)$ that contains $u_0$.  Then for any $q>0$ and any $0<\kappa< q/2$ we have 
\begin{align}
E[\sup_{t\in[0,T]}d(u^m_t,u_t)^{q}]=O(m^\kappa)\text{ as } m\rightarrow 0.
\end{align}

We emphasize that this result is heavily reliant on Assumption \ref{drag_assump};  the existence of a positive lower bound on the eigenvalues of the symmetric part of the damping tensor $\gamma$.
\end{theorem}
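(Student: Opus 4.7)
The plan is to reduce the manifold-valued convergence to finitely many scalar convergence statements via Lemma~\ref{metric_f_bound}, derive an integral inequality for each by subtracting the It\^o equations of the two processes, and close with Gr\"onwall's inequality after absorbing the momentum-dependent terms using the results of Section~\ref{sec:p_zero}. As in Propositions~\ref{E_sup_p_bound}, \ref{sup_E_p_bound}, and~\ref{quad_p_int_bound}, I would first handle $q\ge 2$; the general case then follows by H\"older's inequality.

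Applying Lemma~\ref{metric_f_bound} on the compact connected Riemannian component of $F_O(M)$ containing $u_0$ produces finitely many $f_1,\dots,f_N\in C^\infty(F_O(M))$ with $d(u,v)^q\le \sum_i |f_i(u)-f_i(v)|^q$. Fix such an $f$ and subtract the It\^o form \req{limiting_SDE_Ito} of the limiting SDE from the It\^o expansion \req{dfu_m} for $df(u^m_t)$, obtaining a decomposition
\begin{equation*}
f(u^m_t)-f(u_t)=\int_0^t\bigl(D(u^m_s)-D(u_s)\bigr)ds+\int_0^t\bigl(\Sigma_f(u^m_s)-\Sigma_f(u_s)\bigr)dW_s+\mathcal{R}^m(t),
\end{equation*}
where $D(u)=H_{(\gamma^{-1}F)(u)}(u)[f]+J^{\beta\alpha}(u)K^f_{\beta\alpha}(u)$ and $\Sigma_f(u)=H_{(\gamma^{-1}\sigma)(u)}(u)[f]$ are smooth on $F_O(M)$ --- hence globally Lipschitz with respect to $d$ by Lemma~\ref{f_metric_bound} --- and $\mathcal{R}^m(t)$ collects the four momentum-dependent pieces of \req{dfu_m}: the boundary term from $-d(H_{\gamma^{-1}(u^m_t)p^m_t}(u^m_t)[f])$, the integral against $d((p^m)^\mu(p^m)^\nu)$, and the two integrals linear in $p^m_s$ against $ds$ and $dW^\eta_s$.

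The Burkholder-Davis-Gundy inequality and Jensen's inequality (valid for $q\ge 2$) convert the two Lipschitz-difference terms into a Gr\"onwall driver. Summing over $i$ and using $d(u^m_s,u_s)^q\le\sum_i|f_i(u^m_s)-f_i(u_s)|^q$ on the left yields
\begin{equation*}
E\!\left[\sup_{s\in[0,t]} d(u^m_s,u_s)^q\right]\le C\int_0^t E\!\left[\sup_{r\in[0,s]}d(u^m_r,u_r)^q\right]ds+C\sum_i E\!\left[\sup_{s\in[0,T]}|\mathcal{R}^m(s;f_i)|^q\right].
\end{equation*}
By Gr\"onwall, the theorem reduces to showing $E[\sup_{s}|\mathcal{R}^m(s;f_i)|^q]=O(m^\kappa)$ for all $0<\kappa<q/2$. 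The four pieces of $\mathcal{R}^m$ are treated separately: the boundary term is linear in $p^m_t$ with uniformly bounded coefficient, so its $L^q$-sup bound is $O(m^\kappa)$ by Proposition~\ref{E_sup_p_bound}; the $d((p^m)^\mu(p^m)^\nu)$ integral is exactly Proposition~\ref{quad_p_int_bound}; the $ds$-integral linear in $p^m$ is controlled by $T\sup_t\|p^m_t\|$ times a constant, again giving Proposition~\ref{E_sup_p_bound}; and the $dW_s$-integral linear in $p^m$ is estimated by BDG and Jensen to reduce to $\sup_s E[\|p^m_s\|^q]=O(m^{q/2})$, supplied by Proposition~\ref{sup_E_p_bound}.

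The hardest step --- and the source of the restriction $\kappa<q/2$ in the statement --- is the boundary term and the linear-in-$p^m$ time integral, since Proposition~\ref{E_sup_p_bound} delivers only $m^\beta$ with $\beta<q/2$ rather than the sharp $m^{q/2}$ available from Proposition~\ref{sup_E_p_bound}; this loss propagates through Gr\"onwall without further degradation. A minor bookkeeping point will be to verify that the constants arising from the $N$ functions $f_i$, the Lipschitz constants of $D$ and $\Sigma_f$, and the BDG constants are all independent of $m$ for $m$ in a bounded set, which holds by compactness of $F_O(M)$.
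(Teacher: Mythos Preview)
Your proposal is correct and follows essentially the same route as the paper's proof: reduce via Lemma~\ref{metric_f_bound} to finitely many smooth test functions, subtract \req{limiting_SDE_Ito} from \req{dfu_m}, control the Lipschitz-difference terms by Lemma~\ref{f_metric_bound} plus BDG/H\"older, apply Gr\"onwall, and bound the residual $R^f$ term-by-term using Propositions~\ref{E_sup_p_bound}, \ref{sup_E_p_bound}, and~\ref{quad_p_int_bound}. The only minor deviation is that the paper uses Proposition~\ref{sup_E_p_bound} (pulling the expectation inside the time integral) rather than Proposition~\ref{E_sup_p_bound} for the $ds$-integral linear in $p^m$, yielding $O(m^{q/2})$ for that term and leaving the boundary term $H_{\gamma^{-1}(u^m_t)p^m_t}(u^m_t)[f]$ as the sole source of the $\kappa<q/2$ restriction---but your coarser bound for the time integral is equally sufficient.
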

\begin{proof}
As discussed in Lemma \ref{m_existence_lemma}, a unique solution $(u^m_t,v^m_t)$ is defined for all $t\geq 0$.  Since \req{limiting_SDE} is an SDE on the compact manifold $F_O(M)$, it also has a unique solution, $u_t$, defined for all $t\geq 0$.   Both $u_t^m$ and $u_t$ are continuous, so they stay in the same connected component as $u_0$.

For $f\in C^\infty(F_O(M))$, define
\begin{align}\label{M_f_def}
M^f(u)=H_{(\gamma^{-1} F) (u)} (u)[f]+J^{\beta\alpha}(u) K^f_{\beta\alpha}(u),\hspace{2mm} Q_\eta^f(u)=H_{(\gamma^{-1}\sigma)(u)e_\eta} (u)[f]
\end{align}
where $K^f$ and $J$ were defined in \req{Kf_def} and \req{J_def} respectively. $M^f$ and $Q^f_\eta$  are smooth functions of $u$.  

\begin{remark}
We again emphasize that the indices appearing in \req{M_f_def} and in the subsequent computations represent the components in the standard basis of functions on $F_O(M)$ that are valued in the product of some $\mathbb{R}^l$'s.  Though we employ the summation convention, these expressions do not represent (contractions of) tensors on $M$ or $F_O(M)$.  In particular, the following convergence proof is global in nature and does not directly utilize any computations in local coordinates on the base manifold or its frame bundle.
\end{remark}

Using \req{dfu_m} and \req{limiting_SDE_Ito} and rearranging we obtain
\begin{align}
f(u^m_t)-f(u_t)=&\int_0^t\left(M^f(u^m_s)-M^f(u_s)\right)ds\\
&+\int_0^t\left(Q_\eta^f(u^m_s)-Q_\eta^f(u_s) \right)dW^\eta_s+R^f(t),\notag
\end{align}
where we have grouped the momentum-dependent terms in the quantity
\begin{align}
R^f(t)=&- \int_0^tK^f_{\beta\alpha}(u^m_s)G^{\beta\alpha}_{\mu\nu}(u^m_s)d((p^m_s)^\nu (p^m_s)^\mu)\\
&+\int_0^tK^f_{\beta\alpha}(u^m_s)G^{\beta\alpha}_{\mu\nu}(u^m_s)\left((p^m_s)^\nu F^\mu(u^m_s)+(p^m_s)^\mu F^\nu(u^m_s)\right)ds\notag\\
&+\int_0^tK^f_{\beta\alpha}(u^m_s)G^{\beta\alpha}_{\mu\nu}(u^m_s)\left((p^m_s)^\mu\sigma^\nu_\eta(u^m_s) +(p^m_s)^\nu\sigma^\mu_\eta(u^m_s) \right)dW^\eta_s\notag\\
&-H_{\gamma^{-1}(u^m_t)p^m_t} (u^m_t)[f]+H_{\gamma^{-1}(u_0)p_0} (u_0)[f].\notag
\end{align}
For now, restrict to $q=2p$ with $p> 1$. For any $t\leq T$,
\begin{align}
&E[\sup_{0\leq s\leq t}|f(u^m_s)-f(u_s)|^{2p}]\leq3^{2p-1}\bigg(E[\left(\int_0^t|M^f(u^m_s)-M^f(u_s)|ds\right)^{2p}]\notag\\
&+E[\sup_{0\leq s\leq t}|\int_0^s Q_\eta^f(u^m_r)-Q_\eta^f(u_r) dW^\eta_r|^{2p}]+E[\sup_{0\leq s\leq t}|R^f(s)|^{2p}]\bigg).
\end{align}
Applying the H\"older's inequality with  exponents $2p$ and $2p/(2p-1)$ to the first  term and the Burkholder-Davis-Gundy inequality to the second we get
\begin{align}
&E[\sup_{0\leq s\leq t}|f(u^m_s)-f(u_s)|^{2p}]\leq3^{2p-1}\bigg(T^{2p-1}E[\int_0^t|M^f(u^m_s)-M^f(u_s)|^{2p}ds]\notag\\
&+CE[\left(\int_0^t\sum_\eta |Q_\eta^f(u^m_s)-Q_\eta^f(u_s)|^2 ds\right)^{p}]+E[\sup_{0\leq s\leq t}|R^f(s)|^{2p}]\bigg).
\end{align}
We have assumed $p>1$, so we can use H\"older's inequality with indices $p$ and $p/(p-1)$ on the second term.
\begin{align}
E[\left(\int_0^t\sum_\eta |Q_\eta^f(u^m_s)-Q_\eta^f(u_s)|^2 ds\right)^{p}]\leq T^{p-1}E[\int_0^t\left(\sum_\eta |Q_\eta^f(u^m_s)-Q_\eta^f(u_s)|^{2}\right)^p ds].
\end{align}
Therefore
\begin{align}\label{f_estimate}
&E[\sup_{0\leq s\leq t}|f(u^m_s)-f(u_s)|^{2p}]\leq3^{2p-1}\bigg(T^{2p-1}E[\int_0^t|M^f(u^m_s)-M^f(u_s)|^{2p}ds]\\
&+C(kT)^{p-1}\sum_\eta E[\int_0^t |Q_\eta^f(u^m_s)-Q_\eta^f(u_s)|^{2p} ds]+E[\sup_{0\leq s\leq t}|R^f(s)|^{2p}]\bigg)\notag
\end{align}
where $k$ is the number of terms in the sum over $\eta$.

The integrands of the terms involving $M^f$ and $Q^f_\eta$ are of the form $|g(u^m_s)-g(u_s)|^{2p}$ for some smooth functions $g$, depending on $f$. Therefore, for a fixed $f$, by Lemma \ref{f_metric_bound} there exists $C_f>0$ such that
\begin{align} \label{f_estimate_final}
&E[\sup_{0\leq s\leq t}|f(u^m_s)-f(u_s)|^{2p}]\\
\leq&3^{2p-1}(T^{2p-1}+Ck^{p}T^{p-1} )C_f\int_0^tE[d(u^m_s,u_s)^{2p}]ds+3^{2p-1}E[\sup_{0\leq s\leq t}|R^f(s)|^{2p}].\notag
\end{align}

$F_O(M)$ is compact and we have equipped it with a Riemannian metric tensor.  Therefore, by Lemma \ref{metric_f_bound}, there exists finitely many smooth functions, $f_i$, such that for any $u_1,u_2$ in the the connected component of $F_O(M)$ containing $u_0$ we have
\begin{align}
d(u_1,u_2)^{2p}\leq \sum_i|f_i(u_1)-f_i(u_2)|^{2p}.
\end{align}
  Hence, applying \req{f_estimate_final} to each of the functions $f_i$, we obtain
\begin{align}
&E[\sup_{0\leq s\leq t}d(u^m_s,u_s)^{2p}]\leq\sum_i E[\sup_{0\leq s\leq t}|f_i(u^m_s)-f_i(u_s)|^{2p}]\\
\leq &\sum_i C_i \int_0^tE[d(u^m_s,u_s)^{2p}]ds+3^{2p-1}\sum_i E[\sup_{0\leq s\leq t}|R^{f_i}(s)|^{2p}]\notag
\end{align}
for some constants $C_i$.
Therefore there exists a $C>0$ such that
\begin{align}
&E[\sup_{0\leq s\leq t}d(u^m_s,u_s)^{2p}]\\
\leq& C\int_0^tE[\sup_{0\leq r\leq s}d(u^m_r,u_r)^{2p}]ds+3^{2p-1}\sum_iE[\sup_{0\leq s\leq T}|R^{f_i}(s)|^{2p}]\notag
\end{align}
for all $0\leq t\leq T$. We can apply Gronwall's inequality to obtain
\begin{align}\label{Gronwall_result}
&E[\sup_{0\leq s\leq T}d(u^m_s,u_s)^{2p}]\leq 3^{2p-1}\sum_iE[\sup_{0\leq s\leq T}|R^{f_i}(s)|^{2p}]e^{CT}.
\end{align}

Fix $0<\kappa< p$. The sum in \req{Gronwall_result} has finitely many terms, so, if we can prove that  
\begin{align}
E[\sup_{0\leq s\leq T}|R^f(s)|^{2p}]=O(m^\kappa) \text{ as } m\rightarrow 0
\end{align}
for every smooth $f$, then the claim will follow for $q=2p$, $p>1$.

Fix  $f\in C^\infty(F_O(M))$ and compute 
\begin{align}
&E[\sup_{0\leq t\leq T}|R^f(t)|^{2p}]\\
\leq&5^{2p-1}\bigg(E[\sup_{0\leq t\leq T}|\int_0^tK^f_{\beta\alpha}(u^m_s)G^{\beta\alpha}_{\mu\nu}(u^m_s)d((p^m_s)^\nu (p^m_s)^\mu)|^{2p}]\notag\\
&+E[\left(\int_0^T|K^f_{\beta\alpha}(u^m_s)G^{\beta\alpha}_{\mu\nu}(u^m_s)\left((p^m_s)^\nu F^\mu(u^m_s)+(p^m_s)^\mu F^\nu(u^m_s)\right)|ds\right)^{2p}]\notag\\
&+E[\sup_{0\leq t\leq T}|\int_0^tK^f_{\beta\alpha}(u^m_s)G^{\beta\alpha}_{\mu\nu}(u^m_s)\left((p^m_s)^\nu\sigma^\mu_\eta(u^m_s) +(p^m_s)^\mu\sigma^\nu_\eta(u^m_s) \right)dW^\eta_s|^{2p}]\notag\\
&+E[\sup_{0\leq t\leq T}|H_{\gamma^{-1}(u^m_t)p^m_t} (u^m_t)[f]|^{2p}]+E[|H_{\gamma^{-1}(u_0)p_0} (u_0)[f]|^{2p}]\bigg).\notag
\end{align}
We will consider each term individually. Proposition \ref{quad_p_int_bound} implies that
\begin{align}
E[\sup_{0\leq t\leq T}|\int_0^tK^f_{\beta\alpha}(u^m_s)G^{\beta\alpha}_{\mu\nu}(u^m_s)d((p^m_s)^\nu (p^m_s)^\mu)|^{2p}] =O(m^{p}).
\end{align}
Using H\"older's inequality, boundedness of continuous functions on the compact manifold $F_O(M)$, and Proposition \ref{sup_E_p_bound}, the second term  can be bounded as follows,
\begin{align}
&E[\left(\int_0^T|K^f_{\beta\alpha}(u^m_s)G^{\beta\alpha}_{\mu\nu}(u^m_s)\left((p^m_s)^\nu F^\mu(u^m_s)+(p^m_s)^\mu F^\nu(u^m_s)\right)|ds\right)^{2p}]\notag\\
\leq&T^{2p-1}\!\!\!\int_0^T\!\!\!E[|\left((K^f_{\beta\alpha}G^{\beta\alpha}_{\nu\mu} F^\nu)(u^m_s)+(K^f_{\beta\alpha}G^{\beta\alpha}_{\mu\nu}F^\nu)(u^m_s)\right)(p^m_s)^\mu|^{2p}]ds\notag\\
\leq&CT^{2p} \sup_{0\leq s\leq T}E[\|p^m_s\|^{2p}]=O(m^p).
\end{align}
The third term can be bounded using the  Burkholder-Davis-Gundy inequality, similarly to the way the second term in \req{f_estimate} was estimated, together with  Proposition \ref{sup_E_p_bound}.
\begin{align}
&E[\sup_{0\leq t\leq T}|\int_0^tK^f_{\beta\alpha}(u^m_s)G^{\beta\alpha}_{\mu\nu}(u^m_s)\left((p^m_s)^\nu\sigma^\mu_\eta(u^m_s) +(p^m_s)^\mu\sigma^\nu_\eta(u^m_s) \right)dW^\eta_s|^{2p}]\notag\\
\leq &C(kT)^{p-1}\sum_\eta E[\int_0^T|\left(K^f_{\beta\alpha}(u^m_s)G^{\beta\alpha}_{\nu\mu}(u^m_s)\sigma^\nu_\eta(u^m_s)\right.\\
&\left. +K^f_{\beta\alpha}(u^m_s)G^{\beta\alpha}_{\mu\nu}(u^m_s)\sigma^\nu_\eta(u^m_s)\right)(p^m_s)^\mu |^{2p}ds]\notag\\
\leq  &\tilde{C}T^{p-1}\int_0^TE[\|p^m_s\|^{2p}]ds\leq \tilde{C}T^{p}\sup_{0\leq s\leq T}E[\|p^m_s\|^{2p}]=O(m^p).
\end{align}
Compactness of $F_O(M)$ and Proposition \ref{E_sup_p_bound} imply a bound on the fourth term:
\begin{align}
&E[\sup_{0\leq t\leq T}|H_{\gamma^{-1}(u^m_t)p^m_t} (u^m_t)[f]|^{2p}]= E[\sup_{0\leq t\leq T}|H_{\gamma^{-1}(u^m_t)e_\nu } (u^m_t)[f](p^m_t)^\nu|^{2p}]\notag\\
\leq&C E[\sup_{0\leq t\leq T}\|p^m_t\|^{2p}]=O(m^\kappa).
\end{align}
Finally, the last term is
\begin{align}
E[|H_{\gamma^{-1}(u_0)p_0} (u_0)[f]|^{2p}]=m^{2p}E[|H_{\gamma^{-1}(u_0)v_0} (u_0)[f]|^{2p}]=O(m^{2p}).
\end{align}
This completes the proof for $q=2p$, $p>1$. Similar to Proposition \ref{sup_E_p_bound}, an application of H\"older's inequality  gives the result for all $q>0$.
\end{proof}
As a corollary, we get convergence in probability and in law on compact time intervals.
\begin{corollary}
For any $T>0$, the mass-dependent process restricted to the compact time interval $[0,T]$,  $u^m|_{[0,T]}$, converges in probability (and hence also in law)  to $u|_{[0,T]}$ as $m\rightarrow 0$.

 Convergence in probability on the path space $C([0,T],F^i_O(M))$, where $F^i_O(M)$ is a connected component of $F_O(M)$, is defined through the metric
\begin{align}
d_T(u,\tau)=\sup_{t\in[0,T]} d(u(t),\tau(t))
\end{align}
 where $d$ is the metric on $F^i_O(M)$ induced by a choice of Riemannian metric tensor on $F_O(M)$.  
\end{corollary}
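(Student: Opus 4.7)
The plan is to deduce this corollary directly from the preceding theorem using only standard implications between modes of convergence, so the argument will be very short: the theorem already supplies the nontrivial estimate, and the corollary merely repackages it in weaker form.

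First I would fix $\epsilon>0$ and any $q>0$, and apply Chebyshev's inequality to the nonnegative random variable $d_T(u^m,u)^q=\sup_{t\in[0,T]} d(u^m_t,u_t)^q$:
\begin{equation*}
P\bigl(d_T(u^m,u)>\epsilon\bigr)\ \leq\ \epsilon^{-q}\,E\bigl[\sup_{t\in[0,T]} d(u^m_t,u_t)^q\bigr].
\end{equation*}
By the theorem just proved, the right-hand side is $O(m^\kappa)$ for any $0<\kappa<q/2$, and in particular tends to zero as $m\to 0$. One small bookkeeping point to mention: both $u^m_t$ and $u_t$ are continuous in $t$ and share the initial condition $u_0$, so pathwise they remain in the connected component $F_O^i(M)$ containing $u_0$. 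Hence $u^m|_{[0,T]}$ and $u|_{[0,T]}$ are genuinely $C([0,T],F_O^i(M))$-valued random elements, $d_T$ is well-defined on them, and the estimate above establishes convergence in probability in $(C([0,T],F_O^i(M)),d_T)$.

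Convergence in law then follows from the standard fact that convergence in probability on a metric space implies convergence in distribution: for any bounded continuous $\phi:C([0,T],F_O^i(M))\to\mathbb{R}$, continuity of $\phi$ gives $\phi(u^m|_{[0,T]})\to\phi(u|_{[0,T]})$ in probability, and bounded convergence then yields $E[\phi(u^m|_{[0,T]})]\to E[\phi(u|_{[0,T]})]$. I do not anticipate any real obstacle here; all the substantive analytic work is contained in the preceding theorem, and this corollary is a routine consequence of Chebyshev plus the probability-implies-distribution implication.
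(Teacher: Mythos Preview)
Your proposal is correct and is exactly the intended argument: the paper states this corollary without proof, treating it as an immediate consequence of the theorem, and your Chebyshev bound plus the standard implication from convergence in probability to convergence in law fills in precisely those routine details.
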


\subsection*{Acknowledgments}

The authors are grateful to K. Gaw\c edzki, P. Grzegorczyk, X.-M. Li, A. McDaniel for stimulating discussions. We thank an editor of AHP for pointing out several references.  The research of JW was supported in part by US NSF Grant DMS-131271.  A part of his work on this article was supported by US NSF Grant DMS-1440140 while he was in residence at the Mathematical Sciences Research Institute in Berkeley during the Fall 2015 semester.

\bibliographystyle{ieeetr}
\bibliography{refs}

\end{document}